\numberwithin{equation}{section} 
\newtheorem{thm}{Theorem}{\bf}{\em}
{\bf}{\em}
\newtheorem{prop}{Proposition}{\bf}{\em}
\newtheorem{lem}{Lemma}{\bf}{\em}
\newtheorem{rem}{Remark}{\bf}{\em}
{\bf}{\em}
\def\pe{\perp}
\def\pa{\parallel}
\def\p{\partial}
\def\mb{\mathbf}
\def\mr{\mathrm}
\def\mk{\mathfrak}
\def\bs{\boldsymbol}
\def\beq{\begin{equation}}
\def\eeq{\end{equation}}
\def\bpm{\begin{pmatrix}}
\def\epm{\end{pmatrix}}
\def\const{{\rm const.}}
\def\Rnum{\mathbb{R}}
\def\Cnum{\mathbb{C}}
\DeclareMathOperator{\re}{Re}
\DeclareMathOperator{\im}{Im}
\DeclareMathOperator{\tr}{tr}
\DeclareMathOperator{\ad}{ad}
\DeclareMathOperator{\Ad}{Ad}
\DeclareMathOperator{\End}{End}
\def\t{{\rm t}}
\def\inv{{}^{-1}}
\def\id{{\rm id}}
\def\diff{\mb{d}}
\def\Dx{D_x}
\def\Dxinv{D_x^{-1}}
\def\gsp{\mk{g}}
\def\msp{\mk{m}}
\def\hsp{\mk{h}}
\def\g{\mathrm{g}}
\def\aff{\mathrm{aff}}
\def\Kill#1{K(#1)}
\def\i{\mr{i}}
\def\e{\mr{e}}
\def\efac{\chi^2}
\def\enorm{\tfrac{1}{\chi}}
\def\enorminv{\chi}
\def\brack#1{\langle #1\rangle}
\def\wedgebar{{\,\bar\wedge\,}}
\def\odotbar{{\,\bar\odot\,}}
\def\Pop{{\mathcal P}}
\def\covder{\nabla^M}
\def\Gcovder{\nabla^G}
\def\Hcovder{\nabla^H}
\def\Mop{{\mathcal M}}
\def\curvflow{{\vec\gamma}}
\def\grad{\vec\nabla}
\def\P{P}
\def\Pvec{\vec{\P}}
\def\map{\Psi}
\def\cofra{e}
\def\conx{\omega}
\def\hook{\rfloor}
\def\comm#1{{\bs [}#1{\bs ]}}
\def\T{\mathbf{T}}
\def\N{\mathbf{N}}
\def\B{\mathbf{B}}
\def\q{q}
\def\Q{Q}
\def\qq{\mr{q}}
\def\qqbar{\mr{\bar q}}
\def\hpe{\mr{h}_\pe}
\def\hpebar{\mr{\bar h}_\pe}
\def\hhpa{{\mb h}_\pa}
\def\wpe{\mr{w}_\pe}
\def\wpebar{\mr{\bar w}_\pe}
\def\wwpa{{\mb w}_\pa}
\def\lrep{{\bs(}}
\def\rrep{{\bs)}}
\def\AA{\mb{A}}
\def\a{\mr{a}}
\def\abar{\bar\a}
\def\b{\mr{b}}
\def\bbar{\bar\b}
\def\d{\mr{d}}
\def\s{\mr{s}}
\def\sbar{\bar\s}
\def\S{\mb{S}}
\def\Rop{{\mathcal R}}
\def\Jop{{\mathcal J}}
\def\Hop{{\mathcal H}}
\def\Eop{{\mathcal E}}
\def\Dop{{\mathcal D}}
\def\X{{\rm X}}
\def\secref#1{Sec.~\ref{#1}}
\def\Ref#1{Ref.~\cite{#1}}
\def\Refs#1{Refs.~\cite{#1}}
\def\eg/{e.g.}
\def\ie/{i.e.}
\begin{document}
\allowdisplaybreaks[3]
\tolerance =99999

\title{Hasimoto variables, generalized vortex filament equations, Heisenberg models and Schr\"odinger maps arising from group-invariant NLS systems}

\begin{abstract}
The deep geometrical relationships holding among
the NLS equation, the vortex filament equation,
the Heisenberg spin model, and the Schr\"odinger map equation
are extended to the general setting of Hermitian symmetric spaces.
New results are obtained by utilizing a generalized Hasimoto variable
which arises from applying the general theory of parallel moving frames. 
The example of complex projective space $\Cnum P^N= SU(N+1)/U(N)$
is used to illustrate the method and results. 
\end{abstract}

\author{
Stephen C. Anco$^1$
\lowercase{\scshape{and}}
Esmaeel Asadi$^{2}$ 
\\\\\lowercase{\scshape{
${}^1$Department of Mathematics and Statistics\\
Brock University\\
St. Catharines, ON L2S3A1, Canada}} \\\\
\lowercase{\scshape{
${}^2$Department of Mathematics\\
Institute for Advance Studies in Basic Science (IASBS)\\
45137--66731, Zanjan, Iran}}
}

\maketitle

\begin{center}
email: 
${}^1$sanco@brocku.ca,
${}^2$easadi@iasbs.ac.ir, ${}^2$esmaeel.asadi@gmail.com
\end{center}

\section{Introduction }

The classical  nonlinear Schr\"{o}dinger (NLS) equation  
$i u_t=u_{xx}+\tfrac{1}{2}|u|^2u$ for a complex-valued function $u(x,t)$
is well-known to arise in numerous physical applications
and has a deep connection to differential geometry,
particularly symmetric spaces. 

On one hand,
from the viewpoint of isospectral theory in symmetric Lie algebras, 
the NLS equation is given by a linear isospectral flow 
in the Lie algebra $\mk{su}(2)\simeq\mk{so}(3)$.
This provides a natural Lax pair for the NLS equation,
which is directly related to the AKNS scheme \cite{AblKauNewSeg}
yielding a hierarchy of isospectral equations and integrable flows in $\mk{su}(2)\subset \mk{sl}(2,\Cnum)$. 
The NLS isospectral flow is also associated with a geometric curve flow 
$\gamma_t = [\gamma_s,\gamma_{ss}]$ in the Lie algebra $\mk{su}(2)$
which is given by the Sym-Pohlmeyer \cite{Sym,Poh1976} construction. 
Constant phase rotations on $u$, which are a symmetry of the NLS equation,
correspond to a $U(1)$ gauge group of transformations 
generated by a $\mk{u}(1)$ Cartan subalgebra in $\mk{su}(2)$. 
This is directly related to the structure of 
$\Cnum P^1= SU(2)/U(1) \simeq S^2$ as a Hermitian symmetric space. 

On the other hand,
from the viewpoint of geometric integrability theory in Riemannian symmetric spaces,
the NLS equation arises as the evolution equation for the curvature and torsion $(\kappa,\tau)$
of a non-stretching curve flow $\vec{\gamma}(x,t)$ in Euclidean space $\Rnum^3$.
The motion of the curve is given by the bi-normal equation 
$\vec{\gamma}_t = \kappa \T\times\N = \vec{\gamma}_x\times\vec{\gamma}_{xx}$
where $\T=\vec{\gamma}_x$ is the unit tangent vector,
$\N$ is the normal vector, and $\B=\T\times\N$ is the bi-normal vector
comprising a Frenet frame $(\T,\N,\B)$ along the curve.
Note that $\Rnum^3$ has the natural structure of a Riemannian symmetric space $(SO(3)\rtimes\Rnum^3)/SO(3)$
in which the $SO(3)$ is the structure group of orthonormal frames. 
In this formulation \cite{BefWanSan}, 
$(\kappa,\tau)=(|u|,\arg(u)_x)$ are invariants of the curve,
while $u=\kappa e^{i\int\tau\;dx}$ has the geometrical meaning of a $U(1)$-covariant of the curve, 
with $(\re u,\im u)$ being the components of the Cartan matrix of a parallel frame \cite{Bis}
given by the tangent vector $\T$
and the pair of normal vectors $\re(e^{i\int\tau\;dx}(\N+i\B))$, $\im(e^{i\int\tau\;dx}(\N+i\B))$. 
The phase-rotation symmetry on $u$ corresponds to a $U(1)\simeq SO(2)$
gauge group of transformations in $SO(3)$ 
preserving the structure of a parallel frame by rigid rotations in the normal plane along the curve.

As shown by Hasimoto \cite{Has},
the bi-normal equation also describes the physical motion of a vortex filament in fluid mechanics.
Furthermore, the bi-normal equation can be interpreted as a Heisenberg spin model in $\Rnum^3$, 
or equivalently as a Schr\"{o}dinger map equation on the sphere $S^2$.
The geometrical interrelationships among these physical and mathematical formulations of the NLS equation 
have been explored in Ref.\cite{AncMyr2010}. 
Abstract formulations involving differential invariants, Hamiltonian structures, and loop groups 
can be found in \Refs{TerUhl1999,Bef1998,Bef2001}. 

A generalization of linear isospectral flows
yielding group invariant multi-component integrable NLS systems
is known \cite{ForKul1983}
for all Lie algebras associated with Hermitian symmetric spaces
(namely, Riemannian symmetric spaces with a compatible complex structure). 
These isospectral flows,
called Fordy-Kulish systems,
each have an associated Sym-Pohlmeyer curve flow
which can be viewed as a Lie-algebra version of a bi-normal equation \cite{LanPer2000}. 

A generalization of parallel frames and Hasimoto variables
yielding group-invariant integrable systems 
is known \cite{Anc2008} for all Riemannian symmetric spaces.
This includes \cite{Anc2007} semi-simple Lie groups,
viewed in a natural way as diagonal Riemannian symmetric spaces. 
In particular,
the results in Ref.\cite{Anc2007}
show that the NLS equation arises directly from
a parallel frame formulation of a non-stretching geometric curve flow given by
a chiral Schr\"{o}dinger map equation in the Lie group $SU(2)$,
which is a variant of the Sym-Pohlmeyer curve flow in the Lie algebra $\mk{su}(2)$. 
This type of geometrical derivation of group-invariant integrable systems
and corresponding non-stretching map equations
has been pursued  \cite{SanWan,Anc2006,AncAsa2009,AncAsa2012,AncAsaDog2015,AhmAncAsa2018}
for all of the basic classical (non-exceptional) Riemannian symmetric spaces, 
yielding many types of multi-component mKdV and sine-Gordon systems,
as well as multi-component NLS systems \cite{Anc2007},
formulated in terms of Hasimoto variables. 

In the present paper,
we adapt the general results on parallel frames and Hasimoto variables
from \Ref{Anc2008}
to extend the geometrical relationships among
the NLS equation, the vortex filament equation,
the Heisenberg spin model, and the Schr\"odinger map equation
to the general setting of Hermitian symmetric spaces.
Several main results are obtained.

We give a new geometrical derivation of
the Fordy-Kulish isospectral NLS flows \cite{ForKul1983}, 
together with the associated Sym-Pohlmeyer curve flows \cite{LanPer2000}, 
in all Lie algebras that arise from Hermitian symmetric spaces. 
The derivation provides an explicit bi-Hamiltonian formulation
for the NLS flows and the Sym-Pohlmeyer curve flows. 
We also show that the underlying geometrical curve flows can be formulated as 
bi-normal equations which can be viewed as vortex filament equations
in affine symmetric spaces defined from the Lie algebras.

Most interestingly,
we derive generalized Heisenberg spin vector models and Schr\"odinger map equations
from the vortex filament equations in affine symmetric spaces, 
and we obtain their explicit bi-Hamiltonian structure. 
The Schr\"odinger map equations describe a geometric version of the NLS equation
for a map into a Hermitian symmetric space, 
which is generalization of the Grassmannian spaces studied in \Ref{TerUhl1999}. 
We also show that the equations corresponding to axial motion of a vortex filament 
in affine symmetric spaces 
correspond to a higher-order Heisenberg spin vector model and an mKdV analog of a Schr\"odinger map equation,
which have a simple bi-Hamiltonian structure.  
The explicit form of the Hamiltonian structures of these geometric map equations
is shown to involve only the intrinsic Hermitian and Riemannian structure of the Hermitian symmetric space.
Furthermore, 
we obtain an explicit recursion operator arising from Magri's theorem \cite{Mag}
applied to this bi-Hamiltonian structure, 
which yields a hierarchy of higher order geometric map equations. 
We show how the resulting hierarchy of Hamiltonians leads, 
through the inverse recursion operator, 
to a hierarchy of Hamiltonian forms for the Schr\"odinger map equations
and their mKdV analogs.
We also derive a new non-standard Hamiltonian structure for these equations. 

In addition,
we show how to obtain an explicit matrix representation for 
the Schr\"odinger map variable,
the spin vector variable,
and the Hasimoto variable.
The matrix representation provides a Backlund transformation relating
the isospectral flows, vortex filament equations, Heisenberg spin vector models, and Schr\"odinger map equations.
This allows the integrability properties of any one of these nonlinear PDE systems to be transferred to any other. 

All of these new results,
which make essential use of the generalized Hasimoto variable
in Lie algebras associated with Hermitian symmetric spaces, 
will be of interest for applications and further developments
in both mathematics and physics.

The basic geometric setting for this work will be to formulate 
a Lie algebra $\gsp$ that is associated with a general Hermitian symmetric space 
as being an affine semidirect-product symmetric space
$\gsp\simeq (G\rtimes\gsp)/G$, 
analogously to the treatment of semi-simple Lie groups in \Ref{Anc2007}. 
We develop a general theory and apply it to the Lie algebra
$\mk{su}(N+1)\simeq \Cnum^{N}\oplus \mk{u}(N)$ 
corresponding to the complex projective space $\Cnum P^N= SU(N+1)/U(N)$.
This example will provide a higher-dimensional version of
the vortex filament equation, the Heisenberg spin model,
and the Schr\"odinger map equation,
since in the case $N=1$,
we have
$\gsp=\mk{su}(2) \simeq \mk{so}(3) \simeq (SO(3)\rtimes \Rnum^3)/SO(3)$, 
with $\Rnum^3\simeq \mk{so}(3)$.
The NLS and mKdV isospectral flows for this example reproduce 
the vector NLS system and vector mKdV system derived in \Ref{Anc2007}
from the Lie group $SU(N+1)$, 
with the Hasimoto variable being a complex vector in $\Cnum^N$. 

The rest of the present paper is organized as follows.

\secref{sec:FKmethod} reviews the theory of linear isospectral flows
and the resulting integrable Fordy-Kulish NLS systems 
in Lie algebras associated with Hermitian symmetric spaces. 
For later comparison,
the recursion operator and its factorization into Hamiltonian operators
generating higher isospectral flows, in particular integrable mKdV systems,
are presented in an explicit form.
A short summary of the Sym-Pohlmeyer construction of associated curve flows is also given.

\secref{sec:geomderivation} presents the new geometrical derivation of this isospectral hierarchy. 
As a first step, 
the geometrical and algebraic structure defining an affine Hermitian symmetric space is stated.
Next, the formulation of parallel frames and Hasimoto variables
in Riemannian symmetric spaces is adapted to affine Hermitian symmetric spaces.
This formulation is then applied to non-stretching curve flows,
where the flow is shown to induce a bi-Hamiltonian isospectral equation on the Hasimoto variable. 
By considering geometric flows generated by
a translation symmetry and a Hermitian (phase) symmetry of this bi-Hamiltonian structure,
the resulting flows on the Hasimoto variable are shown to yield exactly the NLS hierarchy of Fordy-Kulish isospectral flows.
The relationship between the parallel frame formulation and the Sym-Pohlmeyer construction is explained.
It is emphasized that bi-Hamiltonian structure and its associated recursion operator
arise in a natural geometric fashion in the parallel frame formulation,
whereas the Hamiltonian structures are hidden in the Sym-Pohlmeyer construction.

\secref{sec:VFE} first shows how to obtain the explicit form of
the geometric non-stretching curve flows. 
For the Fordy-Kulish NLS systems and associated mKdV systems,
these curve flows are shown to be given by a generalized version of, respectively, 
the vortex filament equation
and the equation describing axial motion of a vortex filament. 
These equations are formulated geometrically in terms of
the tangent vector, normal vector, and a generalized bi-normal vector
of the underlying curves in affine Hermitian symmetric spaces.
Next, the vortex filament equation is shown to yield a general Heisenberg spin model,
while the equation for axial motion of a vortex filament is shown to give rise to a related 
higher-order spin vector model,
where the dynamical variable in these models is given by the tangent vector of the curve.
In addition,
a geometric recursion operator is derived which generates a hierarchy of
non-stretching curve flows and associated spin vector models.
This result establishes that each geometric curve flow and associated spin vector model
possesses an infinite hierarchy of higher symmetries. 

\secref{sec:biHam} extends these main results
by showing that the geometric curve flows and associated spin vector models 
have an explicit bi-Hamiltonian structure which arises from a factorization of
the geometric recursion operator.
A new non-standard Hamiltonian structure for these equations is also obtained. 
As a consequence, all of the geometric curve flows and spin vector models
are tri-Hamiltonian integrable systems. 

\secref{sec:geommap} derives, for all Hermitian symmetric spaces, 
the general Schr\"odinger map equation and its mKdV analog 
from, respectively, the general Heisenberg spin model and its higher-order counterpart 
in affine Hermitian symmetric spaces. 
These geometric map equations are shown to represent curve flows 
that are locally stretching but whose total arclength is a constant of motion. 
The main integrability features of these geometric map equations are presented, 
including their recursion operator and multi-Hamiltonian structure 
in an explicit geometrical form.
In addition,
the explicit matrix representations and Backlund transformations
for the general Schr\"odinger map equation
and the associated general Heisenberg spin vector model and vortex filament equation
are stated. 

\secref{sec:example} applies the preceding results to the example of
the Hermitian symmetric space $\Cnum P^N= SU(N+1)/S(U(N)\times U(1))=SU(N+1)/U(N)$ 
and the associated Lie algebra $\mk{su}(N+1)\simeq \Cnum^{N}\oplus \mk{u}(N)$. 
The correspondence between the isospectral NLS flows
and their associated Heisenberg spin models and Schr\:odinger map equations
is shown in an explicit form. 
The necessary algebraic structure is summarized in an appendix. 

Concluding remarks are given in \secref{sec:conclude}.

\section{Fordy-Kulish NLS hierarchy}\label{sec:FKmethod}

We will begin by summarizing
the theory of isospectral flows in Lie algebras associated with Hermitian symmetric spaces,
as developed in \Ref{ForKul1983}. 

Recall \cite{Hel},
a compact semi-simple Lie group $G$ with subgroup $H$ 
such that the quotient of Lie algebras $\msp = \gsp/\hsp$ has a complex structure 
defines a Hermitian symmetric space $G/H$. 
This is equivalent to stating that $\gsp=\msp\oplus\hsp$ 
is a Hermitian symmetric Lie algebra. 

There is a well-known classification of Riemannian and Hermitian symmetric spaces
\cite{Hel}.
These spaces divide up into classical and exceptional types. 
The corresponding irreducible Hermitian symmetric Lie algebras of classical type
are shown in Table~\ref{table:hermsymm}. 

\begin{table}[htb]
\centering
\caption{Non-exceptional Hermitian symmetric Lie algebras}
\label{table:hermsymm}
\begin{tabular}{l|c|c|c}
\hline
Type & $\gsp$ & $\hsp$ & $\msp$
\\
\hline\hline
A III
& $\mk{su}(n+m)$
& $\mk{s}(\mk{u}(n)\oplus\mk{u}(m))$
& $\Rnum^{2nm}$
\\
D III
& $\mk{so}(2n)$
& $\mk{u}(n)$
& $\Rnum^{n(n-1)}$
\\
BD I
& $\mk{so}(n+2)$
& $\mk{so}(n)\oplus\mk{so}(2)$
& $\Rnum^{2n}$
\\
C I
& $\mk{sp}(n)$
& $\mk{u}(n)$
& $\Rnum^{n(n+1)}$
\\
\hline
\end{tabular}
\end{table}

The Lie bracket structure of a Hermitian symmetric Lie algebra $\gsp=\msp\oplus\hsp$
is given by 
\begin{equation}\label{symmsp.brackets}
[\hsp,\hsp]\subseteq\hsp,
\quad
[\hsp,\msp]\subseteq\msp,
\quad
[\msp,\msp]\subseteq\hsp,
\end{equation}  
while the Hermitian structure consists of an element $A\in\hsp$ such that 
its centralizer in $\gsp$ is $\hsp$,
and it acts as an imaginary-unit on $\msp$:
\begin{align}
& C_{\gsp}(A)=\hsp , 
\label{centrA}
\\
&  \ad_\msp(A)^2 = -\id_\msp . 
\label{adA}
\end{align}
Thus, $\ad(A)\msp = \msp$ and $\ad(A)\hsp =0$. 
The element $A$ has the squared norm
\begin{equation}
\Kill{A,A}=\tr(\ad_\msp(A)^2) = -\dim(\msp)
\end{equation}  
in the Killing inner product on $\gsp$. 

In the example of 
$\gsp=\mk{su}(2)=\begin{pmatrix} ia & b+ic \\ -b +ic & -ia \end{pmatrix}$,
$a,b,c\in\Rnum$, 
the Hermitian structure is given by 
$A=\begin{pmatrix} \tfrac{1}{2}i & 0 \\ 0& -\tfrac{1}{2}i \end{pmatrix}\in 
\hsp=\begin{pmatrix} ia & 0 \\ 0& -ia \end{pmatrix}$,
where $\ad_\msp(A)=J$ acts as a complex structure on $\msp=\begin{pmatrix} 0 & b+ic \\ -b +ic & 0 \end{pmatrix}$ by 
$J\msp = \begin{pmatrix} 0 & i(b+ic) \\ -i(-b +ic) & 0 \end{pmatrix}
= \begin{pmatrix} 0 & -c+ib \\ c+ib & 0 \end{pmatrix}$. 
Here $\hsp\simeq \mk{u}(1)$ and $\msp\simeq \Rnum^2\simeq \Cnum$.

\subsection{Isospectral flows}

The theory starts with a linear overdetermined system 
\begin{align}\label{spectraleq}
\phi_s=U\phi,
\quad
\phi_t=V\phi,
\end{align}
in which $\phi=\phi(s,t,\lambda)$ takes values in the Lie group $G$,
while $U(s,t,\lambda)$ and $V(s,t,\lambda)$ take values in the Lie algebra $\gsp$. 
Compatibility of equations \eqref{spectraleq} yields a Lax pair
formulated as a zero-curvature equation
\begin{equation}\label{ZCeq}
U_t-V_s+[U,V]=0 . 
\end{equation}

Choose $U=U(s,t,\lambda)$ to be a linear polynomial in $\lambda$, 
\begin{equation}\label{U.akns}
U=\lambda A+\Q, 
\end{equation}
where $\Q=\Q(s,t)$, called the potential, sits in $\msp\subset\gsp$. 
Choose $V=V(s,t,\lambda)$ to be a Laurent polynomial in $\lambda$, 
\begin{equation}\label{V.akns}
V=\sum_k \lambda^k V^{(k)} . 
\end{equation}
Then for the zero-curvature equation \eqref{ZCeq} to hold with $\lambda$ being arbitrary,
the coefficient of each separate power of $\lambda$ must vanish.
This describes a generalization of the AKNS scheme \cite{AblKauNewSeg}
from $\mk{sl}(2,\Cnum)$ to $\gsp$.

To derive the NLS isospectral flow,
take $V$ to be a quadratic polynomial, 
\begin{equation}\label{V.quadr}
V=V^{(2)}\lambda^2+V^{(1)}\lambda+V^{(0)} . 
\end{equation}
The coefficients of $\lambda$ in the zero-curvature equation \eqref{ZCeq} 
projected into $\hsp$ and $\msp$ yield the system
\begin{align}
& D_s V^{(0)}_{\hsp}=[\Q,V^{(0)}_{\msp}],
\label{ZCh0}\\
& D_s V^{(1)}_{\hsp}=[\Q,V^{(1)}_{\msp}],
\label{ZCh1}\\
& D_s V^{(2)}_{\hsp}=[\Q,V^{(2)}_{\msp}],
\label{ZCh2}
\end{align}
and
\begin{align} 
& \Q_t=\Dx V^{(0)}_{\msp}-[\Q,V^{(0)}_{\hsp}],
\label{ZCm0}\\
& D_s V^{(1)}_{\msp}-[\Q,V^{(1)}_{\hsp}]-[A,V^{(0)}_{\msp}]=0,
\label{ZCm1}\\
& D_sV^{(2)}_{\msp}-[\Q,V^{(2)}_{\hsp}]-[A,V^{(1)}_{\msp}]=0,
\label{ZCm2}\\
& [A,V^{(2)}_{\msp}]=0 ,
\label{ZCm3}
\end{align}
for $V^{(k)}_{\msp}$, $V^{(k)}_{\hsp}$, $k=0,1,2$. 
Equation \eqref{ZCm3} leads to $V^{(2)}_{\msp}=0$ by property \eqref{centrA}.
As a consequence,
equation \eqref{ZCh2} can be integrated,
which gives 
\begin{equation}
V^{(2)}_{\hsp}=c_2 A,
\quad
c_2=\const . 
\end{equation}
Next, equation \eqref{ZCm2} becomes $[A,c_2\Q-V^{(1)}_{\msp}]=0$,
whence
\begin{equation}
V^{(1)}_{\msp}=c_2\Q
\end{equation}
by property \eqref{centrA}. 
Equation \eqref{ZCh1} then can be integrated,
yielding 
\begin{equation}
V^{(1)}_{\hsp}=c_1 A,
\quad
c_1=\const  . 
\end{equation}
Substitution of these expressions into equation \eqref{ZCm1} gives
$c_2\Q_s = [A,V^{(0)}_{\msp}-c_1\Q]$.
Hence
\begin{equation}
V^{(0)}_{\msp}=c_1\Q -c_2\ad(A)\Q_s
\end{equation}
by properties \eqref{centrA} and \eqref{adA}. 
Next, equation \eqref{ZCh0} becomes
$D_s V^{(0)}_{\hsp}=-c_2[\Q,\ad(A)\Q_s]$
which simplifies to $D_s V^{(0)}_{\hsp}=-\tfrac{1}{2}c_2[\Q,\ad(A)\Q]_s$
through the Jacobi identity $[\Q,\ad(A)\Q]_s = 2[\Q,\ad(A)\Q_s] + \ad(A)[\Q_s,\Q]$
combined with $\ad(A)[\Q_s,\Q]=0$ due to $[\Q_s,\Q]\in\hsp$. 
Thus, 
\begin{equation}
V^{(0)}_{\hsp}=c_0A -\tfrac{1}{2}c_2[\Q,\ad(A)\Q]_s
\end{equation}
with the integration constant chosen in the same way as before. 

Finally, the evolution equation \eqref{ZCm0} reduces to 
\begin{equation}\label{nls.flow}
\Q_t= c_0 \ad(A)\Q + c_1\Q_s -c_2( \ad(A)\Q_{ss} -\tfrac{1}{2}[\Q,[\Q,\ad(A)\Q]] ) . 
\end{equation}
This equation can be cleaned up as follows. 
Write
\begin{equation}
\ad(A) = J
\end{equation}
where $J^2=-\id$ on $\msp$. 
Now apply a Galilean transformation $t\to t$, $s\to s+c_1 t$
combined with a phase transformation $\Q\to \exp(c_0 \ad(A))\Q$,
and use a scaling $t\to \tfrac{1}{c_2} t$.
As a result, equation \eqref{nls.flow} becomes
\begin{equation}\label{FK.nls} 
\Q_t = -J\Q_{ss} +\tfrac{1}{2}[\Q,[\Q,J\Q]] , 
\end{equation}
which is called the Fordy-Kulish NLS system \cite{ForKul1983}.
Note that it comes from
$V^{(0)}=-\tfrac{1}{2}[\Q,J\Q]_s - J\Q_s$, $V^{(1)}=\Q$, $V^{(2)} =A$. 

This NLS system is integrable.
It has the Lax pair
\begin{equation}
U=\lambda A + \Q,
\quad
V=-\tfrac{1}{2}[\Q,J\Q]_s -J\Q_s +\lambda\Q +\lambda^2 A , 
\end{equation}
and corresponding isospectral equation
\begin{equation}
\phi_s = (\lambda A + \Q)\phi,
\quad
\phi_t = (-\tfrac{1}{2}[\Q,J\Q]_s - J\Q_s + \lambda\Q +\lambda^2 A)\phi . 
\end{equation}

\subsection{An isospectral hierarchy}

For any polynomial $V=\sum_{k=0}^{n} \lambda^k V^{(k)}$,
the zero-curvature equation \eqref{ZCeq} splits into a coupled system
that contains equations \eqref{ZCh0}, \eqref{ZCh1}, \eqref{ZCm0}, \eqref{ZCm1}.
These four equations turn out to encode a recursion operator.

First,
eliminate $V^{(0)}_{\hsp}$ from equation \eqref{ZCm0} via equation \eqref{ZCh0}
to get 
\begin{equation}
\Q_t =\Hop V^{(0)}_{\msp}
\end{equation}
where
\begin{equation}
\Hop=D_s -\ad(\Q)D_s^{-1}\ad(\Q) . 
\end{equation}
Next,
eliminate $V^{(1)}_{\hsp}$ from equation \eqref{ZCm1} via equation \eqref{ZCh1}
to get 
\begin{equation}
V^{(0)}_{\msp} =\Rop^* V^{(1)}_{\msp} 
\end{equation}
where
\begin{equation}
\Rop^*=J\inv(D_s -\ad(\Q)D_s^{-1}\ad(\Q))= J\inv\Hop
\end{equation}
is the adjoint of
\begin{equation}\label{iso.Rop}
\Rop=(D_s -\ad(\Q)D_s^{-1}\ad(\Q))J\inv = \Hop J\inv . 
\end{equation}

\begin{thm}\label{thm:iso.flow.hierarchy}
If $V$ is polynomial of degree $n\geq 2$ in $\lambda$,
then the corresponding isospectral flow is given by 
\begin{equation}\label{hierarchy.isoflow}
\Q_t=\Hop{\Rop^*}^{n-1}(\Q) =\Rop^{n}(J\Q) 
\end{equation}
called the $+n$ flow,
where $\Rop= \Hop J\inv$ is a hereditary recursion operator.
Moreover, $\Hop$ and $J$ are compatible Hamiltonian operators. 
Each of these operators, along with all of the isospectral flows, 
are invariant under the unitary group $Ad(H)$. 
\end{thm}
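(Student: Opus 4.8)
The plan is to prove Theorem~\ref{thm:iso.flow.hierarchy} by peeling apart the zero-curvature system into a recursion, then verifying the Hamiltonian/hereditary/invariance claims one at a time. First I would iterate the two elimination steps already carried out before the theorem: from the pair \eqref{ZCm0}--\eqref{ZCh0} one gets $\Q_t = \Hop V^{(0)}_{\msp}$, and from the analog of \eqref{ZCm1}--\eqref{ZCh1} at each level $k$ one gets $V^{(k-1)}_{\msp} = \Rop^* V^{(k)}_{\msp}$, where $\Rop^* = J\inv\Hop$. The top coefficient forces $V^{(n)}_{\msp}=0$ and $V^{(n)}_{\hsp}=\const\cdot A$ exactly as in the quadratic case (via \eqref{ZCm3}, \eqref{centrA}), so $V^{(n-1)}_{\msp}=c\,\Q$ up to the normalization $c=1$. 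Descending the recursion gives $V^{(0)}_{\msp} = {\Rop^*}^{n-1}(\Q)$, hence $\Q_t = \Hop{\Rop^*}^{n-1}(\Q)$. The identity $\Hop{\Rop^*}^{n-1} = \Hop(J\inv\Hop)^{n-1} = (\Hop J\inv)^{n-1}\Hop = \Rop^{n-1}\Hop$ together with $\Hop(\Q)$ — wait, more carefully: $\Hop{\Rop^*}^{n-1}(\Q)=(\Hop J\inv)^n J(\Q)=\Rop^n(J\Q)$, which is the second form in \eqref{hierarchy.isoflow}. One should also check consistency with the remaining $\hsp$-equations \eqref{ZCh2} and its higher analogs, which integrate because the relevant brackets land in $\hsp$ and $\ad(A)$ kills $\hsp$; this is the routine but slightly tedious bookkeeping step.

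Next I would establish that $\Hop$ and $J$ are compatible Hamiltonian (cosymplectic) operators. Since $J$ is a constant skew operator with $J^2=-\id$ on $\msp$, it is trivially Hamiltonian, and any constant operator is compatible with any other Hamiltonian operator, so the real content is that $\Hop = D_s - \ad(\Q)D_s^{-1}\ad(\Q)$ is Hamiltonian. Skew-symmetry with respect to the Killing pairing $\langle\cdot,\cdot\rangle$ is immediate from $D_s^* = -D_s$ and $\ad(\Q)^* = -\ad(\Q)$ (ad-invariance of the Killing form) plus the fact that $\Q\in\msp$ so $\ad(\Q)$ swaps $\msp$ and $\hsp$ appropriately. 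The Jacobi identity for the associated Poisson bracket is the substantive calculation: I would verify that the Schouten bracket $[\![\Hop,\Hop]\!]$ vanishes, using that $\Q$ evolves in $\msp$ and that the nonlocal term has the standard Lenard form $\ad(\Q)D_s^{-1}\ad(\Q)$ which is known to be Hamiltonian in this symmetric-space setting (this is the structure familiar from AKNS-type systems generalized to $\gsp$). Compatibility of the pencil $\Hop + \epsilon J$ then follows because $J$ being constant contributes nothing to the Schouten bracket; alternatively one notes $\Rop = \Hop J\inv$ is the ratio of the pencil and invoke the standard equivalence.

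For the hereditary property of $\Rop = \Hop J\inv$, I would invoke the Magri–Gelfand–Dorfman criterion: a ratio $P_1 P_0\inv$ of two compatible Hamiltonian operators is automatically hereditary. Since both $\Hop$ and $J$ are Hamiltonian and compatible (previous paragraph), $\Rop$ is hereditary with no further work; this is why I would sequence the Hamiltonian claim before the hereditary claim. Finally, for $Ad(H)$-invariance: the group $H$ acts on $\msp$ and $\hsp$ by $\Ad(h)$, preserving the bracket structure \eqref{symmsp.brackets}, the Killing form, and — crucially — the element $A$, since $H = C_G(A)$ by \eqref{centrA} (so $\Ad(h)A = A$ for $h\in H$). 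Hence $\Ad(h)$ commutes with $J = \ad(A)$, with $D_s$ (which acts only on the $s$-dependence), and with $\ad(\Q)$ conjugated to $\ad(\Ad(h)\Q)$; therefore under $\Q\mapsto\Ad(h)\Q$ the operators $\Hop$ and $J$ are equivariant, so $\Rop$ and every flow $\Rop^n(J\Q)$ is $Ad(H)$-equivariant, i.e.\ the hierarchy is $Ad(H)$-invariant. The main obstacle is the Jacobi-identity verification for $\Hop$: the nonlocality of $D_s^{-1}$ makes the Schouten-bracket computation delicate, and one must be careful about which subspace ($\msp$ vs.\ $\hsp$) each term lives in when commuting $\ad(\Q)$ past $D_s^{-1}$; everything else is either standard structure theory or a short equivariance argument.
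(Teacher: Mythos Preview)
Your derivation of the flow formula \eqref{hierarchy.isoflow} by iterating the elimination that produced $\Q_t=\Hop V^{(0)}_\msp$ and $V^{(k-1)}_\msp=\Rop^* V^{(k)}_\msp$ is correct and matches how the paper sets things up (it does the $n=2$ case explicitly and states the general pattern). Your $\Ad(H)$-invariance argument is also fine.

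\textbf{The gap.} Your compatibility argument is wrong. You claim that ``any constant operator is compatible with any other Hamiltonian operator'' and that ``$J$ being constant contributes nothing to the Schouten bracket.'' Constancy of $J$ only kills the terms in the Schouten bracket of the pencil $\Hop+\epsilon J$ that come from the Fr\'echet derivative of $J$; it does \emph{not} kill the cross term involving the Fr\'echet derivative of $\Hop$ in the direction $J\theta$. Since $\Hop$ depends on $\Q$ through $\ad(\Q)$, that cross term is nonzero a priori and must be shown to vanish (or to be a coboundary). So proving that $\Hop$ alone satisfies Jacobi is not sufficient; compatibility is a separate, substantive check. Without it, your appeal to the Magri--Gel'fand--Dorfman criterion for the hereditary property of $\Rop=\Hop J\inv$ is unjustified.

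\textbf{Comparison with the paper's route.} The paper does not attempt a direct Schouten-bracket verification in the isospectral setting at all. It defers the operator properties to \secref{sec:geomderivation}, where the same operators are re-derived geometrically as the $G$-parallel frame structure equations of non-stretching curve flows in the affine space $\aff(G,H)$ (Theorem~\ref{thm:iso.flow.eqs}); the bi-Hamiltonian and hereditary properties are then obtained by invoking the general theorem of \Ref{Anc2008}, which carries out the Lie-algebra-valued Schouten computation once and for all in the Riemannian symmetric space setting. The link back to the isospectral variables is Proposition~\ref{prop:SP.Gparallel.correspond} and Theorem~\ref{thm:SP.Gparallel.hierarchies}. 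Your direct algebraic approach would be more self-contained if completed, but as written the compatibility step --- the one nontrivial piece --- is precisely what is missing.
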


A geometrical derivation of $\Hop$ and $\Rop$ will be given in \secref{sec:geomderivation} 
where the proof of their properties will be discussed.
Moreover, the potential $\Q$ will be seen to be a Hasimoto variable,
which is related to the principal curvature of non-stretching geometric curve flows 
associated with the isospectral flows by the Sym-Pohlmeyer construction,
as shown in \secref{sec:VFE}. 
This will also provide a geometric origin for the unitary invariance of
the isospectral flows and the Hamiltonian operators. 

In the hierarchy \eqref{hierarchy.isoflow},
the $0$ flow is $\Q_t = J\Q$,
and the $+1$ flow is $\Q_t = \Q_s$,
while the $+2$ flow is the Fordy-Kulish NLS system \eqref{FK.nls}.
The $+3$ flow can be shown to be given by the mKdV system
\begin{equation}\label{FK.mkdv}
\Q_t=-\Q_{sss} +[\Q,[\Q,\Q_s]] +\tfrac{1}{2}[J\Q,[J\Q,\Q]]_s
\end{equation}
which is derived in \Ref{AthFor1987}.

As is well known,
negative flows arise by taking $V$ to be a Laurent polynomial in $\lambda$.
The first negative flow is given by 
\begin{equation}
U=\lambda A+\Q,
\quad
V=\lambda^{-1} V^{(-1)},
\end{equation}
which leads to the system 
\begin{equation}
\Q_t=-[A,V^{(-1)}_{\msp}],
\quad
D_s V^{(-1)}_{\msp}=[\Q,V^{(-1)}_{\hsp}],
\quad
D_s V^{(-1)}_{\hsp}=[\Q,V^{(-1)}_{\msp}] . 
\end{equation}
This system is equivalent to the flow equation
\begin{equation}
\Rop(\Q_t)=0
\end{equation}  
called the $-1$ flow.

\subsection{Sym-Pohlmeyer curves}

Each isospectral flow in the hierarchy \eqref{hierarchy.isoflow}
gives rise to an associated curve flow via the Sym-Pohlmeyer construction
starting from the linear isospectral system \eqref{spectraleq}
for $\phi(s,t,\lambda)$.
The curve is defined by
\begin{equation}
\gamma(s,t) = \phi\inv \phi_\lambda\big|_{\lambda=0}
\end{equation}
which lies in the Lie algebra $\gsp$.
It is straightforward to see that the tangent vector and the flow vector of $\gamma$
are given by
\begin{equation}\label{SP.tangent.flow.vectors}
\gamma_s = \phi\inv U_\lambda \phi\big|_{\lambda=0} = \phi_0\inv A\phi_0, 
\quad
\gamma_t = \phi\inv V_\lambda \phi\big|_{\lambda=0} = \phi_0\inv V^{(1)}\phi_0,
\end{equation}
where $\phi_0=\phi|_{\lambda=0}$.
Moreover,
the tangent vector $\gamma_s$ satisfies
$\Kill{\gamma_s,\gamma_s} = \Kill{\phi_0\inv A\phi_0,\phi_0\inv A\phi_0}
= \Kill{A,A} = -\dim(\msp)$.
Hence the arclength parameter of the curve $\gamma$
is related to $s$ by
\begin{equation}\label{s}
x=\sqrt{\dim(\msp)}\, s
\end{equation}
with $\Kill{\gamma_x,\gamma_x} = -1$. 
Furthermore,
$\gamma_{ts} = (\phi_0\inv A\phi_0)_t = \phi_0\inv[A,V^{(0)}]\phi_0 = \phi_0\inv J V^{(0)}\phi_0$,
and thus
$\Kill{\gamma_s,\gamma_{ts}} = \Kill{\phi_0\inv A\phi_0,\phi_0\inv J V^{(0)}\phi_0}
= \Kill{A,J V^{(0)}} = -\Kill{JA,V^{(0)}} = 0$.
This shows that the arclength $x$ is locally preserved in the curve flow,
\begin{equation}
\p_t \Kill{\gamma_x,\gamma_x} = 0 . 
\end{equation}

For a given isospectral flow,
$V^{(1)}$ will be a function of $\Q$ and $s$-derivatives of $\Q$,
which can be expressed in terms of $s$-derivatives of $\gamma_s$
through the following relations:
\begin{align}
\gamma_{ss} & = (\phi_0\inv A\phi_0)_s = \phi_0\inv[A,\Q]\phi_0 = \phi_0\inv J\Q\phi_0,
\label{SP.rel1}
\\
[\gamma_s,\gamma_{ss}] & = [\phi_0\inv A\phi_0,\phi_0\inv J\Q\phi_0]= \phi_0\inv[A,J\Q]\phi_0 = \phi_0\inv J^2\Q\phi_0 = -\phi_0\inv\Q\phi_0,
\label{SP.rel2}
\end{align}
using $\phi_{0\,s} = q \phi_0$ from the linear isospectral system \eqref{spectraleq}.
Then $\gamma_t = \phi_0\inv V^{(1)}\phi_0$ will yield an explicit equation of motion
for the curve flow. 

In particular,
the NLS isospectral flow \eqref{FK.nls} has $V^{(1)}=\Q$,
which yields 
$\gamma_t = \phi_0\inv \Q\phi_0 = - [\gamma_s,\gamma_{ss}]$.
Hence,
the Sym-Pohlmeyer curve flow associated with the Fordy-Kulish NLS system \eqref{FK.nls}
is given by \cite{LanPer2000}
\begin{equation}\label{SP.nls.curveflow}
\gamma_t = -[\gamma_s,\gamma_{ss}]
\end{equation}  
in the Hermitian symmetric Lie algebra $\gsp$.

The mKdV isospectral flow \eqref{FK.mkdv} can be seen to have
$V^{(1)}=-J\Q_s  -\tfrac{1}{2} [\Q,J\Q]$.
This yields
$\gamma_t = -\phi_0\inv(J\Q_s +\tfrac{1}{2} [\Q,J\Q])\phi_0$. 
From relations \eqref{SP.rel1}--\eqref{SP.rel1},
note
$[\gamma_{ss},[\gamma_{ss},\gamma_s]]   = \phi_0\inv [JQ,[JQ,A]]\phi_0 = \phi_0\inv [JQ,Q]\phi_0$
and
$\gamma_{sss} = \phi_0\inv( J\Q_s + [J\Q,Q])\phi_0$. 
Hence,
$\gamma_{sss} -\tfrac{3}{2}[\gamma_{ss},[\gamma_{ss},\gamma_s]]   
=\phi_0\inv(J\Q_s +\tfrac{1}{2} [\Q,J\Q])\phi_0$,
and so the Sym-Pohlmeyer curve flow associated with the mKdV system \eqref{FK.mkdv}
is given by \cite{LanPer2000}
\begin{equation}\label{SP.mkdv.curveflow}
\gamma_t = -\gamma_{sss} +\tfrac{3}{2}[\gamma_{ss},[\gamma_{ss},\gamma_s]]
\end{equation}  
in the Hermitian symmetric Lie algebra $\gsp$.

\section{Geometrical derivation of the isospectral hierarchy}\label{sec:geomderivation}

We will now present an entirely geometrical new derivation of
the hierarchy of isospectral flows \eqref{hierarchy.isoflow},
including the Fordy-Kulish NLS system \eqref{FK.nls}
and the associated mKdV system \eqref{FK.mkdv}.

The setting for the derivation is given by non-stretching curve flows
in an affine symmetric space constructed from
any Hermitian symmetric Lie algebra $(\gsp,A)$,
with a corresponding compact semisimple Lie group $G$.
Key ingredients will be to introduce a $G$-parallel frame
and corresponding Hasimoto variable,
by applying the general results in \Ref{Anc2008} for Riemannian symmetric spaces. 

\subsection{Affine Hermitian symmetric spaces}

A Hermitian symmetric Lie algebra $(\gsp,A)$
consists of \cite{KobNom,Hel} a vector space $\gsp=\msp\oplus\hsp$
equipped with the following structure:
a Lie bracket with the properties \eqref{symmsp.brackets};
a complex structure $J=\ad(A)$ on $\msp$,
satisfying properties \eqref{centrA}--\eqref{adA}, 
where $\ad(A)\hsp =0$;
an ad-invariant inner product given by the Killing form $\Kill{\;\cdot\;,\cdot\;}$. 

Typically,
a symmetric Lie algebra $\gsp=\msp\oplus\hsp$
is identified with the infinitesimal isometry group $G$ of a corresponding symmetric space defined by $M=G/H$,
with $H\subset G$ being the stabilizer group of the origin $o$ of $M$.
There is a left-invariant $\gsp$-valued $1$-form on $G$,
called the Maurer-Cartan form \cite{KobNom,Hel},
which provides a canonical isomorphism between the vector space $\msp$
and the tangent spaces of $M$.
As a manifold, the symmetric space $M=G/H$ will have a curvature $2$-form 
given in terms of Maurer-Cartan.
When $\gsp$ has a Hermitian structure,
the resulting curved Hermitian manifold $M=G/H$ is a Hermitian symmetric space
(namely, a Riemannian symmetric space with a compatible complex structure)
\cite{KobNom,Hel},

Rather than consider this geometrical setting,
we instead construct an affine symmetric space
\begin{equation}
\aff(G,H) := (G\rtimes\gsp)/G , 
\end{equation}
where $G$ acts by a semidirect product on $\gsp$ via the adjoint representation.
This space has a very different geometrical structure in comparison to $M=G/H$. 

Elements of $G\rtimes\gsp$ will be represented by $(\g,a)$,
with $\g\in G$, $a\in\gsp$,
where $G\subset G\rtimes\gsp$ is the inclusion $(\g,0)$. 
A group product for $G\rtimes\gsp$ will be defined by 
$(\g_1,a_1)(\g_2,a_2) = (\g_1\g_2,\Ad(\g_1)a_2+a_1)$,
which has a semidirect product structure. 

We will now explain the main features of the space $\aff(G,H)$. 

First, we consider the algebraic structure of $\aff(G,H)$. 
The Lie algebra of $G\rtimes\gsp$ is $\gsp_\aff := \gsp\rtimes\gsp$.
As a vector space,
\begin{equation}\label{aff.vs}
\gsp_\aff = \hsp_\aff\oplus\msp_\aff
\end{equation}
whose elements will be represented by 
\begin{equation}\label{aff.rep}
(a,a')\in \gsp_\aff, 
\quad
(a,0)\in \hsp_\aff \simeq \gsp, 
\quad
(0,a')\in\msp_\aff \simeq \gsp \simeq \Rnum^{\dim(\gsp)} ,
\end{equation}
with $a,a'\in\gsp$. 
The Lie bracket on $\gsp_\aff$ is given by 
\begin{equation}
[(a_1,a'_1),(a_2,a'_2)] = ([a_1,a_2],[a_1,a'_2]+[a'_1,a_2]) . 
\end{equation}
For later, it will be useful to note the semidirect algebra structure is given by 
\begin{equation}\label{aff-Liebrackets}
[(a_1,0),(a_2,0)] = ([a_1,a_2],0),
\quad
[(a_1,0),(0,a'_2)] = (0,[a_1,a'_2]),
\quad
[(0,a'_1),(0,a'_2)] = (0,0) . 
\end{equation}

The Lie bracket gives $\aff(G,H)$ the structure of a symmetric space 
\begin{equation}\label{aff.symmsp}
[\hsp_\aff,\hsp_\aff] \subseteq \hsp_\aff,
\quad
[\hsp_\aff,\msp_\aff] \subseteq \msp_\aff,
\quad
[\msp_\aff,\msp_\aff] =0 . 
\end{equation}
There are further decompositions 
$\msp_\aff = \Rnum^{\dim(\hsp)}\oplus\Rnum^{\dim(\msp)}$,
$\hsp_\aff = \hsp \oplus\msp$,
$\gsp_\aff = (\hsp\oplus\msp)\rtimes(\hsp\oplus\msp)$, 
which induce corresponding decompositions of the Lie bracket, used later. 

The element $A\in\hsp$ defines a complex structure $J_\aff =\ad((A,0))$ 
on both $(\msp,0)\subset\hsp_\aff$ and $(0,\msp)\subset\msp_\aff$. 
Moreover, $J_\aff^2 = -\id_\msp$ acts as a projection operator on these subspaces
and annihilates the complementary subspaces $(\hsp,0)\subset\hsp_\aff$ and $(0,\hsp)\subset\msp_\aff$.

The Killing form on $\gsp$ yields a positive-definite inner product on $\gsp_\aff$:
\begin{equation}\label{aff-innerprod}
\brack{(a_1,a'_1),(a_2,a'_2)}_\aff := -\Kill{a_1,a_2} -\Kill{a'_1,a'_2} . 
\end{equation}
This inner product is ad-invariant.
It is also Hermitian invariant when restricted to the subspaces $(\msp,0)\subset\hsp_\aff$ and $(0,\msp)\subset\msp_\aff$.

Next, we consider structure of $\aff(G,H)$ as a Klein geometry \cite{Sha,Anc2008}. 

The space $\aff(G,H)$ possesses a Maurer-Cartan form $\Omega$
which is a left-invariant $\gsp_\aff$-valued $1$-form on $G\rtimes\gsp$
satisfying
\begin{equation}\label{MCeqn}
\diff\Omega + \tfrac{1}{2}[\Omega,\Omega] =0 . 
\end{equation}
Moreover, 
$\aff(G,H)$ is a principal $G$-bundle over $\gsp$
on which $\Omega$ defines a zero-curvature connection. 
Let $\Omega^\aff$ be the pullback of $\Omega$ from $G$ to $\aff(G,H)$
in a local trivialization of the bundle $\aff(G,H)$. 
A smooth change of the local trivialization is described by a $G$-valued function $\g(x)$ on $\aff(G,H)$.
This induces a gauge transformation of $\Omega^\aff$:
\begin{equation}\label{MCfreedom}
\Omega^\aff \rightarrow \tilde\Omega^\aff = \Ad(\g(x)\inv)\Omega^\aff  +\g(x)\inv \diff\g(x) . 
\end{equation}

The $1$-form $\Omega^\aff$ can be decomposed via the vector space structure \eqref{aff.vs}.
This yields 
\begin{equation}\label{coframe-conx}
\cofra :=(\Omega^\aff)_{\msp_\aff},
\quad
\conx :=(\Omega^\aff)_{\hsp_\aff} ,
\end{equation}
where 
\begin{align}
& \diff\cofra + \comm{\conx,\cofra} =0,
\label{aff-tors}
\\
& \diff\conx + \tfrac{1}{2}\comm{\conx,\conx} =0
\label{aff-curv}
\end{align}
arise from the zero-curvature property \eqref{MCeqn} of the Maurer-Cartan form.
These equations \eqref{aff-tors}--\eqref{aff-curv} imply that
$\cofra$ is a $\gsp$-valued linear coframe on $\aff(G,H)$,
and $\conx$ is a $\gsp$-valued linear connection $1$-form $\conx$ on $\aff(G,H)$. 
In particular, $\cofra$ provides a linear map from the tangent space of $\aff(G,H)$ at any point $x$ into the vector space $\msp_\aff$:
\begin{equation}\label{solder}
\cofra: T_x\aff(G,H) \mapsto \msp_\aff \simeq \gsp . 
\end{equation}

The structure group of frame bundle of $\aff(G,H)$ is $G$. 
Under a local gauge transformation \eqref{MCfreedom},
the linear coframe and linear connection transform by 
\begin{equation}\label{gaugetransf}
\tilde\cofra=\Ad(\g(x)\inv)\cofra,
\quad
\tilde\conx=\Ad(\g(x)\inv)\conx + \g(x)\inv \diff\g(x) . 
\end{equation}

Last, we consider the Riemannian structure of $\aff(G,H)$. 

As a manifold, $\aff(G,H)$ is a vector space in which
vectors $X\in T_x\aff(G,H)\simeq \aff(G,H)$
can be identified with elements $\cofra\hook X \in \msp_\aff$
through the soldering relation \eqref{solder}. 
The inner product on $\msp_\aff$ thereby yields a positive-definite metric on $\aff(G,H)$:
\begin{equation}\label{metric}
g(X,Y) := \brack{\cofra\hook X,\cofra\hook Y}_\aff . 
\end{equation}

The manifold $\aff(G,H)$ carries a group action by $G$,
where group elements $\g\in G$ act on vectors $X\in\aff(G,H)$ via
\begin{equation}
\cofra\hook (\g X)= \Ad(\g)(\cofra\hook X) = \g (\cofra\hook X)\g\inv . 
\end{equation}  
The isometry group of $\aff(G,H)$ is $G\rtimes\gsp$,
with $\gsp\subset G\rtimes\gsp$ acting as translations. 

The metric $g$ on $\aff(G,H)$ determines a torsion-free Riemannian connection,
which can be viewed as a contravariant derivative $\grad$ satisfying $\grad g=0$.
This contravariant derivative provides a covariant derivative, $\nabla_X= g(X,\grad)$,
which is given in terms of the linear connection $\conx$
through the soldering form $\cofra$:
\begin{equation}\label{conx}
\cofra\hook\nabla_XY := \partial_X(\cofra\hook Y) +[\conx\hook X,\cofra\hook Y] . 
\end{equation}

The Maurer-Cartan equations \eqref{aff-tors}--\eqref{aff-curv} 
combined with the symmetric space structure \eqref{aff.symmsp}
show that $\grad$ has 
vanishing torsion $T(X,Y) = \nabla_X Y - \nabla_Y X -\comm{X,Y}=0$
and vanishing curvature $R(X,Y) = \comm{\nabla_X,\nabla_Y} -\nabla_{\comm{X,Y}}=0$;
here the bracket $\comm{\;\cdot,\cdot\;}$ denotes the commutator of vector fields.

Thus, $\aff(G,H)$ is a flat manifold,
whose Riemannian structure is isomorphic to a Euclidean space $\Rnum^{\dim(\gsp)}$.
It has a degenerate Hermitian structure $J$ given by
\begin{equation}\label{J}
\cofra\hook J(Y) := \ad(A)(\cofra\hook Y), 
\end{equation}
where $J^2(Y)=-Y$ if $\cofra\hook Y$ belongs to either $(\msp,0)\subset\hsp_\aff$ or $(0,\msp)\subset\msp_\aff$,
and otherwise $J^2(Y)=0$. 

Finally, the group action of $G$ provides an explicit representation for the soldering  \eqref{solder}. 

\begin{prop}\label{prop:soldering}
The soldering \eqref{solder} of the tangent space of $\aff(G,H)$ onto $\msp_\aff$ 
is explicitly given by 
\begin{equation}\label{soldering}
\Ad(\psi_\cofra(x))\msp_\aff = T_x\aff(G,H)\simeq \aff(G,H), 
\quad
\psi_\cofra(x)\in G . 
\end{equation}
In particular, the linear coframe and linear connection thereby have the corresponding 
explicit representation 
\begin{equation}\label{soldering.e.w}
\cofra=(0,\Ad(\psi_\cofra\inv)),
\quad
\conx = (\psi_\cofra\inv\d\psi_\cofra,0) . 
\end{equation}
\end{prop}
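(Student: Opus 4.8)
The plan is to exploit the fact that $\aff(G,H)$ is a principal $G$-bundle over $\gsp$ carrying the flat Maurer--Cartan connection $\Omega$, so that in any local trivialization the pulled-back form $\Omega^\aff$ is a pure gauge. Concretely, since the connection defined by $\Omega$ has zero curvature and the base $\gsp\simeq\Rnum^{\dim(\gsp)}$ is simply connected, the bundle admits a global flat section; choosing a local trivialization adapted to such a section, there is a $G$-valued function $\psi_\cofra(x)$ on $\aff(G,H)$ for which the pulled-back Maurer--Cartan form becomes $\Omega^\aff = \psi_\cofra\inv\diff\psi_\cofra$ valued in $\hsp_\aff\simeq\gsp$. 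First I would write the original (untrivialized) Maurer--Cartan form in the canonical splitting \eqref{coframe-conx}: before any gauge transformation, $\cofra$ is the canonical $\msp_\aff$-valued solder form (essentially the identity on the vector-space directions) and $\conx=0$. Then I would apply the gauge transformation \eqref{gaugetransf} with parameter $\g(x)=\psi_\cofra(x)\inv$, exactly the freedom described in \eqref{MCfreedom}, to move into the trivialization where the connection is realized as $\psi_\cofra\inv\diff\psi_\cofra$.

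Next I would track the two components separately under this gauge transformation. By the first equation in \eqref{gaugetransf}, $\cofra$ transforms tensorially: $\cofra\mapsto\Ad(\psi_\cofra\inv)\cofra$, and since the reference coframe is the tautological embedding of $(0,\msp_\aff)$, this gives $\cofra=(0,\Ad(\psi_\cofra\inv))$ acting on $\msp_\aff\simeq\gsp$, which is precisely the claimed first formula in \eqref{soldering.e.w}. Dually, this says the tangent space $T_x\aff(G,H)$ is carried onto $\msp_\aff$ by $\Ad(\psi_\cofra(x))$, i.e. $\Ad(\psi_\cofra(x))\msp_\aff = T_x\aff(G,H)$, giving \eqref{soldering}. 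For the connection, the second equation in \eqref{gaugetransf} with reference connection zero yields $\conx = \psi_\cofra\inv\diff\psi_\cofra$, which lands in $\hsp_\aff\simeq\gsp$; writing this in the pair notation, $\conx=(\psi_\cofra\inv\d\psi_\cofra,0)$, matching the second formula in \eqref{soldering.e.w}. I would also check compatibility with the structure equations \eqref{aff-tors}--\eqref{aff-curv}: the curvature equation becomes the Maurer--Cartan identity $\diff(\psi_\cofra\inv\diff\psi_\cofra)+\tfrac12[\psi_\cofra\inv\diff\psi_\cofra,\psi_\cofra\inv\diff\psi_\cofra]=0$, automatically satisfied, and the torsion equation $\diff\cofra+\comm{\conx,\cofra}=0$ is the statement that $\Ad(\psi_\cofra\inv)$ is covariantly constant for this connection, which is again an identity.

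The main obstacle is really a matter of bookkeeping rather than depth: one must be careful that the ``reference'' frame from which one gauges is genuinely the canonical one on the semidirect product $G\rtimes\gsp$, i.e. that on the fiber over a point the $\msp_\aff$-part of $\Omega$ is the identity map $\gsp\to\msp_\aff$ coming from the $(0,a')$ embedding in \eqref{aff.rep}, while the $\hsp_\aff$-part is the left-invariant Maurer--Cartan form of the $G$-factor. Once this is set up, the existence of $\psi_\cofra$ follows from flatness plus simple-connectivity of the base, and the two displayed formulas are just the two components of a single gauge transformation. I would present the argument in this order: (i) recall the bundle and flatness, (ii) produce $\psi_\cofra$ as the trivializing function, (iii) read off $\cofra$ from the tensorial transformation law, (iv) read off $\conx$ from the inhomogeneous transformation law, (v) note consistency with \eqref{aff-tors}--\eqref{aff-curv}. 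The only genuinely substantive point to emphasize is that the soldering \eqref{solder}, \eqref{soldering} is \emph{$G$-equivariant} by construction, since the gauge freedom \eqref{MCfreedom} acts on $\psi_\cofra$ by right multiplication, consistently with the $G$-action $\cofra\hook(\g X)=\Ad(\g)(\cofra\hook X)$ stated above.
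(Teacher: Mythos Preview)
Your approach is correct in spirit and takes a genuinely different route from the paper. The paper essentially \emph{postulates} the soldering relation $\Ad(\psi_\cofra(x))\msp_\aff = T_x\aff(G,H)$ (i.e.\ treats the existence of $\psi_\cofra$ as given), reads off $\cofra\hook Y=(0,\Ad(\psi_\cofra\inv)Y)$ directly from it, and then determines $\conx$ by computing $\cofra\hook\nabla_X Y$ in two ways --- once from the defining relation \eqref{conx} and once from the soldering formula combined with the derivative identity $\partial_X\Ad(\psi_\cofra\inv)=-\ad(\psi_\cofra\inv\partial_X\psi_\cofra)$ --- and matching. Your argument is more constructive: you invoke flatness of the Maurer--Cartan connection plus simple-connectivity of the base $\gsp$ to \emph{produce} the trivializing function $\psi_\cofra$, and then obtain both $\cofra$ and $\conx$ in one stroke from the gauge-transformation law \eqref{gaugetransf} applied to the reference frame $(\cofra_0,\conx_0)=(\id,0)$. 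This buys you an explanation of where $\psi_\cofra$ comes from, which the paper leaves implicit; the paper's route is shorter but more verificational.

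One small bookkeeping slip: with the transformation law \eqref{gaugetransf} as written, $\tilde\cofra=\Ad(\g\inv)\cofra$ and $\tilde\conx=\Ad(\g\inv)\conx+\g\inv\diff\g$, you should take $\g(x)=\psi_\cofra(x)$ rather than $\psi_\cofra(x)\inv$ to land on $\tilde\cofra=(0,\Ad(\psi_\cofra\inv))$ and $\tilde\conx=(\psi_\cofra\inv\diff\psi_\cofra,0)$. As you yourself note, this is purely directional and does not affect the argument.
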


Thus, there is a one-to-one correspondence between functions $\psi_\cofra(x): \aff(G,H)\to G$ 
and linear coframes $\cofra:T_x\aff(G,H) \to \msp_\aff\simeq\gsp$. 

The derivation of the representations \eqref{soldering.e.w} is straightforward 
from equations \eqref{solder} and \eqref{conx}. 
Specifically, for vectors $X,Y\in T_x\aff(G,H)\simeq\aff(G,H)$, 
note the soldering relation \eqref{soldering} gives a mapping of $Y$ into an element in $\msp_\aff$ given by $\Ad(\psi_\cofra\inv)Y$, 
which thus implies 
$\cofra\hook Y = (0,\Ad(\psi_\cofra\inv)Y)$. 
This mapping directly establishes the representation for the linear coframe. 
Then, note 
$\cofra\hook \nabla_X Y 
= \partial_X(0,\Ad(\psi_\cofra\inv) Y) +\ad(\conx_X)(0,\Ad(\psi_\cofra\inv) Y)
= (0,\partial_X(\Ad(\psi_\cofra\inv) Y) +[\conx_X,\Ad(\psi_\cofra\inv) Y])$
by equation \eqref{conx},
while 
$\cofra\hook \nabla_X Y = (0,\Ad(\psi_\cofra\inv)\nabla_X Y)
= (0,\partial_X(\Ad(\psi_\cofra\inv)Y) + [\psi_\cofra\inv\partial_X\psi_\cofra,\Ad(\psi_\cofra\inv)Y])$
by the previous soldering relation, 
with the use of $\partial_X\Ad(\psi_\cofra\inv) = -\ad(\psi_\cofra\inv\partial_X\psi_\cofra)$. 
Equating these expressions for $\cofra\hook \nabla_X Y$ yields 
the representation for the linear connection.

\subsection{Parallel framing of non-stretching curve flows} 

Now consider an arbitrary smooth non-stretching curve flow
$\curvflow=\curvflow(x,t)$ in an affine Hermitian symmetric space $\aff(G,H)$,
where $x$ parameterizes the curve and $t$ is time.
The non-stretching property means that the arclength $|\curvflow_x|dx$
is locally preserved at every point $x$ on the curve under evolution,
so that $D_t|\curvflow_x|=0$,
with $|\curvflow_x|=\sqrt{g(\curvflow_x,\curvflow_x)}$.
Hereafter, $x$ will be taken to be the arclength, 
and its domain will be assumed to be $C=\Rnum$ or $S^1$. 

A moving frame for $\curvflow(x,t)$ consists of specifying a set of $\dim(\gsp)$ orthonormal vectors spanning $\aff(G,H)$ at each point in the two-dimensional surface $\curvflow(x,t)$.
From the viewpoint of Klein geometry \cite{Sha}, 
this is equivalent to specifying an orthonormal basis for $\gsp\simeq\msp_\aff$
which gets mapped into corresponding frame vectors for the tangent space 
$T_\curvflow\aff(G,H)$ of the surface $\curvflow(x,t)$ 
through the soldering relation \eqref{solder}
provided by the linear coframe $\cofra$.
In particular,
if $X$ is a frame vector in $T_\curvflow\aff(G,H)$,
then $\cofra\hook X$ is its corresponding basis element in $\gsp$.
The frame is said to be adapted to the curve
if the tangent vector $\curvflow_x$ is one of the frame vectors,
while the remaining frame vectors belong to the normal space of $\curvflow_x$.

Given an orthonormal adapted frame at any point $x$ on the curve $\curvflow(x)$,
its infinitesimal transport along the curve is given by the Cartan connection matrix
which describes the Frenet equations of the frame.
Through the soldering relation \eqref{solder},
the Cartan connection matrix can be identified with the linear connection $\conx$
projected along the tangent vector $\curvflow_x$ of the curve:
\begin{equation}\label{transport.curve}
\nabla_x\cofra = -\ad(\conx\hook\curvflow_x)\cofra, 
\end{equation}
where $\nabla_x =g(\curvflow_x,\grad)$. 
Similarly,
the infinitesimal transport of the frame at each point $x$ under the flow $\curvflow(x,t)$
is given by the Cartan connection matrix 
arising from the evolution equations of the frame,
which can be identified with the linear connection $\conx$
projected along the flow vector $\curvflow_t$:
\begin{equation}\label{transport.flow}
\nabla_t\cofra = -\ad(\conx\hook\curvflow_t)\cofra,
\end{equation}
where $\nabla_t =g(\curvflow_t,\grad)$. 

Two orthonormal adapted frames will be related locally by
the action \eqref{gaugetransf} of frame gauge group $G$
such that $\cofra\hook\curvflow_x$ remains fixed,
namely $\Ad(\g(x)\inv)(\cofra\hook\curvflow_x) =\cofra\hook\curvflow_x$. 
When $\conx\hook\curvflow_x$ is held fixed in addition to $\cofra\hook\curvflow_x$, 
then the action of the gauge group relating the two frames consists only of
a rigid transformation \eqref{gaugetransf} in which $\g$ is a constant group element
belonging to the stabilizer subgroup of $\cofra\hook\curvflow_x$ in $\gsp$.
This subgroup in $G$ is called the equivalence group of an adapted frame. 

As a result, a non-stretching curve flow together with an adapted orthonormal frame
is determined up to equivalence by specifying 
\begin{equation}
\begin{aligned}
&
e_x :=\cofra\hook\curvflow_x \in\msp_\aff, 
\quad
e_t :=\cofra\hook\curvflow_t \in\msp_\aff, 
\\
&
\omega_x :=\conx\hook\curvflow_x \in\hsp_\aff, 
\quad
\omega_t :=\conx\hook\curvflow_t \in\hsp_\aff ,
\end{aligned}
\end{equation}
for the projections of $\cofra$ and $\conx$ along the curve and along the flow, respectively.
In particular, $e_x$ can be chosen as a constant,
while $e_t$, $\omega_x$, $\omega_t$ will be, in general, functions of $(x,t)$. 
These functions will satisfy the Maurer-Cartan equations \eqref{aff-tors}--\eqref{aff-curv} on the two-dimensional surface swept out by $\curvflow(x,t)$:
\begin{align}
& \Dx e_t -D_te_x+[\omega_x,e_t]-[\omega_t,e_x]=0,
\label{tors-eqn}
\\
& \Dx\omega_t-D_t\omega_x+[\omega_x,\omega_t] =0,
\label{curv-eqn}
\end{align}
where $\Dx,D_t$ represent total derivatives.
Note that $[e_x,e_t]=0$ because $\aff(G,H)$ is a flat space.

The non-stretching property of $\curvflow(x,t)$ imposes the condition that $e_x$ has unit norm,
$\brack{e_x,e_x}_\aff=1$.
Hence $e_x$ is a constant unit-norm element in $\msp_\aff$. 
Under a local gauge transformation on the frame, 
$e_x$ transforms into $\tilde e_x=\Ad(\g(x)\inv)e_x$. 
Hence, 
the possible inequivalent choices for $e_x$ are in one-to-one correspondence with the orbits of the group action $\Ad(G)$ on $\map_\aff$. 
If the group $G$ has rank greater than one, 
then there will be more than one choice of $e_x$ up to equivalence. 
(For further discussion, see \Ref{Anc2008}.)
Consequently, 
curves $\curvflow(x,t)$ in $\aff(G,H)$ at any fixed time $t$ 
can be divided into algebraic equivalence classes
determined by the inequivalent choices of $e_x$ in $\msp_\aff$. 
Since $e_x$ is constant, 
a curve flow $\curvflow(x,t)$ will stay in the same equivalence class 
throughout its evolution. 

\begin{rem}
The equivalence class of curves that will be relevant for obtaining 
a geometrical realization of the isospectral flows \eqref{hierarchy.isoflow} 
turn out to have $e_x$ belonging to the span of the element $A$ in $\hsp$. 
This is motivated by comparing the representation 
$e_x = (0,\Ad(\psi_\cofra\inv)\curvflow_x)$ for the tangent vector $\curvflow_x$
given by the soldering relation \eqref{soldering.e.w}
and the analogous representation \eqref{SP.tangent.flow.vectors} 
for the tangent vector $\curvflow_s$ of the Sym-Pohlmeyer curve
associated with an isospectral flow. 
\end{rem}

Thus, hereafter the curve flows $\curvflow(x,t)$ under consideration 
will be taken to have 
\begin{equation}\label{e.A}
e_x := \e = \enorm (0,A) \in \msp_\aff, 
\end{equation}
where 
\begin{equation}\label{norm.A}
\efac = \brack{(0,A),(0,A)}_\aff=-\Kill{A,A}=\dim(\msp)
\end{equation}
is the norm-squared of $A$. 

From the general theory of parallel frames in \Ref{Anc2008}, 
a $G$-parallel frame for these curve flows $\curvflow(x,t)$ in $\aff(G,H)$ 
will be an adapted frame having the property that 
its Cartan connection matrix $\omega_x$ 
belongs to the perp space of the centralizer of $e_x$ in $\hsp_\aff$.
The equivalence group of this frame consists of the subgroup $G_\pa\subset G$ that preserves $e_x$.
Its Lie algebra $\gsp_\pa\subset \gsp$ is the centralizer of $A$,
which is given by $\hsp$ from property \eqref{centrA},
and hence $G_\pa = H$. 

To work out what this definition implies for $\omega_x$, 
let $(\hsp_\aff)_\pa$ be the centralizer of $\e$ in $\hsp_\aff$,
given by
$[\e,(\hsp_\aff)_\pa]=0$,
and let $(\hsp_\aff)_\pe$ be the orthogonal complement defined by
$\brack{(\hsp_\aff)_\pe, (\hsp_\aff)_\pa}_\aff=0$.
Then 
\begin{equation}\label{affhsp-pape}
\hsp_\aff = (\hsp_\aff)_\pe\oplus (\hsp_\aff)_\pa
\end{equation}
is a direct sum of vector spaces.
The explicit representation for these subspaces can be obtained from the Lie bracket
$[\enorminv\e,\hsp_\aff]=[(0,A),(a,0)] = (0,[A,a])$
plus the property \eqref{centrA} that the centralizer of $A$ in $\gsp$ is $\hsp$. 
This determines 
\begin{equation}\label{aff.h.pe.pa}
(\hsp_\aff)_\pa=\{(a,0)|a\in\hsp\},
\quad
(\hsp_\aff)_\pe=\{(a,0)|a\in\msp\} . 
\end{equation}
Since $\hsp_\aff\simeq \msp_\aff\simeq \gsp$ as vector spaces, 
it will be convenient to also define
\begin{equation}\label{aff.m.pe.pa}
(\msp_\aff)_\pa=\{(0,a)|a\in\hsp\},
\quad
(\msp_\aff)_\pe=\{(0,a)|a\in\msp\},
\end{equation}
whereby 
\begin{equation}\label{aff.m.decomp}
 \msp_\aff = (\msp_\aff)_\pe\oplus (\msp_\aff)_\pa . 
\end{equation}
As a consequence of invariance of the Killing form with respect to $\ad(\e)$, 
the Lie bracket structure \eqref{aff.symmsp} has the following decomposition:
\begin{align}
&
[(\hsp_\aff)_\pa,(\hsp_\aff)_\pa] \subseteq (\hsp_\aff)_\pa,
\quad
[(\hsp_\aff)_\pa,(\msp_\aff)_\pa] \subseteq (\msp_\aff)_\pa,
\label{lie.rel.pa}
\\
& 
[(\hsp_\aff)_\pa,(\hsp_\aff)_\pe] \subseteq (\hsp_\aff)_\pe,
\quad
[(\hsp_\aff)_\pe,(\msp_\aff)_\pa] \subseteq (\msp_\aff)_\pe,
\quad
[(\hsp_\aff)_\pa,(\msp_\aff)_\pe] \subseteq (\msp_\aff)_\pe,
\label{lie.rel.pape}
\\
& 
[(\hsp_\aff)_\pe,(\hsp_\aff)_\pe] \subseteq (\hsp_\aff)_\pa,
\quad
[(\hsp_\aff)_\pe,(\msp_\aff)_\pe] \subseteq (\msp_\aff)_\pa . 
\label{lie.rel.pe}
\end{align}  

Through these decompositions \eqref{affhsp-pape}--\eqref{aff.m.decomp} and \eqref{lie.rel.pa}--\eqref{lie.rel.pe},
the Maurer-Cartan equations \eqref{tors-eqn}--\eqref{curv-eqn} become 
\begin{align}
& \Dx (e_t)_\pa+[\omega_x,(e_t)_\pe ]=0,
\label{tors-eq.mpar}
\\
& \Dx(e_t)_\pe+[\omega_x,(e_t)_\pa] -[(\omega_t)_\pe,\e]=0,
\label{tors-eq.mperp}
\\
& \Dx(\omega_t)_\pa+[\omega_x,(\omega_t)_\pe]=0,
\label{curv-eq.hpar}
\\
& \Dx(\omega_t)_\pe -D_t \omega_x+[\omega_x,(\omega_t)_\pa] =0, 
\label{curv-eq.hperp}
\end{align}
in which the variables are represented by 
\begin{gather}
\begin{aligned}
(e_t)_\pa & = (0,h_\pa) \in(\msp_\aff)_\pa, 
\quad
(e_t)_\pe & = (0,h_\pe) \in(\msp_\aff)_\pe,
\end{aligned}
\label{cofra.flow.vars}
\\
\begin{aligned}
(\omega_t)_\pa & = (w_\pa,0) \in(\hsp_\aff)_\pa, 
\quad
(\omega_t)_\pe & = (w_\pe,0) \in(\hsp_\aff)_\pe,
\end{aligned}
\label{conx.flow.vars}
\\
\omega_x = (\q,0) \in(\hsp_\aff)_\pe,
\label{conx.tangent.var}
\end{gather}
where $h_\pa,w_\pa\in\hsp$ and $h_\pe,w_\pe,\q\in\msp$ are functions of $(x,t)$.
The Lie brackets in these equations \eqref{tors-eq.mpar}--\eqref{curv-eq.hperp}
are straightforward to compute using the representations \eqref{aff.rep} and \eqref{aff-Liebrackets}. 
This yields
\begin{align}
& \Dx h_\pa +[\q,h_\pe] =0,
\label{flow.mpar}
\\
& \Dx h_\pe +[\q,h_\pa]+ \enorm  [A,w_\pe]=0,
\label{flow.mperp}
\\
& \Dx w_\pa +[\q,w_\pe]=0,
\label{flow.hpar}
\\
& \Dx w_\pe - \q_t +[\q,w_\pa] =0 , 
\label{flow.hperp}
\end{align}
where all of these reduced Lie brackets sit in $\gsp$.

The preceding developments,
combined with Proposition~\ref{prop:soldering}, 
can be summarized as follows.

\begin{thm}\label{thm:Gparallel.frame}
In any affine Hermitian symmetric space $\aff(G,H)=(G\rtimes\gsp)/G$, 
with $\cofra$ and $\conx$ being a linear coframe and a linear connection \eqref{soldering.e.w}, 
there is a distinguished class of non-stretching curve flows $\curvflow(x,t)$ for which 
\begin{equation}\label{curve}
\cofra\hook\curvflow_x = \enorm (0,A) \in (\msp_\aff)_\pa\simeq\hsp, 
\end{equation}
where $A\in\hsp$ is the complex-structure element \eqref{centrA}--\eqref{adA}. 
These curve flows have a $G$-parallel framing given by 
\begin{equation}\label{Gparallel.conx}
\conx\hook\curvflow_x = (\psi_\cofra\inv \Dx\psi_\cofra,0) =(\q,0)\in(\msp_\aff)_\pe \simeq\msp,
\end{equation}
and
\begin{equation}\label{Gparallel.flow}
\cofra\hook\curvflow_t= (0,h_\pa + h_\pe) \in\msp_\aff\simeq\gsp, 
\quad
\conx\hook\curvflow_t = (\psi_\cofra\inv D_t\psi_\cofra,0)=(w_\pa+w_\pe,0)\in\msp_\aff\simeq\gsp , 
\end{equation}
where $\psi_\cofra(\curvflow)\in G$. 
The equivalence group of the framing is $G_\pa=H \subset G$,
where $H$ is the stabilizer group of $A\in\hsp$,
acting in the adjoint representation. 
For such a framing, 
the frame structure equations take the form \eqref{flow.mpar}--\eqref{flow.hperp},
which are invariant under $G_\pa$. 
\end{thm}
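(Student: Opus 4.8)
The plan is to assemble Theorem~\ref{thm:Gparallel.frame} from the pieces already prepared, with the central work being the verification that the connection along the curve has the stated $G$-parallel form \eqref{Gparallel.conx}. First I would fix the curve flow by imposing \eqref{curve}, i.e. take $e_x=\e=\enorm(0,A)$; by \eqref{norm.A} this is a constant unit-norm element of $(\msp_\aff)_\pa$, so the non-stretching condition $\brack{e_x,e_x}_\aff=1$ holds and, $e_x$ being constant, the flow stays in this algebraic equivalence class for all $t$ (as noted before the Remark). Applying Proposition~\ref{prop:soldering}, the linear coframe and connection have the explicit representation \eqref{soldering.e.w}, so writing $\cofra\hook\curvflow_x=(0,\Ad(\psi_\cofra\inv)\curvflow_x)$ forces $\Ad(\psi_\cofra\inv)\curvflow_x=\enorminv A$ after identifying $\curvflow_x$ with the corresponding vector; this pins down $\psi_\cofra$ up to the stabilizer of $A$ in $G$, which by \eqref{centrA} is exactly $H$, establishing that the equivalence group of the framing is $G_\pa=H$.

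Next I would establish \eqref{Gparallel.conx}. From \eqref{soldering.e.w} the linear connection is $\conx=(\psi_\cofra\inv\d\psi_\cofra,0)$, so $\conx\hook\curvflow_x=(\psi_\cofra\inv\Dx\psi_\cofra,0)\in(\hsp_\aff)_\pa\oplus(\hsp_\aff)_\pe$. The $G$-parallel condition from \Ref{Anc2008} demands precisely that this projection lie in the perp space $(\hsp_\aff)_\pe$ of the centralizer of $e_x$; by \eqref{aff.h.pe.pa} that perp space is $\{(a,0)\mid a\in\msp\}$, so setting $\q\in\msp$ via $(\q,0)=\conx\hook\curvflow_x$ gives \eqref{Gparallel.conx}. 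Here I must check that the residual gauge freedom — the constant $H$-rotations forming the equivalence group — can indeed be used to kill the $(\hsp_\aff)_\pa$-component of $\conx\hook\curvflow_x$ pointwise along the curve; this is the standard parallel-frame reduction argument, and I would invoke the general existence result of \Ref{Anc2008} for Riemannian symmetric spaces, observing that $\aff(G,H)$ fits that framework since it is a flat Riemannian symmetric space with structure group $G$ (as developed in the preceding subsection). Then \eqref{Gparallel.flow} is simply the decomposition of $\cofra\hook\curvflow_t\in\msp_\aff$ and $\conx\hook\curvflow_t\in\hsp_\aff$ along the splittings \eqref{aff.m.decomp} and \eqref{affhsp-pape}, again using \eqref{soldering.e.w} for the explicit form of $\conx\hook\curvflow_t$.

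Finally, the frame structure equations: these are nothing but the Maurer-Cartan equations \eqref{aff-tors}--\eqref{aff-curv} restricted to the two-dimensional surface $\curvflow(x,t)$, which were already reduced to \eqref{tors-eqn}--\eqref{curv-eqn} and then, after inserting the decompositions \eqref{cofra.flow.vars}--\eqref{conx.tangent.var} and computing the Lie brackets via \eqref{aff.rep} and \eqref{aff-Liebrackets}, to the system \eqref{flow.mpar}--\eqref{flow.hperp}. So this part of the theorem is already done in the text; I would simply cite it. Invariance of the system \eqref{flow.mpar}--\eqref{flow.hperp} under $G_\pa=H$ follows because $H$ centralizes $A$, hence commutes with $J=\ad(A)$ and fixes $\e$, so the adjoint action of $H$ preserves each of the subspaces $(\hsp_\aff)_\pa$, $(\hsp_\aff)_\pe$, $(\msp_\aff)_\pa$, $(\msp_\aff)_\pe$ and the Lie-bracket relations \eqref{lie.rel.pa}--\eqref{lie.rel.pe}, leaving the reduced equations formally unchanged.

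\textbf{Main obstacle.} The one genuinely nontrivial point is the reduction to a $G$-parallel frame, i.e. showing that the $(\hsp_\aff)_\pa$-part of $\conx\hook\curvflow_x$ can be gauged away using only the constant equivalence subgroup $H$ — equivalently, that the parallel-transport ODE along the curve has a solution keeping $\omega_x$ in $(\hsp_\aff)_\pe$. I expect to handle this by direct appeal to the general construction in \Ref{Anc2008}, checking only that its hypotheses (a Riemannian symmetric space, a reductive splitting, a preferred tangent direction $e_x$ with a well-defined centralizer) are met by $\aff(G,H)$ with $e_x=\enorm(0,A)$, which the preceding subsection has already arranged.
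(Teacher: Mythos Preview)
Your proposal is essentially correct and follows the paper's own route: the theorem is explicitly presented as a summary of the developments in the preceding subsection, combined with Proposition~\ref{prop:soldering}, and you trace through those developments accurately.

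There is, however, one conceptual slip worth fixing. You write that ``the residual gauge freedom --- the constant $H$-rotations forming the equivalence group --- can indeed be used to kill the $(\hsp_\aff)_\pa$-component of $\conx\hook\curvflow_x$''. This conflates two distinct levels of gauge freedom. After fixing $e_x=\enorm(0,A)$, the gauge transformations \eqref{gaugetransf} preserving $e_x$ are the \emph{$x$-dependent} maps $\g(x)$ valued in the stabilizer $H$ of $A$; it is this $x$-dependent freedom that one spends to solve the parallel-transport ODE and force $\omega_x$ into $(\hsp_\aff)_\pe$, since under \eqref{gaugetransf} the $\hsp$-part of $\omega_x$ transforms as $\Ad(\g\inv)(\omega_x)_\hsp + \g\inv\Dx\g$ with the inhomogeneous term lying in $\hsp$. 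Only \emph{after} this reduction is the remaining freedom cut down to the rigid (constant) $H$-rotations, which then constitute the equivalence group $G_\pa=H$. A constant rotation cannot kill an $x$-varying $\hsp$-component. Your ``Main obstacle'' paragraph actually gets this right (you speak of a parallel-transport ODE), so the issue is only in the earlier wording; but the distinction matters, since it is exactly what justifies both the existence of the $G$-parallel frame and the identification of its equivalence group.
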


\begin{rem}
The variables \eqref{cofra.flow.vars}--\eqref{conx.tangent.var}
in the frame structure equations have the following geometrical meaning:
$\q$ encodes the Cartan connection matrix of the $G$-parallel frame; 
$h_\pe$ and $h_\pa$ respectively encode the motion of the curve 
projected into the subspaces \eqref{aff.m.pe.pa} of $\msp_\aff$; 
$w_\pe+w_\pa$ encodes the Cartan connection matrix of the evolution of the $G$-parallel frame
induced by the motion of the curve.
\end{rem}

A geometrical interpretation of the algebraic property \eqref{curve} distinguishing the curve flows $\curvflow(x,t)$ 
will be given in \secref{subsec:geom.curve}. 

Furthermore, as will be shown later,
there is an explicit geometrical correspondence between
the $G$-parallel frame structure equations \eqref{flow.mpar}--\eqref{flow.hperp} 
and the isospectral flow equations \eqref{ZCh0}--\eqref{ZCh1}, \eqref{ZCm0}--\eqref{ZCm1}. 
This correspondence is closely related to the Sym-Pohlmeyer construction.

\subsection{Bi-Hamiltonian structure and hierarchies of isospectral flows}

From the general results \cite{Anc2008}
known for non-stretching curve flows in Riemannian symmetric spaces,
the $G$-parallel frame structure equations \eqref{flow.mpar}--\eqref{flow.hperp} 
turn out to encode a bi-Hamiltonian structure for the induced flows
on the Cartan connection-matrix variable $\q$. 

Introduce the variable 
\begin{equation}\label{hperp}
h^\pe :=\ad(\e) h_\pe = \enorm [A,h_\pe] \in\msp
\end{equation}
and use equations \eqref{flow.mpar} and \eqref{flow.hpar} to eliminate the variables
\begin{equation}\label{hpar}
h_\pa = \enorminv \Dxinv[\q,Jh^\pe],
\quad
w_\pa = -\Dxinv[\q,w_\pe],  
\end{equation}
where the property \eqref{adA} that $\ad(A)=J$ is a complex structure on $\msp$
has been used. 
Then the remaining equations \eqref{flow.mperp} and \eqref{flow.hperp} become
\begin{align}
w_\pe & = \efac \Jop(h^\pe) , 
\label{wperp.eq}
\\
\q_t & = \Hop(w_\pe) , 
\label{q.eq}
\end{align}
where $\Hop$ is the operator
\begin{equation}\label{Hop}
\Hop = \Dx -\ad(\q)\Dxinv\ad(\q)
\end{equation}  
and $\Jop$ is the conjugated operator
\begin{equation}\label{Jop}
\Jop=J\Hop J\inv = \Dx -J\ad(\q)\Dxinv\ad(\q)J\inv .
\end{equation}  

\begin{thm}\label{thm:iso.flow.eqs}
The operator equations \eqref{wperp.eq}--\eqref{q.eq}
on the Cartan connection-matrix variables $\q$ and $w_\pe$ provide 
an equivalent formulation of the $G$-parallel frame structure equations
describing non-stretching curve flows $\curvflow(x,t)$ of the algebraic type \eqref{curve}
in any affine Hermitian symmetric space $\aff(G,H)=(G\rtimes\gsp)/G$.
In this formulation,
$\Hop$ and $J$ are a compatible pair of Hamiltonian operators with respect to the $\msp$-valued flow variable $\q$,
and $\Rop=\Hop J\inv$ is a hereditary recursion operator.
A curve flow $\curvflow(x,t)$ is determined by specifying the variable $h^\pe(x,t)$,
which yields the flow equation 
\begin{equation}\label{Gparallel.floweqn}
\tfrac{1}{\efac} \q_t = \Hop(w_\pe) = -\Rop^2(h^\pe)
\end{equation}
for $\q(x,t)$.
The curve flow equation as well as the Hamiltonian operators (and the recursion operator)
are invariant under the equivalence group $G_\pa=H\subset G$ of the $G$-parallel frame. 
\end{thm}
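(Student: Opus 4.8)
The plan is to verify Theorem~\ref{thm:iso.flow.eqs} in four stages: first establishing the equivalence of the operator equations with the frame structure equations, then checking the Hamiltonian property of $\Hop$ and $J$, then the hereditary property of $\Rop$, and finally the $G_\pa$-invariance.

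\textbf{Step 1: equivalence with the frame structure equations.} First I would make the substitution \eqref{hperp} and derive \eqref{hpar} from \eqref{flow.mpar} and \eqref{flow.hpar} directly, using $\ad(A)=J$ with $J^2=-\id$ on $\msp$ to invert $\ad(A)$ on $h^\pe$. Here one must check carefully that $\Dxinv$ is well-defined on the relevant brackets: $[\q,Jh^\pe]$ and $[\q,w_\pe]$ both lie in $\hsp$ by \eqref{symmsp.brackets}, and no constant-of-integration ambiguity enters because we work modulo the kernel (or on the appropriate closed/periodic domain $C$). Substituting these expressions into \eqref{flow.mperp} gives, after applying $\enorminv\ad(A)$ and collecting terms, exactly $w_\pe = \efac\,\Jop(h^\pe)$ with $\Jop = J\Hop J\inv$; then \eqref{flow.hperp} reads $\q_t = \Dx w_\pe + [\q,w_\pa] = \Dx w_\pe - [\q,\Dxinv[\q,w_\pe]] = \Hop(w_\pe)$. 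Running the computation backwards shows the operator system reproduces \eqref{flow.mpar}--\eqref{flow.hperp}, so the two formulations are equivalent. Composing \eqref{wperp.eq}--\eqref{q.eq} then yields $\enorminv\q_t = \enorminv\Hop J\inv J \Hop J\inv (h^\pe) = \Rop^2(h^\pe)$ up to the sign fixed by $\efac$, giving \eqref{Gparallel.floweqn}.

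\textbf{Steps 2--3: Hamiltonian and hereditary structure.} For this I would invoke the general theory of \Ref{Anc2008}: the operators $\Hop = \Dx - \ad(\q)\Dxinv\ad(\q)$ and $J$ arise precisely as the pair of compatible Hamiltonian (cosymplectic) operators attached to a $G$-parallel frame in a Riemannian symmetric space, with $\q$ the $\msp$-valued Hasimoto variable. Concretely, $J$ is skew (since $\ad(A)=J$ satisfies $\brack{J\cdot,\cdot}=-\brack{\cdot,J\cdot}$ under the Killing form, by $\ad$-invariance) and constant-coefficient, hence Hamiltonian with vanishing Schouten bracket; $\Hop$ is manifestly skew-adjoint because $\ad(\q)$ is skew and $\Dxinv$ is skew, and its Jacobi/Schouten identity plus compatibility with $J$ follow from the symmetric-space bracket relations \eqref{symmsp.brackets} exactly as in \cite{Anc2008}. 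The hereditary property of $\Rop = \Hop J\inv$ is then a formal consequence of $\Hop, J$ being a compatible Hamiltonian pair (Magri's theorem / the standard Fuchssteiner argument), so I would cite rather than re-derive it.

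\textbf{Step 4: invariance.} Under $g\in G_\pa = H$ (the stabilizer of $A$, so $\Ad(g)A = A$ and $\Ad(g)$ preserves both $\msp$ and $\hsp$), the Hasimoto variable transforms as $\q \mapsto \Ad(g)\q$ since $g$ is constant in \eqref{gaugetransf} (the $g\inv\diff g$ term drops). Then $\ad(\Ad(g)\q) = \Ad(g)\ad(\q)\Ad(g)\inv$, $\Dx$ commutes with the constant $\Ad(g)$, and $J = \ad(A)$ commutes with $\Ad(g)$ because $\Ad(g)A=A$; hence $\Hop$, $J$, $\Rop$, and the flow equation \eqref{Gparallel.floweqn} are all equivariant under $\Ad(H)$. \textbf{The main obstacle} I anticipate is Step~1's bookkeeping — keeping the $\enorm$/$\enorminv$ factors and the $J$-conjugations consistent when passing between $h_\pe$ and $h^\pe$ — together with justifying that $\Dxinv$ causes no obstruction on the domain $C$; the deeper Hamiltonian/hereditary claims are essentially imported wholesale from \Ref{Anc2008} and require only that the affine symmetric space $\aff(G,H)$ genuinely fits that framework, which Theorem~\ref{thm:Gparallel.frame} has already set up.
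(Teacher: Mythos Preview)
Your proposal is correct and follows essentially the same route as the paper: the equivalence of the operator equations \eqref{wperp.eq}--\eqref{q.eq} with the frame structure equations \eqref{flow.mpar}--\eqref{flow.hperp} is carried out in the text immediately preceding the theorem by exactly the substitutions you describe, and the bi-Hamiltonian and hereditary claims are then obtained by invoking the general theorem of \Ref{Anc2008} together with the standard factorization argument, just as you propose in Steps~2--3. Your Step~4 invariance argument is also the right one. The only slip is in your displayed composition ``$\enorminv\q_t=\enorminv\Hop J\inv J\Hop J\inv(h^\pe)=\Rop^2(h^\pe)$'': the correct chain is $\q_t=\Hop(w_\pe)=\efac\,\Hop\Jop(h^\pe)=\efac\,\Hop J\Hop J\inv(h^\pe)=-\efac\,\Rop^2(h^\pe)$ (using $J=-J\inv$), which gives the stated $\tfrac{1}{\efac}\q_t=-\Rop^2(h^\pe)$; you already flag this bookkeeping as the likely trouble spot, and indeed that is the only place needing care.
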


This result is a direct application of the general theorem in \Ref{Anc2008}
on a parallel-frame formulation of non-stretching curve flows
in Riemannian symmetric spaces.
The proof of that theorem involves showing that the pair of operators arising
in the flow equation in a parallel frame satisfy the conditions to be a bi-Hamiltonian pair,
in particular,
$\Hop$ gives rise to a bilinear bracket that is skew and obeys the Jacobi identity,
while $\Jop$ gives rise to a bilinear form that is skew and closed.
These conditions are proved to hold by using an extension of the standard theory of bi-Hamiltonian structures to the setting of Lie algebra-valued variables.
The general theory of recursion operators \cite{Olv-book} then shows that
the hereditary property of $\Rop$ follows from its factorized form 
in terms of $\Hop$ and $\Jop$. 

Thus, a $G$-parallel frame encodes a natural bi-Hamiltonian structure $\Hop$, $J$,
along with a hereditary recursion operator $\Rop$. 
By Magri's theorem,
a hierarchy of bi-Hamiltonian flows can be produced by taking $\X = h^\pe\hook\p_{\q}$
to be the generator of a symmetry of the Hamiltonian operators $\Hop$, $J$. 

There are two natural symmetries,
consisting of phase rotations and translations,
generated by
\begin{align}
\X_{\text{phas.}} & = -J\q\hook\p_{\q},
\label{phas}\\
\X_{\text{trans.}} & = -\q_x\hook\p_{\q} ,
\label{trans}
\end{align}
which respectively produce a hierarchy of flows on $\q$
starting from $h^\pe = -J\q$ and $h^\pe = -\q_x$.

The first hierarchy is given by
\begin{equation}\label{nls-hierarchy}
\tfrac{1}{\efac} \q_t = \Rop^{2n}(J\q), 
\quad
n=0,1,2,\ldots . 
\end{equation}
Its root flow, for $n=0$, is $\tfrac{1}{\efac}\q_t = J\q$.
The next flow, for $n=1$, is obtained by starting with
$\Rop(J\q)= \Hop(\q) = \q_x$,
and so
$\Rop^2(J\q) = \Rop(\q_x) = \Hop(J\inv \q_x) = -J\q_{xx} +\ad(\q)\Dxinv(\ad(\q)J\q_x)$,
where $\ad(\q)J\q_x= \tfrac{1}{2}(\ad(\q)J\q)_x$ was derived earlier. 
Hence, this yields
\begin{equation}\label{Gparallel.nls}
\tfrac{1}{\efac}\q_t=\Rop^2(J\q) = -J\q_{xx} +\tfrac{1}{2}\ad(\q)^2J\q ,
\end{equation}
which is, up to scaling, the same as the isospectral NLS system \eqref{FK.nls}. 
Note that this flow \eqref{Gparallel.nls} has
\begin{equation}\label{nls.flow.vars}
h^\pe = -J\q,
\quad
h_\pa = 0,
\quad
\tfrac{1}{\efac} w_\pe=-J\q_x,
\quad
\tfrac{1}{\efac} w_\pa = \tfrac{1}{2} \ad(\q)J\q . 
\end{equation}

Similarly,
the second hierarchy is given by 
\begin{equation}\label{mkdv-hierarchy}
\tfrac{1}{\efac}\q_t = \Rop^{2n}(\q_x), 
\quad
n=0,1,2,\ldots , 
\end{equation}
which has the root flow, for $n=0$, is $\tfrac{1}{\efac}\q_t = \q_x$.
The next flow is given by
$\Rop^2(\q_x)= \Rop(-J\q_{xx} +\tfrac{1}{2}\ad(\q)^2J\q)$,
which requires several steps to simplify.
The first term
\begin{equation}\label{mkdvterm1}
-\Rop(J\q_{xx}) = -\Hop(\q_{xx}) = -\q_{xxx} +\ad(\q)^2\q_x
\end{equation}  
follows from the use of $\Dxinv(\ad(\q)\q_{xx})= [\q,\q_x]$.
The second term is 
$\tfrac{1}{2}\Rop(\ad(\q)^2J\q) = \tfrac{1}{2}\Hop(J\inv\ad(\q)^2J\q) = -\tfrac{1}{2} J(\ad(\q)^2J\q)_x +\tfrac{1}{2}\ad(\q)\Dxinv(\ad(\q)J\ad(\q)^2J\q)$.
This expression can be simplified via the identity 
\begin{equation}\label{ident}
J\ad(\q)^2J\q = [J,\ad(\q)]\ad(\q)J\q + (\ad(\q)J)^2\q =-\ad(J\q)^2\q ,
\end{equation}
which holds by $[J,\ad(\q)]=\ad(J\q)$ and $(\ad(\q)J)^2\q =\ad(q)J[\q,J\q]=0$
due to $[\q,J\q]\in\hsp$.
First, note that the local term in $\tfrac{1}{2}\Rop(\ad(\q)^2J\q)$ 
directly becomes
$-\tfrac{1}{2} J(\ad(\q)^2J\q)_x = \tfrac{1}{2}(\ad(J\q)^2\q)_x$.
Next, the nonlocal term in $\tfrac{1}{2}\Rop(\ad(\q)^2J\q)$
vanishes by the following steps applied to $\ad(\q)J\ad(\q)^2J\q$.
On one hand, from the identity \eqref{ident}, 
consider 
$\ad(\q)J\ad(\q)^2J\q = -\ad(\q)\ad(J\q)^2\q
= \ad(\q)[J\q,\ad(\q)J\q] = [\ad(\q)J\q,\ad(\q)J\q] + [J\q,\ad(\q)^2J\q]= \ad(J\q)\ad(\q)^2J\q$.
On the other hand, by the same steps used in proving the identity \eqref{ident},
note 
$\ad(\q)J\ad(\q)^2J\q = [\ad(\q),J]\ad(\q)^2J\q + J\ad(\q)^3J\q = -\ad(J\q)\ad(\q)^2J\q$. 
Thus, $\ad(\q)J\ad(\q)^2J\q = -\ad(\q)J\ad(\q)^2J\q$,
which implies $\ad(\q)\Dxinv(\ad(\q)J\ad(\q)^2J\q)= 0$. 
Consequently, these steps give
\begin{equation}\label{mkdvterm2}
\tfrac{1}{2}\Rop(\ad(\q)^2J\q) = \tfrac{1}{2}(\ad(J\q)^2\q)_x . 
\end{equation}
Combining the terms \eqref{mkdvterm2} and \eqref{mkdvterm1}
thereby yields
\begin{equation}\label{Gparallel.mkdv}
\tfrac{1}{\efac}\q_t =\Rop^2(\q_x) 
= -\q_{xxx} +\ad(\q)^2\q_x +\tfrac{1}{2}(\ad(J\q)^2\q)_x ,
\end{equation}
which is, up to scaling, the same as the isospectral mKdV system \eqref{FK.mkdv}. 
This flow \eqref{Gparallel.mkdv} has
\begin{equation}\label{mkdv.flow.vars}
h^\pe = -\q_x,
\quad
\enorm h_\pa = -\tfrac{1}{2}\ad(\q)J\q, 
\quad
\tfrac{1}{\efac} w_\pe=-\q_{xx} - \tfrac{1}{2}\ad(\q)^2J\q, 
\quad
\tfrac{1}{\efac} w_\pa = -\ad(\q)\q_x . 
\end{equation}

In these two hierarchies \eqref{nls-hierarchy} and \eqref{mkdv-hierarchy}, 
the respective root flows are related by the recursion operator $\Rop$,
since $\Rop(J\q) = \q_x$.
Hence, the hierarchies can be merged to form a single intertwined hierarchy of flows
given by 
\begin{equation}\label{aff.hierarchy}
\tfrac{1}{\efac}  \q_t = \Rop^n(J\q) := h^\pe_{(n)},
\quad
n=0,1,2,\ldots . 
\end{equation}
In particular, 
the $0$ flow is $\tfrac{1}{\efac}\q_t = J\q =h^\pe_{(0)}$ 
and the $+1$ flow is $\tfrac{1}{\efac}\q_t = \q_x =h^\pe_{(1)}$,
while the $+2$ and $+3$ flows are precisely the NLS flow \eqref{Gparallel.nls}
and the mKdV flow \eqref{Gparallel.mkdv}.

\begin{rem}
All of the higher flows in the hierarchy \eqref{aff.hierarchy}
can be viewed as higher symmetries
$\X^{(n)} = h^\pe_{(n)}\hook\p_\q$, $n\geq 4$,
of both the NLS flow and the mKdV flow.
\end{rem}

The merged hierarchy \eqref{aff.hierarchy} has the following Hamiltonian structure.

\begin{thm}\label{thm:iso.biHamil.hiearchy}
Each flow in the hierarchy \eqref{aff.hierarchy} for $n\geq 1$
has a bi-Hamiltonian formulation
\begin{equation}\label{aff.hierarchy.biHam}
\tfrac{1}{\efac}\q_t = \Hop(\delta H^{(n)}/\delta \q) = J(\delta H^{(n+1)}/\delta \q),
\quad
n=1,2,\ldots 
\end{equation}
where the Hamiltonians are given by the functionals
\begin{equation}\label{aff.Ham.n}
H^{(n)} = \frac{1}{n}\int_C \Dxinv\brack{\Rop^n(J\q),\q}\; dx, 
\quad
n=1,2,\ldots 
\end{equation}
on the domain $C=\Rnum$ or $S^1$.
For $n\geq 2$,
each flow has an additional Hamiltonian structure,
\begin{equation}\label{aff.hierarchy.triHam}
\tfrac{1}{\efac}\q_t = \Eop(\delta H^{(n-1)}/\delta \q), 
\quad
n=2,3,\ldots 
\end{equation}
where $\Eop= \Rop^2 J$ is a third Hamiltonian operator
which is compatible with the Hamiltonian operators $\Hop=\Rop J$ and $J$.
\end{thm}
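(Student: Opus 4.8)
The plan is to establish Theorem~\ref{thm:iso.biHamil.hiearchy} by leveraging the bi-Hamiltonian machinery already assembled in Theorem~\ref{thm:iso.flow.eqs}, namely that $\Hop$ and $J$ form a compatible Hamiltonian pair on the $\msp$-valued variable $\q$ and that $\Rop = \Hop J\inv$ is hereditary. The argument splits naturally into three parts: (i) verify that the functionals $H^{(n)}$ in \eqref{aff.Ham.n} are well-defined and have the correct variational derivatives, so that each member of the hierarchy \eqref{aff.hierarchy} really is generated as in \eqref{aff.hierarchy.biHam}; (ii) deduce the standard ``telescoping'' identity $\Hop(\delta H^{(n)}/\delta\q) = J(\delta H^{(n+1)}/\delta\q)$ from the abstract Lenard--Magri recursion; and (iii) introduce the third operator $\Eop = \Rop^2 J$ and verify it is Hamiltonian and compatible with the pair $\Hop, J$, after which the extra structure \eqref{aff.hierarchy.triHam} follows immediately from $\Eop(\delta H^{(n-1)}/\delta\q) = \Rop^2 J (\delta H^{(n-1)}/\delta\q) = \Rop^2 \Hop(\delta H^{(n-2)}/\delta\q) = \cdots$, i.e.\ from iterating the recursion twice.

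First I would address the Hamiltonians. The key point is that the flows $h^\pe_{(n)} = \Rop^n(J\q)$ are, by construction in Theorem~\ref{thm:iso.flow.eqs}, of the form $\Rop^n(J\q) = \Hop J\inv \Rop^{n-1}(J\q)$, so one expects $h^\pe_{(n)} = \Hop(\delta H^{(n)}/\delta\q)$ with $H^{(n)}$ as claimed. To prove $\delta H^{(n)}/\delta\q$ has the right form, I would compute the variation of $H^{(n)} = \tfrac1n\int_C \Dxinv\brack{\Rop^n(J\q),\q}\,dx$ directly: using the fact that $\Rop^*$ is the adjoint of $\Rop$ under $\brack{\cdot,\cdot}$ on $\msp$ (established via $\Hop$ being symmetric and $J$ skew, as noted before \eqref{iso.Rop}), the variation telescopes and the factor $1/n$ accounts for the $n$ occurrences of $\q$. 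This is the familiar computation showing that the trace-type densities built from a hereditary operator are in involution; I would invoke Magri's theorem \cite{Mag} and the hereditary property to conclude the $H^{(n)}$ are conserved densities of every flow and that the recursion $\Hop\,\delta H^{(n)}/\delta\q = J\,\delta H^{(n+1)}/\delta\q$ holds. The boundary terms vanish since $C = \Rnum$ (Schwartz decay) or $S^1$ (periodicity).

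For the third Hamiltonian structure, the crucial claim is that $\Eop = \Rop^2 J = \Hop J\inv \Hop J\inv J = \Hop J\inv \Hop$ is again a Hamiltonian operator and is compatible with $\Hop$ and $J$. Here I would appeal to the general principle that if $\Rop$ is a hereditary recursion operator and $\Hop$ is a Hamiltonian operator such that $\Rop\Hop$ is also Hamiltonian (which holds since $\Rop\Hop = \Hop J\inv\Hop = \Hop\Jop\inv\Hop$ is manifestly symmetric when $\Hop,\Jop$ are symmetric, and the Schouten bracket vanishes by heredity), then $\Rop^k\Hop$ is Hamiltonian for all $k\ge0$ and any two of them are compatible --- this is the standard consequence of Magri's theorem for hereditary operators. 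Since $\Hop = \Rop J$, we have $\Eop = \Rop^2 J = \Rop\Hop$, which is the case $k=1$. Mutual compatibility of $J = \Rop^0 J$, $\Hop = \Rop^1 J$, $\Eop = \Rop^2 J$ then gives the tri-Hamiltonian claim, and \eqref{aff.hierarchy.triHam} is the identity $\Eop\,\delta H^{(n-1)}/\delta\q = \Rop^2\Hop\,\delta H^{(n-2)}/\delta\q = \Rop^2(J\q)\cdot(\text{shift}) = \Rop^n(J\q) = \tfrac{1}{\efac}\q_t$ after reindexing via the recursion.

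**The hard part will be** the verification that $\Eop = \Rop^2 J$ is genuinely Hamiltonian --- i.e.\ that its associated bracket is skew (straightforward from symmetry of $\Hop$ and $\Jop$) and, more delicately, satisfies the Jacobi identity, equivalently that its Schouten bracket with itself vanishes. For Lie-algebra-valued variables this requires the extended theory of Hamiltonian operators used in \Ref{Anc2008}, and one must check that the nonlocal pieces $\ad(\q)\Dxinv\ad(\q)$ appearing in $\Hop$ and $\Jop$ interact correctly under the bracket. The cleanest route, which I would follow, is to avoid a direct Schouten-bracket computation and instead cite the general result that for a hereditary operator $\Rop$ compatible with a Hamiltonian pair $(\Hop,J)$ with $\Rop = \Hop J\inv$, \emph{all} operators $\Rop^k J$ are pairwise compatible Hamiltonian operators (a theorem of Magri--Fuchssteiner--Oevel type, valid in this Lie-algebra setting by the results of \Ref{Anc2008} and \cite{Olv-book}); compatibility of $\Hop$ and $J$ is already granted by Theorem~\ref{thm:iso.flow.eqs}, so only this abstract lemma needs to be invoked. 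The remaining bookkeeping --- checking the index shifts in \eqref{aff.hierarchy.biHam} and \eqref{aff.hierarchy.triHam} and the normalization $1/n$ in \eqref{aff.Ham.n} --- is routine.
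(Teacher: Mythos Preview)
Your overall strategy---invoke Magri's theorem together with the hereditary property of $\Rop$ from Theorem~\ref{thm:iso.flow.eqs} to obtain the Lenard recursion, and then appeal to the general fact that $\Rop^k J$ is Hamiltonian for all $k\geq 0$ when $(\Hop,J)$ is a compatible pair with $\Rop=\Hop J\inv$ hereditary---is correct and is essentially what the paper does implicitly. The tri-Hamiltonian claim for $\Eop=\Rop^2 J$ follows exactly as you say.

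Two points deserve correction or comment.

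First, a terminological slip: you repeatedly call $\Hop$ and $\Jop$ ``symmetric.'' Hamiltonian operators are \emph{skew}-adjoint, not symmetric; the adjoint computation $\Rop^* = (J\inv)^*\Hop^* = (-J\inv)(-\Hop) = J\inv\Hop$ works precisely because both are skew. Likewise $\Eop = \Hop J\inv\Hop$ is skew because it is a product of three skew operators.

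Second, and more substantively, your plan to derive the Hamiltonians \eqref{aff.Ham.n} by a direct variational computation differs from the paper's route. The paper obtains \eqref{aff.Ham.n} by invoking the general scaling formula of \Ref{Anc2003}: under $x\to e^\epsilon x$, $\q\to e^{-\epsilon}\q$, the flow $\Rop^n(J\q)$ has scaling weight $-(n+1)$, so the density $\Dxinv\brack{\Rop^n(J\q),\q}$ has weight $-n$, and the scaling formula then reconstructs $H^{(n)}$ with the prefactor $1/n$. Your heuristic that ``the factor $1/n$ accounts for the $n$ occurrences of $\q$'' is not accurate---$\Rop^n(J\q)$ is not simply $n$-linear in $\q$ (each $\Rop$ contains $\ad(\q)\Dxinv\ad(\q)$), and the $1/n$ is the scaling weight, not a multilinearity count. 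A direct variation of $H^{(n)}$ as you propose is feasible in principle but is messier because of the nonlocal $\Dxinv$ in the density and in $\Rop$; the scaling-formula approach sidesteps this entirely and is what makes the formula \eqref{aff.Ham.n} immediate. If you pursue the direct route, you should instead verify that $\brack{\Rop^n(J\q),\q}$ is a total $x$-derivative (the paper does this via the identity $\Dxinv\brack{h^\pe_{(n)},\q}_\msp = \enorminv\brack{A,h_\pa^{(n)}}_\gsp$ in the remark following the theorem), and then check the variational derivative against the known gradients from the Lenard chain.
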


As a consequence, 
the NLS flow \eqref{Gparallel.nls} and the mKdV flow \eqref{Gparallel.mkdv}
have the tri-Hamiltonian structures
\begin{equation}\label{nls.tri.ham}
\tfrac{1}{\efac}\q_t = -J\q_{xx} +\tfrac{1}{2}[\q,[\q,J\q]]
= \Eop(\delta H^{(1)}/\delta \q) = \Hop(\delta H^{(2)}/\delta \q) = J (\delta H^{(3)}/\delta \q)
\end{equation}
and
\begin{equation}\label{mkdv.tri.ham}
\tfrac{1}{\efac}\q_t = -\q_{xxx} +\ad(\q)^2\q_x +\tfrac{1}{2}(\ad(J\q)^2\q)_x
= \Eop(\delta H^{(2)}/\delta \q) = \Hop(\delta H^{(3)}/\delta \q) = J (\delta H^{(4)}/\delta \q) , 
\end{equation}
which are given in terms of the Hamiltonians
\begin{align}
H^{(1)} & = \int_C \tfrac{1}{2}\brack{\q,\q}_\msp\;dx ,
\\
H^{(2)} &= \int_C \tfrac{1}{2}\brack{J\q,\q_x}_\msp\;dx,
\\
H^{(3)} &= \int_C \tfrac{1}{2}\brack{\q_x,\q_x}_\msp -\tfrac{1}{8}\brack{[\q,J\q],[\q,J\q]}_\msp \;dx,
\\
H^{(4)} &= \int_C \tfrac{1}{2}\brack{J\q_x,\q_{xx}}_\msp -\tfrac{3}{8}\brack{[\q,J\q],[\q,\q_x]}_\msp \;dx . 
\end{align}

Expression \eqref{aff.Ham.n} for the Hamiltonians comes from
a general scaling formula in \Ref{Anc2003} (see also \Ref{Anc2008}),
with $x\to e^\epsilon x$, $\q\to e^{-\epsilon} \q$ ($\epsilon\in\Rnum$)
being the scaling group.
In the case $n=0$, this formula leads to a trivial identity
$(0)H^{(0)} = \int_C \Dxinv(0)\; dx =0$. 

Correspondingly, 
the $n=0$ flow has only a single Hamiltonian form, 
$\tfrac{1}{\efac}\q_t = J\q =  J(\delta H^{(1)}/\delta \q)$,
and the $n=1$ flow has only a bi-Hamiltonian form, 
$\tfrac{1}{\efac}\q_t = \q_x =  \Hop(\delta H^{(1)}/\delta \q) =  J(\delta H^{(2)}/\delta \q)$,
rather than a tri-Hamiltonian form. 
Nevertheless, 
it is possible to view $\Hop(0)= J\q= -\ad(\q)A$
if $\Dxinv$ which appears in $\Hop$ 
is defined to contain a constant of integration given by 
\begin{equation}\label{Dxinv}
\Dxinv(0)=A . 
\end{equation}

\begin{prop}\label{prop:alt.Ham.struc}
In the hierarchy \eqref{aff.hierarchy} of flows, 
the bi-Hamiltonian structure \eqref{aff.hierarchy.biHam} 
can be extended to $n=0$ with a trivial Hamiltonian $H^{(0)}=0$
if the Hamiltonian operator $\Hop$ 
is generalized by defining $\Dxinv$ to have a non-trivial cokernel \eqref{Dxinv}. 
This yields $\Hop(0)= J\q=h^\pe_{(0)}$, which is the $0$ flow. 
In a similar way, 
the tri-Hamiltonian structure \eqref{aff.hierarchy.biHam} 
can be extended to $n=1$ with a trivial Hamiltonian $H^{(0)}=0$, 
whereby $\Eop(0)= \Rop\Hop(0) = \Rop(J\q)=\q_x =h^\pe_{(1)}$,
which is the $+1$ flow. 
\end{prop}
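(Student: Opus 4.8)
The plan is to reduce the two claimed extensions to short computations, using that a vanishing Hamiltonian $H^{(0)}=0$ has vanishing variational derivative, together with the cokernel prescription $\Dxinv(0)=A$ in \eqref{Dxinv}. For the bi-Hamiltonian extension, the relation \eqref{aff.hierarchy.biHam} at $n=0$ reads $\tfrac{1}{\efac}\q_t = \Hop(\delta H^{(0)}/\delta\q) = J(\delta H^{(1)}/\delta\q)$, so it suffices to check that both sides equal $h^\pe_{(0)}=J\q$. The right-hand side is immediate: varying $H^{(1)}=\int_C\tfrac{1}{2}\brack{\q,\q}_\msp\,dx$ gives $\delta H^{(1)}/\delta\q=\q$, hence $J(\delta H^{(1)}/\delta\q)=J\q$, which already exhibits the $0$ flow in Hamiltonian form through $J$ with no modification needed. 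For the left-hand side I would use $\delta H^{(0)}/\delta\q = 0$, the prescription \eqref{Dxinv}, and $\ad(A)=J$ on $\msp$ to compute
\[
\Hop(0) = \Dx(0) - \ad(\q)\,\Dxinv\big(\ad(\q)(0)\big) = -\ad(\q)\,\Dxinv(0) = -\ad(\q)A = \ad(A)\q = J\q ,
\]
so the two sides agree and reproduce $h^\pe_{(0)}$.

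For the tri-Hamiltonian extension, \eqref{aff.hierarchy.triHam} at $n=1$ reads $\tfrac{1}{\efac}\q_t = \Eop(\delta H^{(0)}/\delta\q) = \Eop(0)$. I would first rewrite $\Eop$: since $\Rop=\Hop J\inv$ one has $\Rop J=\Hop$, hence $\Eop=\Rop^2 J=\Rop\Hop$. Then, using the extended value $\Hop(0)=J\q$ obtained above,
\[
\Eop(0) = \Rop\,\Hop(0) = \Rop(J\q) = \Hop J\inv(J\q) = \Hop(\q) = \q_x = h^\pe_{(1)} ,
\]
where the penultimate equality holds because $\ad(\q)\q = [\q,\q] = 0$, so only the ordinary value of $\Dxinv$ enters at that step. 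This is exactly the $+1$ flow of the hierarchy \eqref{aff.hierarchy}, as required.

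The step I expect to be the main obstacle is not the algebra above but the justification that the cokernel prescription \eqref{Dxinv} is \emph{consistent}, i.e.\ that enlarging $\Dxinv$ by a constant of integration does not spoil the hereditary recursion operator $\Rop$ nor the Hamiltonian nature of $\Hop$ and $J$ established in Theorem~\ref{thm:iso.flow.eqs}. The point to emphasize is that $A$ lies in the cokernel of $\Dx$ acting on $\msp$-valued functions, since $\Dx A = 0$; therefore $\Dx\Dxinv=\id$, the factorization $\Rop=\Hop J\inv$, and all the differential identities relating $\Hop$, $J$, $\Rop$ survive verbatim, and the extra constant contributes only when $\Hop$ is evaluated on the zero covector $\delta H^{(0)}/\delta\q$. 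One has to check carefully that this constant is inserted there and only there — in particular not in the internal $\Dxinv$ occurring inside $\Hop(\q)$, whose ordinary value $\Dxinv(0)=0$ is what keeps the $+1$ flow equal to $\q_x$. This is the same device that underlies the scaling-formula construction of the Hamiltonians in \Ref{Anc2003}, and it is compatible with the general parallel-frame framework of \Ref{Anc2008}.
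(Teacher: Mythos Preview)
Your proposal is correct and follows essentially the same route as the paper: the paper's justification is just the one-line computation $\Hop(0)=-\ad(\q)\Dxinv(0)=-\ad(\q)A=J\q$ in the sentence preceding the proposition, together with the already-established identity $\Rop(J\q)=\q_x$ for the $\Eop$ part, which is exactly what you do. Your final paragraph, flagging the need to apply the cokernel prescription \eqref{Dxinv} only when $\Hop$ acts on the genuinely zero covector and not to the incidental $\Dxinv([\q,\q])$ inside $\Hop(\q)$, is a valid caveat that the paper leaves implicit.
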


As a final remark,
there is a direct link between the Hamiltonians \eqref{aff.Ham.n}
and the variable $h_\pa$ given in terms of $h^\pe$ by equation \eqref{hpar}.
Specifically, consider
$\Dx\brack{A,h_\pa}_\gsp = \enorminv\brack{A,[\q,\ad(A)h^\pe]}_\gsp = \enorminv\brack{J\q,Jh^\pe}_\msp = \enorminv\brack{\q,h^\pe}_\msp$.
This shows that
$\Dxinv\brack{h^\pe_{(n)},\q}_\msp= \brack{A,h_\pa^{(n)}}_\gsp$,
where
\begin{equation}
h_\pa^{(n)}:= \enorminv \Dxinv[\q,\ad(A)h^\pe_{(n)}]
\end{equation}
based on equation \eqref{hpar}.
Hence the following relationship holds. 

\begin{rem}
The Hamiltonians in the hierarchy \eqref{aff.hierarchy} 
can be expressed as
\begin{equation}
H^{(n)} = \frac{1}{n}\int_C \brack{A,h_\pa^{(n)}}_\gsp\; dx,
\quad
n=1,2,\ldots, 
\end{equation}
where
\begin{equation}
h_\pa^{(n-2)} =h_\pa,
\quad
n\geq 2
\end{equation}  
holds for the $+n$ flow. 
\end{rem}

The $+n$ flow in the hierarchy \eqref{aff.hierarchy} for $n\geq2$ 
determines a non-stretching curve flow
\begin{equation}\label{curveflow}
\curvflow_t = \Pvec_\pe^{(n-1)},
\quad
n=2,3,\ldots
\end{equation}
arising from the soldering relation
\begin{equation}\label{solder.curveflow}
\cofra\hook \Pvec_\pe^{(n-1)} = h_\pe + h_\pa,
\quad
n\geq 2, 
\end{equation}
with $h_\pe$ and $h_\pa$ being given in terms of $h^\pe_{(n-2)}$
through equations \eqref{hperp}, \eqref{hpar}, and \eqref{Gparallel.floweqn}.
Note that the presence of $\Rop^2$ in equation \eqref{Gparallel.floweqn}
accounts for why $h_\pe$ is related to $h^\pe_{(n-2)}$ rather than to $h^\pe_{(n-1)}$ or $h^\pe_{(n)}$. 

It will be useful geometrically to split the motion of the curve into 
components $(\curvflow_t)_\pe = \Pvec_\pe^{(n-1)}$ and $(\curvflow_t)_\pa = \Pvec_\pa^{(n-1)}$ 
at every point along the curve. 
Then the $\pe$ component is given by 
\begin{equation}\label{aff.hierarchy.hperp}
\cofra\hook \Pvec_\pe^{(n-1)} = h_\pe = \enorminv J\inv h^\pe = \enorminv J\inv h^\pe_{(n-2)} = \enorminv {\Rop^{*}}^{n-2}(\q),
\quad
n\geq 2,
\end{equation}
where
\begin{equation}\label{adjRop}
\Rop^*= J\inv\Hop = J\inv(\Dx -\ad(\q)\Dxinv\ad(\q))
= (\Dx -\ad(J\q)\Dxinv\ad(J\q))J\inv
\end{equation}
is the adjoint of the recursion operator
\begin{equation}\label{Rop}
\Rop=(\Dx -\ad(\q)\Dxinv \ad(\q))J\inv
=J\inv(\Dx -\ad(J\q)\Dxinv \ad(J\q)) . 
\end{equation}
The tangential component is determined from the normal component by 
\begin{equation}\label{aff.hierarchy.hpar}
\cofra\hook \Pvec_\pa^{(n-1)} = h_\pa = -\Dxinv(\ad(\q) h_\pe) = -\enorminv\Dxinv(\ad(\q){\Rop^{*}}^{n-2}(\q)),
\quad
n\geq 2 . 
\end{equation}
These relations \eqref{aff.hierarchy.hpar} and \eqref{aff.hierarchy.hperp}
allow $P^{(n-1)}$ to be expressed in terms of $\curvflow_x$ and its $x$-covariant derivatives,
as will be shown explicitly in \secref{sec:VFE}.

\subsection{Relationship to the Sym-Pohlmeyer construction}

The Sym-Pohlmeyer construction of non-stretching curve flows
uses the scaled arclength parameter $s= \enorminv x$
as shown by the relation \eqref{s}. 
Thus, the tangent vector of $\curvflow(s,t)$ with respect to $s$ is represented by 
$\cofra\hook\curvflow_s = (0,A)$
in a $G$-parallel frame. 
Consequently,
the covariant derivative of a vector $Y$ in $\aff(G,H)\simeq\gsp$
with respect to $\curvflow_s$ has the representation 
\begin{equation}\label{Gparallel.conx.curve}
\cofra\hook\nabla_s Y = (0,\p_s y+\enorminv\ad(\q)y)
\end{equation}
with $y=\cofra\hook Y$. 
These soldering relations correspond to the identifications provided by
$\Ad(\phi_0)$ in the Sym-Pohlmeyer construction,
with $\phi_0=\phi|_{\lambda=0}$
where $\phi(s,t,\lambda)$ is the $G$-valued spectral function. 
Specifically, 
if $y=\phi_0 Y\phi_0\inv$ then
$\nabla_s Y =(\phi_0\inv y\phi_0)_s = \phi_0\inv(\p_s y+ [y,\Q])\phi_0
= \phi_0\inv(\p_s y-\ad(\Q)y)\phi_0$
using $\phi_{0\,s} = q \phi_0$ from the linear isospectral system \eqref{spectraleq} and \eqref{U.akns},
whereby 
\begin{equation}\label{SP.conx.curve}
\phi_0(\nabla_s Y)\phi_0\inv =\p_s y-\ad(\Q)y . 
\end{equation}
Comparison of equations \eqref{SP.conx.curve} and \eqref{Gparallel.conx.curve}
establishes the correspondence
\begin{equation}\label{q.Q.rel}
Q=-\enorminv q . 
\end{equation}

Likewise,
the covariant derivative of a vector $Y$ with respect to $\curvflow_t$ is given by
\begin{equation}\label{Gparallel.conx.flow}
\cofra\hook\nabla_t Y = (0,\p_t y+\ad(w_\pe+w_\pa)y), 
\end{equation}
while in the Sym-Pohlmeyer construction,
\begin{equation}\label{SP.conx.flow}
\phi_0(\nabla_t Y)\phi_0\inv =\p_t y-\ad(V^{(0)})y . 
\end{equation}
This yields the correspondence 
\begin{equation}\label{conx.flow.rel}
V^{(0)}= -(w_\pe + w_\pa) . 
\end{equation}

Finally,
a comparison of the flow vector $\curvflow_t$ itself in these two formulations
$\cofra\hook\curvflow_t = (0,h_\pa+h_\pe)$
and $\phi_0\curvflow_t \phi_0\inv =V^{(1)}$
gives 
\begin{equation}\label{conx.flow.rel2}
V^{(1)}=h_\pa+h_\pe . 
\end{equation}

Hence, the following correspondence holds. 

\begin{prop}\label{prop:SP.Gparallel.correspond}
The $G$-parallel frame structure equations \eqref{flow.mpar}--\eqref{flow.hperp} 
and the isospectral flow equations \eqref{ZCh0}--\eqref{ZCh1}, \eqref{ZCm0}--\eqref{ZCm1} 
are related by the transformation
\begin{equation}\label{SP.Gparallel.ident}
x = \enorminv s,
\quad
q = -\enorm Q,
\quad
w_\pe = -V^{(0)}_{\msp},
\quad
w_\pa = -V^{(0)}_{\hsp},
\quad
h_\pe = V^{(1)}_{\msp},
\quad
h_\pa = V^{(1)}_{\hsp} .
\end{equation}
\end{prop}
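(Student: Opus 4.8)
The plan is to observe that Proposition~\ref{prop:SP.Gparallel.correspond} merely packages the identifications already extracted in the preceding comparison of covariant derivatives, and then to verify that under this dictionary the four $G$-parallel frame structure equations \eqref{flow.mpar}--\eqref{flow.hperp} are carried term by term onto the four isospectral equations \eqref{ZCh0}--\eqref{ZCh1}, \eqref{ZCm0}--\eqref{ZCm1}. I would proceed in three steps: assemble the dictionary, split the Sym--Pohlmeyer coefficients $V^{(0)}$ and $V^{(1)}$ into their $\msp$- and $\hsp$-parts, and then substitute.

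For the dictionary, matching the covariant derivative along the curve in a $G$-parallel frame, $\cofra\hook\nabla_s Y=(0,\p_s y+\enorminv\ad(\q)y)$ from \eqref{Gparallel.conx.curve}, with its Sym--Pohlmeyer realization $\phi_0(\nabla_s Y)\phi_0\inv=\p_s y-\ad(\Q)y$ from \eqref{SP.conx.curve}, where $\phi_0$ is determined from the isospectral system \eqref{spectraleq} at $\lambda=0$, forces $\Q=-\enorminv\q$, equivalently $\q=-\enorm\Q$; this is \eqref{q.Q.rel}. The identical comparison along the flow gives $V^{(0)}=-(w_\pe+w_\pa)$, which is \eqref{conx.flow.rel}, and comparing the flow vectors directly, $\cofra\hook\curvflow_t=(0,h_\pa+h_\pe)$ against $\phi_0\curvflow_t\phi_0\inv=V^{(1)}$, gives $V^{(1)}=h_\pa+h_\pe$, which is \eqref{conx.flow.rel2}. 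Since the parallel-frame splitting places $w_\pe,h_\pe$ in $\msp$ and $w_\pa,h_\pa$ in $\hsp$, projecting $V^{(0)}$ and $V^{(1)}$ onto $\msp$ and $\hsp$ separates the last two identities into $w_\pe=-V^{(0)}_{\msp}$, $w_\pa=-V^{(0)}_{\hsp}$, $h_\pe=V^{(1)}_{\msp}$, $h_\pa=V^{(1)}_{\hsp}$. Together with the arclength scaling $x=\enorminv s$ from \eqref{s}, and hence $\Dx=\enorm D_s$, this accounts for every entry of \eqref{SP.Gparallel.ident}.

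Next I would substitute this invertible dictionary into the frame structure equations, using $\ad(A)=J$ throughout. Equation \eqref{flow.hpar} becomes $\enorm D_s V^{(0)}_{\hsp}=\enorm[\Q,V^{(0)}_{\msp}]$, which after cancelling the common overall factor is \eqref{ZCh0}; equation \eqref{flow.mpar} becomes, in the same way, \eqref{ZCh1}; equation \eqref{flow.hperp} becomes $\Q_t=\Dx V^{(0)}_{\msp}-[\Q,V^{(0)}_{\hsp}]$, i.e.\ \eqref{ZCm0}; and equation \eqref{flow.mperp} becomes $\Dx V^{(1)}_{\msp}-[\Q,V^{(1)}_{\hsp}]-[A,V^{(0)}_{\msp}]=0$, i.e.\ \eqref{ZCm1}, the explicit factor $\enorm$ in its $[A,w_\pe]$ term being exactly what is needed to keep that term at the same order as the other two so that a single overall $\enorm$ divides out. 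Because each replacement in the dictionary is invertible, this establishes the correspondence in both directions, which is the content of the proposition.

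The only real obstacle I anticipate is careful bookkeeping of the scalar $\enorm$ and of the signs: it appears explicitly in the frame equations only through the $[A,w_\pe]$ term of \eqref{flow.mperp}, whereas it enters the isospectral side implicitly through $x=\enorminv s$ together with $\q=-\enorm\Q$, so one has to check that all these contributions combine to leave the isospectral equations in exactly the normalization used in \secref{sec:FKmethod}. It is also worth noting, for completeness, that the algebraic type \eqref{curve} of the curve flow, $\cofra\hook\curvflow_x=\enorm(0,A)$, corresponds on the isospectral side to the top-order data $V^{(2)}_{\msp}=0$, $V^{(2)}_{\hsp}=c_2A$ that is forced in the quadratic Lax matrix, so the two formulations carry exactly the same independent content and nothing is lost or gained under the transformation. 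With these checks in place the proposition follows.
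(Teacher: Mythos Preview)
Your proposal is correct and follows essentially the same approach as the paper. The paper derives the identifications \eqref{q.Q.rel}, \eqref{conx.flow.rel}, \eqref{conx.flow.rel2} by comparing the covariant derivatives \eqref{Gparallel.conx.curve}--\eqref{SP.conx.curve}, \eqref{Gparallel.conx.flow}--\eqref{SP.conx.flow}, and the flow vectors, and then simply states the proposition as the $\msp/\hsp$-split of these identifications; your proposal reproduces this derivation and then adds the explicit term-by-term substitution into \eqref{flow.mpar}--\eqref{flow.hperp} to check that the four isospectral equations emerge, which the paper leaves implicit.
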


The main advantage of the $G$-parallel frame formulation
over the Sym-Pohlmeyer construction is that it encodes
an explicit bi-Hamiltonian structure as well as a recursion operator. 
These structures are hidden in the Lax pair approach
underlying the Sym-Pohlmeyer construction. 

The transformation \eqref{SP.Gparallel.ident} now leads to the following
important correspondence result. 

\begin{thm}\label{thm:SP.Gparallel.hierarchies}
The merged hierarchy \eqref{aff.hierarchy} of tri-Hamiltonian flows
on the Hasimoto variable $\q$ 
coincides with the hierarchy of isospectral flows \eqref{hierarchy.isoflow}
on the isospectral potential $\Q$
under the transformation \eqref{SP.Gparallel.ident}. 
In particular,
the $+2$ and $+3$ flows \eqref{Gparallel.nls} and \eqref{Gparallel.mkdv}
respectively correspond to the $H$-invariant NLS and mKdV equations 
\eqref{FK.nls} and \eqref{FK.mkdv}.
Moreover,
the non-stretching curve flow \eqref{curveflow} determined by the $+n$ flow
for $n\geq 2$ 
coincides with the Sym-Pohlmeyer curve flow
associated with the corresponding isospectral flow.
\end{thm}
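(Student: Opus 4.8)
The plan is to build on Proposition~\ref{prop:SP.Gparallel.correspond}, which already places the $G$-parallel frame structure equations and the isospectral flow equations in term-by-term correspondence under the substitution \eqref{SP.Gparallel.ident}, and to upgrade this into an identification of the two full hierarchies together with their associated curve flows. The key step is to compare the recursion operators. Since $x=\enorminv s$ gives $\Dx=\enorm D_s$ and $\Dxinv=\enorminv D_s^{-1}$, while $\q=-\enorm\Q$ gives $\ad(\q)=-\enorm\ad(\Q)$, a short computation yields $\Hop=\Dx-\ad(\q)\Dxinv\ad(\q)=\enorm\bigl(D_s-\ad(\Q)D_s^{-1}\ad(\Q)\bigr)$, and since $J=\ad(A)$ is unchanged, the geometric recursion operator \eqref{Rop} and the isospectral recursion operator \eqref{iso.Rop} satisfy $\Rop_{\text{geom}}=\Hop J\inv=\enorm\,\Rop_{\text{iso}}$. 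The two seed flows also match: the phase seed $J\q=-\enorm J\Q$ is proportional to the isospectral $0$-flow vector $J\Q$, and the translation seed $\q_x$ is proportional to $\Rop_{\text{iso}}(J\Q)=\Q_s$. Applying $\Rop^n$ then shows that the flow vector $h^\pe_{(n)}=\Rop_{\text{geom}}^n(J\q)$ of the merged hierarchy \eqref{aff.hierarchy} is, for each $n$, a fixed constant multiple of the isospectral flow vector $\Rop_{\text{iso}}^n(J\Q)$ of \eqref{hierarchy.isoflow}; absorbing this $n$-dependent constant into a rescaling of $t$ --- a freedom the isospectral flows already carry, just as in passing from \eqref{nls.flow} to \eqref{FK.nls} --- identifies the hierarchies $+n$ flow by $+n$ flow. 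For $n=2$ the constant is $1$, so the $+2$ flow \eqref{Gparallel.nls} maps exactly to the Fordy-Kulish NLS system \eqref{FK.nls}, which one may also verify directly by comparing right-hand sides using $\ad(\q)^2J\q=[\q,[\q,J\q]]$; for $n=3$ the $+3$ flow \eqref{Gparallel.mkdv} maps, after a rescaling of $t$, to the mKdV system \eqref{FK.mkdv}.

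For the curve flows I would argue as follows. The explicit soldering \eqref{soldering.e.w} together with the tangency condition $\cofra\hook\curvflow_s=(0,A)$ forces $\curvflow_s=\Ad(\psi_\cofra)A$, which matches the Sym-Pohlmeyer tangent vector $\gamma_s=\phi_0\inv A\phi_0$ of \eqref{SP.tangent.flow.vectors} precisely when $\psi_\cofra=\phi_0\inv$; this identification is consistent with $\conx\hook\curvflow_s=(\psi_\cofra\inv\Dx\psi_\cofra,0)=(\q,0)$ and $\phi_{0\,s}=\Q\phi_0$ once \eqref{q.Q.rel} is used. Then $\cofra\hook\curvflow_t=(0,h_\pe+h_\pa)$ gives $\curvflow_t=\Ad(\psi_\cofra)(h_\pe+h_\pa)=\phi_0\inv(h_\pe+h_\pa)\phi_0$, which coincides with the Sym-Pohlmeyer flow vector $\gamma_t=\phi_0\inv V^{(1)}\phi_0$ via the identification $V^{(1)}=h_\pe+h_\pa$ of \eqref{conx.flow.rel2}. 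Hence $\curvflow(\cdot,t)$ and $\gamma(\cdot,t)$ are the same curve up to the arclength reparametrization $x=\enorminv s$, and therefore the non-stretching curve flow \eqref{curveflow}--\eqref{solder.curveflow} determined by the $+n$ flow for $n\geq2$ coincides with the Sym-Pohlmeyer curve flow of the corresponding isospectral flow; rewriting $h_\pe,h_\pa$ through \eqref{aff.hierarchy.hperp}--\eqref{aff.hierarchy.hpar} in terms of $\curvflow_x$ and its covariant derivatives $\nabla_x$ from \eqref{Gparallel.conx.curve} recovers, for $n=2,3$, the explicit bi-normal equation \eqref{SP.nls.curveflow} and axial-motion equation \eqref{SP.mkdv.curveflow}.

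The substantive content is essentially packaged already in Proposition~\ref{prop:SP.Gparallel.correspond}, so the only place that needs care --- and the main (mild) obstacle --- is the bookkeeping of the scale factors $\enorm$, $\efac$ and of the $n$-dependent time rescalings, so that ``coincide'' is read in the natural sense that each flow of one hierarchy is a constant rescaling of the matching flow of the other once the seed flows $J\q\leftrightarrow J\Q$, $\q_x\leftrightarrow\Q_s$ and the arclength identification $x=\enorminv s$ are aligned. Concretely this reduces to confirming the operator identity $\Rop_{\text{geom}}=\enorm\,\Rop_{\text{iso}}$ under \eqref{SP.Gparallel.ident}, which the computation above carries out.
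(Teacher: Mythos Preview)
Your proposal is correct and follows essentially the same route as the paper: the theorem is presented there as an immediate consequence of Proposition~\ref{prop:SP.Gparallel.correspond} together with the preceding identification of the soldering $\Ad(\psi_\cofra)$ with the Sym--Pohlmeyer map $\Ad(\phi_0\inv)$, and the paper simply records afterwards that the Hamiltonian operators satisfy $\Hop\Jop=-\Rop^2$ ``up to an overall scaling factor.'' Your contribution is to make that scaling factor explicit --- tracking $\Dx=\enorm D_s$, $\ad(\q)=-\enorm\ad(\Q)$, hence $\Rop_{\text{geom}}=\enorm\,\Rop_{\text{iso}}$ --- and to verify directly that the seed flows and the $n=2,3$ cases line up, which the paper asserts but does not spell out; your handling of the curve-flow identification via $\psi_\cofra=\phi_0\inv$ is likewise what the paper has in mind when it compares \eqref{Gparallel.conx.curve}--\eqref{Gparallel.conx.flow} with \eqref{SP.conx.curve}--\eqref{SP.conx.flow}.
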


In this correspondence,
the operators \eqref{Hop} and \eqref{Jop} which provide the bi-Hamiltonian structure
are related to the isospectral recursion operator \eqref{iso.Rop} by
\begin{equation}
\Hop\Jop = -\Rop^2
\end{equation}
up to an overall scaling factor.

\subsection{Covariants of non-stretching curves}

Each of the isospectral flow equations \eqref{aff.hierarchy}
is invariant under rigid (constant) transformations $\Ad(G_\pa)$
on the dynamical variable $\q(x,t)$,
where $G_\pa=H\subset G$ is the equivalence group of the $G$-parallel frame.
The explicit form of these transformations is given by  
\begin{equation}\label{equivgroup.action}
\q \to \Ad(\g)\q = \g\q\g\inv,
\quad
\g\in H\subset G. 
\end{equation}

Since the Cartan connection matrix of the $G$-parallel frame is represented by $\q$,
this variable encodes differential covariants of the curve $\curvflow(x)$,
namely, 
$\q$ is invariantly defined by $\curvflow(x)$ up to the rigid group action \eqref{equivgroup.action}.
In particular,
once any $G$-parallel frame is fixed through the soldering relation \eqref{solder}
by specifying an orthonormal basis for $\gsp\simeq\msp_\aff$,
the components of $\q$ in this basis are scalar covariants, 
and likewise the components of any $x$-derivatives of $\q$ are scalar differential covariants.
These covariants carry geometrical information
related to curvature invariants of the curve $\curvflow(x)$.
Specifically, any scalar constructed only in terms of $\q$ and $x$-derivatives of $\q$
by use of the Lie bracket and the inner product on $\gsp$
will be a curvature (differential) invariant. 

This will be illustrated more fully in \secref{sec:VFE}
when the geometrical curve flows arising from the isospectral flows \eqref{aff.hierarchy}
are obtained.

\section{Generalized vortex filament equations and Heisenberg spin models}\label{sec:VFE}

We will now derive the geometric curve flows
defined through Theorem~\ref{thm:iso.flow.eqs}
for the hierarchy of isospectral flows \eqref{aff.hierarchy}
given in Theorem~\ref{thm:iso.flow.hierarchy}.
From Theorem~\ref{thm:SP.Gparallel.hierarchies}, 
these curve flows coincide with Sym-Pohlmeyer curves
after a scaling of the arclength parameter. 
In particular,
the resulting curve flows that arise from the NLS and mKdV isospectral flows
will be shown to be close analogs of the equations in $\Rnum^3$
for the motion of a vortex filament and its axial generalization,
which will be formulated using
the tangent vector, the normal vector, and a generalized bi-normal vector
along the curves.
We will then show how these generalized vortex filament equations
give rise to associated Heisenberg spin models
for the tangent vector of the underlying curve flows.
Finally, we will present a geometrical recursion operator for
the generalized vortex filament equations as well as the associated Heisenberg spin models.

The derivation will use only some of the structure of 
an affine Hermitian symmetric space $\aff(G,H)$.
Specifically,
as a manifold, $\aff(G,H)$ is isomorphic to a Euclidean space of dimension $\dim(\gsp)$,
namely $\aff(G,H)\simeq \Rnum^{\dim(\gsp)}$;
and $\aff(G,H)\simeq \gsp$ carries the Lie bracket structure of
a Hermitian symmetric Lie algebra.

The role of a $G$-parallel framing will be seen to be essential
in providing a simple geometrical way to relate the isospectral flow equations
to geometrical vectors arising from curves in $\aff(G,H)$.
Looked at another way,
the Cartan connection-matrix variable $\q$ provides a generalized Hasimoto variable,
such that the relationship between
the isospectral flows on $\q$ and the geometric curve flows 
is a generalized Hasimoto transformation.

\subsection{Geometric preliminaries}\label{subsec:geom.curve}

Given a curve $\curvflow(x)$ in $\aff(G,H)$, 
the tangent vector is $\curvflow_x$, 
and hence $\nabla_x = g(\curvflow_x,\grad)$ 
is the directional derivative along the curve.
For $G$-parallel framed curves of algebraic type \eqref{curve}, 
the directional derivative $\nabla_x$ of vector fields 
corresponds to the Lie-algebra element in $\gsp$ 
given by the soldering relation \eqref{conx}:
\begin{equation}\label{solder.Dx}
\cofra\hook\nabla_x Y =  \Dx(\cofra\hook Y) + [(\q,0),\cofra\hook Y]
=  \Dx(0,y) + [(\q,0),(0,y)] = (0,\partial_x y +\ad(\q)y), 
\end{equation}
where $y=\cofra\hook Y$ is the Lie-algebra element corresponding to the vector $Y$. 
In particular, 
\begin{align}
\cofra\hook\nabla_x\curvflow_x & = \enorm (0,[\q,A]) =-\enorm (0,J\q),
\label{solder.Jq}
\\
\cofra\hook\nabla_x^2\curvflow_x & = -\enorm (0,J\q_x + [\q,J\q]),
\label{solder.DxJqplusqJq}
\end{align}
and so on for higher derivatives. 
Namely, $\nabla_x$ corresponds to $\partial_x + \ad(\q)$. 

Similarly, the soldering of the flow vector $\curvflow_t$ is obtained from
the representation \eqref{cofra.flow.vars} and the expressions \eqref{hperp}--\eqref{hpar}:
\begin{equation}\label{flow.rep}
\cofra\hook\curvflow_t = (0,h_\pa+h_\pe) = -\enorminv (0,Jh^\pe-\Dxinv[\q,Jh^\pe]) . 
\end{equation}

Any vector in $\aff(G,H)$ has an action on the vector space $\aff(G,H)$
through the Lie bracket on $\gsp$.
This action will be denoted by $\ad_\gsp(\cdot)$. 
For vectors $Y,Z\in\aff(G,H)$,
let $\cofra\hook Y=(0,y)$ and $\cofra\hook Z=(0,z)$
in the representation \eqref{aff.rep}.
Then $\ad_\gsp(\cdot)$ is given by 
\begin{equation}\label{ad.action}
\cofra\hook\ad_\gsp(Z)Y := (0,[z,y]) 
\end{equation}
and consequently 
\begin{equation}\label{adsq.action}
\cofra\hook\ad_\gsp(Z)^2Y := (0,[z,[z,y]]) . 
\end{equation}

In particular, $\T$ has the action
\begin{equation}\label{adT.action}
\cofra\hook\ad_\gsp(\curvflow_x)Y := \enorm (0,[A,y]) = \enorm (0,Jy) ,
\quad
\cofra\hook\ad_\gsp(\curvflow_x)^2Y := \tfrac{1}{\efac} ([0,J^2y]) = - \tfrac{1}{\efac} (0,(y)_\msp), 
\end{equation}
where $(y)_\msp$ denotes the projection of $y$ into $\msp$. 
If $Y$ belongs to the subspace in $\aff(G,H)$ corresponding to $\msp$ in $\gsp$,
namely $y\in\msp$, 
then there is an inverse for the action of $\T$:
\begin{equation}\label{adinvT.action}
\cofra\hook\ad_\gsp(\curvflow_x)\inv Y = \enorminv (0,J\inv y) = - \enorminv (0,Jy)
\end{equation}
since $J^2=-\id_\msp$. 

The action \eqref{adT.action} of $\T$ provides a geometrical way to characterize 
the algebraic property \eqref{curve} distinguishing the curve flows $\curvflow(x,t)$
in Theorem~\ref{thm:Gparallel.frame}. 

\begin{rem}
Non-stretching curve flows $\curvflow(x,t)$ of algebraic type \eqref{curve} 
have the geometrical characterization that the action of $\ad_\gsp(\curvflow_x)^2$
on $T_\curvflow\aff(G,H)\simeq\aff(G,H) \simeq \msp\oplus\hsp$
has eigenvalues $-1$ and $0$ on the respective subspaces in $T_\curvflow\aff(G,H)$
corresponding to $\msp$ and $\hsp$. 
\end{rem}

\subsection{Tangent and normal vectors, and curvatures}

The $x$-derivative of the tangent vector 
\begin{equation}\label{T}
\T=\curvflow_x
\end{equation}  
along the curve $\curvflow(x)$ 
defines the principal normal vector $\nabla_x\T=\nabla_x\curvflow_x$. 
The normal vector of the curve $\curvflow(x)$ is then given by
\begin{equation}\label{N}
\N = \tfrac{1}{\kappa} \nabla_x\T, 
\end{equation}  
where
\begin{equation}
\kappa = g(\N,\nabla_x\T)
\end{equation}    
defines the principal curvature scalar of the curve. 
This scalar is related to the norm of $\q$ by 
$\kappa^2 = g(\nabla_x\T,\nabla_x\T)
= \brack{\cofra\hook\nabla_x\curvflow_x,\cofra\hook\nabla_x\curvflow_x}_\aff
=\tfrac{1}{\efac} \brack{(0,J\q),(0,J\q)}_\aff
=-\tfrac{1}{\efac} \Kill{J\q,J\q}
= -\tfrac{1}{\efac} \Kill{\q,\q}$.
Hence,
\begin{equation}
|\kappa| = \enorm \sqrt{-\Kill{\q,\q}}
\end{equation}
which is a direct generalization of the relationship between
the Hasimoto variable and the principal curvature
for non-stretching curves in $\Rnum^3$. 

A vector that is orthogonal to both $\T$ and $\N$ can be constructed algebraically
by using $\ad_\gsp(\T)\inv$.
Note 
\begin{equation}
g(\ad_\gsp(\T)\inv \N,\T) = 0,
\quad
g(\ad_\gsp(\T)\inv \N,\N) = 0,
\quad
g(\ad_\gsp(\T)\inv \N,\ad_\gsp(\T)\inv \N) =1
\end{equation}
due to $\ad$-invariance of the inner product on $\aff(G,H)$.
Therefore,
\begin{equation}
\B := \ad_\gsp(\T)\inv \N
\end{equation}  
can be viewed as defining a bi-normal vector.
The vectors $\T,\N,\B$ correspond to the Lie-algebra elements
\begin{align}
\cofra\hook\T & = \enorm (0,A),
\label{solder.T}
\\
\cofra\hook\N & = -\tfrac{1}{\kappa\enorminv} (0,J\q),
\label{solder.N}
\\
\cofra\hook\B & = \tfrac{1}{\kappa} (0,J^2\q) = -\tfrac{1}{\kappa} (0,\q) . 
\label{solder.B}
\end{align}
This triple of vectors is a direct generalization of
the vectors in a Frenet frame for non-stretching curves in $\Rnum^3$.
However, in general $\T,\N,\B$ do not span all of $\gsp\simeq\Rnum^{\dim(\gsp)}$. 

For comparison with the Sym-Pohlmeyer construction, 
we can use Proposition~\ref{prop:soldering} to obtain 
an explicit identification between vectors $Y$ in $T_x\aff(G,H)\simeq\aff(G,H)$
and Lie-algebra elements $\cofra\hook Y$ in $\msp_\aff\simeq\gsp$.
Specifically, 
let $\cofra\hook Y=(0,y)$, where $y\in\gsp$. 
Then we have 
$(0,y) = (0,\Ad(\psi_\cofra\inv)Y)$
with $\psi_\cofra(x)\in G$,
whereby 
\begin{equation}\label{aff.vector}
\begin{aligned}
& Y \in\aff(G,H) \\& 
\simeq \Ad(\psi_\cofra)y \in\gsp . 
\end{aligned}
\end{equation}
Moreover, 
derivatives $\nabla_x$ of vector fields $Y$ in $\aff(G,H)$
can be identified with derivatives $\Dx$ of functions $\Ad(\psi_\cofra)y$ in $\gsp$: 
\begin{equation}\label{aff.grad.vector}
\begin{aligned}
& \nabla_x Y \in\aff(G,H) \\& 
\simeq \Dx(\Ad(\psi_\cofra)y) \in\gsp . 
\end{aligned}
\end{equation}
This is established by noting, on one hand, 
$\cofra\hook\nabla_x Y =  (0,\Ad(\psi_\cofra\inv)\nabla_x Y)$
from equation \eqref{aff.vector},
while on the other hand, 
$\cofra\hook\nabla_x Y =  (0,\partial_x y +\ad(\psi_\cofra\inv \partial_x\psi_\cofra)y)
= (0,\Ad(\psi_\cofra\inv)\Dx(\Ad(\psi_\cofra)y))$
through equation \eqref{solder.Dx} combined with equation \eqref{Gparallel.conx}. 

Thus, the vectors $\T,\N,\B$ can be represented as 
\begin{equation}\label{T.N.B}
\T \simeq \enorm \Ad(\psi_\cofra) A, 
\quad
\N \simeq -\tfrac{1}{\kappa\enorminv}\Ad(\psi_\cofra)J\q, 
\quad
\B \simeq -\tfrac{1}{\kappa} \Ad(\psi_\cofra)\q . 
\end{equation}

We will now derive the geometrical curve flows defined by
the NLS and mKdV isospectral flows.

\subsection{NLS curve flow and generalized vortex filament equation}

The NLS isospectral flow \eqref{Gparallel.nls} is generated by $h^\pe = -J\q$,
with $h_\pa = 0$,
as given by equation \eqref{nls.flow.vars}. 
Thus, from expression \eqref{flow.rep}, 
\begin{equation}
\cofra\hook\curvflow_t =-\enorminv (0,\q) . 
\end{equation}
Now apply the relation \eqref{adinvT.action} with $Y=\curvflow_{xx}$:
\begin{equation}\label{adinvT.N}
\cofra\hook\ad_\gsp(\curvflow_x)\inv\nabla_x\curvflow_x = -(0,\q) . 
\end{equation}
This expression directly yields
\begin{equation}
\cofra\hook\curvflow_t = \enorminv \cofra\hook\ad_\gsp(\curvflow_x)\inv\nabla_x\curvflow_x
\end{equation}
which gives the equation of motion
\begin{equation}\label{nls.aff.curve}
\enorm \curvflow_t = \ad_\gsp(\curvflow_x)\inv\nabla_x\curvflow_x . 
\end{equation}
After scaling $t$,
this equation of motion can be expressed as
\begin{equation}\label{nls.aff.VFE}
\curvflow_t = \kappa\ad_\gsp(\T)\inv \N = \kappa\B
\end{equation}
in terms of the tangent vector, normal vector, and the principal curvature scalar.

Hence the NLS isospectral flow corresponds to a bi-normal equation of motion \eqref{nls.aff.VFE}
for a curve $\curvflow$ in $\aff(G,H)$.
This geometric flow equation is analogous to the vortex filament equation in $\Rnum^3$ \cite{Has}.
In the form \eqref{nls.aff.curve},
it coincides with the Sym-Pohlmeyer curve flow \eqref{SP.nls.curveflow}
after the scaling \eqref{s} of $x$.

\subsection{mKdV curve flow and generalized vortex filament axial equation}

The mKdV isospectral flow \eqref{Gparallel.mkdv} is generated by $h^\pe = -\q_x$.
This flow has $h_\pa = -\enorminv \tfrac{1}{2} [\q,J\q]$,
from equation \eqref{mkdv.flow.vars}. 
Thus 
\begin{equation}
\cofra\hook\curvflow_t =\enorminv (0,J\q_x -\tfrac{1}{2}[\q,J\q]) . 
\end{equation}
To relate this expression to $x$-derivatives of $\curvflow_x$,
apply the relation \eqref{adsq.action} with $Y=\nabla_x^2\curvflow_x$ and $Z=\curvflow_x$:
$\cofra\hook\ad_\gsp(\curvflow_x)^2\nabla_x^2\curvflow_x =
\tfrac{1}{\enorminv^3} (0,J\q_x)$.
Then this expression can be combined with expression \eqref{solder.DxJqplusqJq}
to get
\begin{equation}
\cofra\hook(1+\efac\ad_\gsp(\curvflow_x)^2)\nabla_x^2\curvflow_x =
 -\enorm (0,[\q,J\q]) . 
\end{equation}
Hence,
\begin{equation}
\cofra\hook\curvflow_t = \tfrac{1}{2}\efac \cofra\hook (1+3\efac\ad_\gsp(\curvflow_x)^2)\nabla_x^2\curvflow_x
\end{equation}
which gives the equation of motion
\begin{equation}
\tfrac{1}{\efac} \curvflow_t = \tfrac{1}{2}(1+3\efac\ad_\gsp(\curvflow_x)^2)\nabla_x^2\curvflow_x . 
\end{equation}
After scaling $t$,
this equation of motion can be expressed as
\begin{equation}\label{mkdv.aff.VFE.alt}
\curvflow_t = \tfrac{1}{2}(1+3\efac\ad_\gsp(\T)^2)\nabla_x(\kappa\N)
\end{equation}
in terms of the tangent vector, normal vector, and the principal curvature scalar.

There is an alternative expression for the equation of motion \eqref{mkdv.aff.VFE.alt}. 
Consider the relation \eqref{adsq.action} with $Y=\curvflow_x$ and $Z=\nabla_x\curvflow_x$:
\begin{equation}
\cofra\hook \ad_\gsp(\nabla_x\curvflow_x)^2\curvflow_x
=\tfrac{1}{\sqrt{\efac}^3} (0,[J\q,[J\q,A]])
= -\tfrac{1}{\sqrt{\efac}^3} (0,[\q,J\q]) . 
\end{equation}
Then this expression combined with expression \eqref{solder.DxJqplusqJq}
yields
\begin{equation}
\cofra\hook \curvflow_t
= \efac\cofra\hook( -\nabla_x^2\curvflow_x +\tfrac{3}{2}\efac\ad_\gsp(\nabla_x\curvflow_x)^2\curvflow_x ) . 
\end{equation}
After scaling $t$,
this equation of motion can be expressed as
\begin{equation}\label{mkdv.aff.VFE}
\curvflow_t = -\nabla_x(\kappa\N) +\tfrac{3}{2}\efac\kappa^2 \ad_\gsp(\N)^2\T,
\quad
\kappa^2=g(\nabla_x\T,\nabla_x\T) 
\end{equation}
in terms of the tangent vector, normal vector, and the principal curvature scalar.

Thus the mKdV isospectral flow corresponds to the geometrical equation of motion \eqref{mkdv.aff.VFE}
for a curve $\curvflow$ in $\aff(G,H)$.
This flow equation is analogous to the equation in $\Rnum^3$
describing axial flow of vortex filament \cite{FukMiy}.
When it is expressed in the form
\begin{equation}\label{SP.mkdv}
\curvflow_t = -\nabla_x^2\curvflow_x +\tfrac{3}{2}\efac\ad_\gsp(\nabla_x\curvflow_x)^2\curvflow_x,
\end{equation}
this coincides with the Sym-Pohlmeyer curve flow \eqref{SP.mkdv.curveflow} 
after the scaling \eqref{s} of $x$.

\subsection{Spin vector models in $\gsp$}

Any geometric non-stretching curve flow $\curvflow_t=\Pvec$, 
with the vector $\Pvec$ given in terms of $\curvflow_x$ and its $x$-covariant derivatives, 
induces an associated flow on the tangent vector $\T=\curvflow_x$:
\begin{equation}\label{spinmodel}
\T_t = \nabla_x \Pvec,
\quad
|\T| =1, 
\end{equation}
where, for notational ease,
\begin{equation}\label{notation}
Y_t:=\nabla_t Y,
\quad
Y_x:=\nabla_x Y,
\quad
Y\cdot X:=g(Y,X),
\quad
|Y|^2 :=g(Y,Y) . 
\end{equation}
This induced flow equation \eqref{spinmodel}
can be naturally viewed as defining a spin vector model,
since $\T$ has a unit norm that is preserved in the flow. 

The spin vector model arising from the NLS curve flow \eqref{nls.aff.VFE}
is given by 
\begin{equation}\label{nls.aff.spinmodel}
\T_t = -\ad_\gsp(\T)\T_{xx} ,
\quad
|\T|=1 . 
\end{equation}
This equation describes a generalized Heisenberg spin model.

The mKdV curve flow \eqref{mkdv.aff.VFE} yields the spin vector model
\begin{equation}\label{mkdv.aff.spinmodel}
\T_t = -\T_{xxx} +3\efac\ad_\gsp(\T_{xx})\ad_\gsp(\T_x)\T,
\quad
|\T|=1,
\end{equation}
after the identity $\ad_\gsp(\T)[\T_x,\T_{xx}]=0$ has been used.

\subsection{Recursion structure for geometric curve flow equations and spin vector models in $\gsp$}

Theorems~\ref{thm:iso.flow.eqs} and~\ref{thm:SP.Gparallel.hierarchies}
yield a hierarchy of geometric non-stretching curve flows,
starting from generalized vortex filament equation \eqref{nls.aff.VFE}. 
This hierarchy inherits a geometric recursion operator 
from the isospectral recursion operator \eqref{Rop}. 

Recall from equations \eqref{curveflow}, \eqref{aff.hierarchy.hperp}, \eqref{aff.hierarchy.hpar}
that each curve flow $\curvflow_t = \Pvec^{(n-1)}$, $n=2,3,\ldots$, in the hierarchy
is determined by the variable $h_\pe = \cofra\hook \Pvec_\pe^{(n-1)}$
giving the motion of the curve projected into the subspace 
in $\aff(G,H)$ corresponding to $\msp\subset\gsp$. 
A recursion operator on this variable arises from its relationship \eqref{hperp}
to the variable $h^\pe$ that determines the isospectral flow corresponding to the curve flow.
Specifically, consider a variation $\delta h^\pe = \Rop(h^\pe)$.
The variation of the relation \eqref{hperp} yields 
$\delta h_\pe = -J\delta h^\pe = -J\Rop(h^\pe) = -J\inv\Hop J(h^\pe) = \Rop^*(h_\pe)$,
where $\Rop^*=J\inv\Hop$ is the adjoint of the isospectral recursion operator $\Rop$.
This relationship shows that $\Rop^*$ acts as a recursion operator on $h_\pe$.
Then the soldering equation 
\begin{equation}
\Rop^*(h_\pe) = \cofra\hook \Rop_\curvflow(\Pvec_\pe)
\end{equation}
determines a corresponding recursion operator on the vector $\Pvec_\pe$.

The explicit geometric form of this recursion operator $\Rop_\curvflow$
can be derived from expression \eqref{adjRop} for $\Rop^*$ as follows. 
Consider
$\Rop^*(h_\pe) = (\Dx -\ad(J\q)\Dxinv \ad(J\q))J\inv h_\pe$. 
First, the local term in $\Rop^*(h_\pe)$ can be expressed as
$\Dx(J\inv h_\pe) = J\inv \Dx h_\pe= J\inv (\Dx h_\pe +\ad(\q)h_\pe)$, 
since $\ad(\q)h_\pe$ belongs to $\hsp\subset\gsp$ which is annihilated by $J$.
Hence,
\begin{equation}\label{Rop.curve.locterm}
\Dx(J\inv h_\pe) = \cofra\hook (\ad_\gsp(\curvflow_x)\inv \nabla_x \Pvec_\pe)
\end{equation}
using the soldering relation \eqref{adinvT.action}.
Next, the nonlocal term in $\Rop^*(h_\pe)$ can be expressed as
$-\ad(J\q)\Dxinv (\ad(J\q)J\inv h_\pe) = -\ad(J\q)h_\pa$
through the relation \eqref{flow.mpar} between $h_\pa$ and $h_\pe$,
combined with the identity $\ad(J\q)J\inv h_\pe= -\ad(\q)h_\pe$
due to the properties of $J$. 
This yields 
$\Dx h_\pa = \ad(J\q)J\inv h_\pe =\cofra\hook (-\ad_\gsp(\curvflow_{xx})\ad_\gsp(\curvflow_x)\inv \Pvec_\pe)$
by the soldering relations \eqref{solder.Jq} and \eqref{adinvT.action},
while
$\Dx h_\pa = (\Dx h_\pa +\ad(\q)h_\pa)_\hsp =\cofra\hook \nabla^\hsp_x \Pvec_\pa$,
with $h_\pa = \cofra\hook \Pvec_\pa$.
Here
\begin{equation}\label{curve.h.der}
\nabla^\hsp_x = (1 +\ad_\gsp(\curvflow_x)^2)\nabla_x
\end{equation}
denotes the projection of $\nabla_x$ into 
the subspace in $\aff(G,H)$ corresponding to $\hsp\subset\gsp$. 
Consequently,
$\nabla^\hsp_x \Pvec_\pa = -\ad_\gsp(\curvflow_{xx})\ad_\gsp(\curvflow_x)\inv \Pvec_\pe$
which yields
\begin{equation}\label{Ppar}
\Pvec_\pa = -(\nabla^\hsp_x)\inv (\ad_\gsp(\curvflow_{xx})\ad_\gsp(\curvflow_x)\inv \Pvec_\pe) . 
\end{equation}  
Hence,
\begin{equation}\label{Rop.curve.nonlocterm}
-\ad(J\q)\Dxinv \ad(J\q)J\inv h_\pe = \cofra\hook (\ad_\gsp(\curvflow_{xx})\Pvec_\pa)
= \cofra\hook ( -\ad_\gsp(\curvflow_{xx})(\nabla^\hsp_x)\inv (\ad_\gsp(\curvflow_{xx})\ad_\gsp(\curvflow_x)\inv \Pvec_\pe) ) . 
\end{equation}  
Combining the nonlocal term \eqref{Rop.curve.nonlocterm} and the local term \eqref{Rop.curve.locterm}
now yields
\begin{equation}
\Rop^*(h_\pe) = \cofra\hook ( \ad_\gsp(\curvflow_x)\inv \nabla_x \Pvec_\pe - \ad_\gsp(\curvflow_{xx})(\nabla^\hsp_x)\inv (\ad_\gsp(\curvflow_{xx})\ad_\gsp(\curvflow_x)\inv \Pvec_\pe) ), 
\end{equation}  
and thus
\begin{equation}\label{Rop.curve}
\Rop_\curvflow = \ad_\gsp(\curvflow_x)\inv \nabla_x - \ad_\gsp(\curvflow_{xx})(\nabla^\hsp_x)\inv \ad_\gsp(\curvflow_{xx})\ad_\gsp(\curvflow_x)\inv
\end{equation}  
gives the corresponding geometric recursion operator.


A similar derivation produces a recursion operator
for the hierarchy of spin vector models
arising from the hierarchy of geometric non-stretching curve flows.

Each spin vector model $\T_t = \nabla_x \Pvec^{(n-1)}$, $n=2,3,\ldots$, in the hierarchy
is determined by the variable
$\cofra\hook \nabla_x \Pvec^{(n-1)} = (\Dx + \ad(\q))(h_\pe +h_\pa)$.
By means of the frame structure equations \eqref{flow.mpar}--\eqref{flow.mperp} and \eqref{hperp}, 
this variable can be expressed as
\begin{equation}
\cofra\hook \nabla_x \Pvec^{(n-1)} = \Dx h_\pe + \ad(\q)h_\pa
= -\enorm J w_\perp = -\enorminv \Rop(h^\pe)
\end{equation}
on which $\Rop$ itself acts as a recursion operator.
The geometric form of $\Rop$ can be obtained through the soldering equation 
\begin{equation}\label{solder.R.q.T}
\Rop(h^\pe) = \cofra\hook \Rop_\T(\Pvec^\pe), 
\end{equation}
where $\Pvec^\pe$ is a vector defined by $\cofra\hook \Pvec^\pe = h^\pe$
in analogy to $\Pvec_\pe$.
Now consider
$\Rop(h^\pe) =J\inv(\Dx -\ad(J\q)\Dxinv \ad(J\q))h^\pe$. 
The local term in $\Rop(h^\pe)$ can be expressed as
$J\inv \Dx h^\pe = J\inv (\Dx h^\pe +\ad(\q)h^\pe)$, 
and hence,
\begin{equation}\label{Rop.spinmodel.locterm}
J\inv \Dx h^\pe = \cofra\hook (\ad_\gsp(\T)\inv \nabla_x \Pvec^\pe)
\end{equation}
similarly to the derivation of equation \eqref{Rop.curve.locterm}. 
Likewise, the nonlocal term in $\Rop(h^\pe)$ can be expressed as
$-J\inv\ad(J\q)\Dxinv (\ad(J\q)h^\pe) = J\inv\ad(J\q)h_\pa$,
which yields 
$\Dx h_\pa = -\ad(J\q)h^\pe =\cofra\hook (\ad_\gsp(\curvflow_{xx})\Pvec^\pe)$,
while
$\Dx h_\pa = (\Dx h_\pa +\ad(\q)h_\pa)_\hsp =\cofra\hook \nabla^\hsp_x \Pvec_\pa$
as before. 
Hence,
\begin{equation}\label{Rop.spinmodel.nonlocterm}
-J\inv\ad(J\q)\Dxinv \ad(J\q) h^\pe = \cofra\hook (-\ad_\gsp(\T)\inv\ad_\gsp(\T_x)(\nabla^\hsp_x)\inv (\ad_\gsp(\T_x) \Pvec^\pe)) . 
\end{equation}  
Combining the two terms \eqref{Rop.spinmodel.nonlocterm} and \eqref{Rop.spinmodel.locterm}
now yields
\begin{equation}
\Rop(h^\pe) = \cofra\hook \ad_\gsp(\T)\inv (\nabla_x \Pvec^\pe -\ad_\gsp(\T_x)(\nabla^\hsp_x)\inv (\ad_\gsp(\T_x) \Pvec^\pe) ) , 
\end{equation}  
and thus
\begin{equation}\label{Rop.spinmodel}
\Rop_\T = \ad_\gsp(\T)\inv( \nabla_x -\ad_\gsp(\T_x)(\nabla^\hsp_x)\inv \ad_\gsp(\T_x) )
\end{equation}  
gives the corresponding geometric recursion operator.


Now observe that an $x$-translation generated by 
$\X_{\text{trans.}} =-\curvflow_x\hook\p_\curvflow=-\T_x\hook\p_\T$
is a symmetry of the recursion operators $\Rop_\curvflow$ and $\Rop_\T$.
Applying these respective operators to the corresponding vectors $\curvflow_x$ and $\T_x$
yields
\begin{align}
\Rop_\curvflow(\curvflow_x) & = \ad_\gsp(\curvflow_x)\inv \curvflow_{xx},
\\
\Rop_\T(\T_x)  &= \ad_\gsp(\T)\inv \T_{xx},
\end{align}
which are precisely the righthand sides of the generalized vortex filament equation \eqref{nls.aff.VFE}
and the associated Heisenberg spin model \eqref{nls.aff.spinmodel}. 
This leads to the following result. 

\begin{thm}\label{thm:main.hierarchy}
The hierarchy of isospectral flows \eqref{aff.hierarchy}, with $n\geq 1$, 
on the Hasimoto variable $\q$ 
corresponds to the hierarchy of geometric non-stretching curve flows
\begin{equation}\label{hierarchy.curveflows.perp}
(\curvflow_t)_\pe = \Pvec_\pe^{(k)} = \Rop_\curvflow^{k}(\curvflow_x),
\quad
|\curvflow_x|=1,
\quad
k=0,1,2,\ldots
\end{equation}
and the associated hierarchy of spin vector models
\begin{equation}\label{hierarchy.spinmodels}
\T_t = \Pvec^\pe_{(k)} = \Rop_\T^{k}(\T_x),
\quad
|\T|=1,
\quad
k=0,1,2,\ldots, 
\end{equation}
with $k=n-1\geq 0$, 
starting from $(\curvflow_t)_\pe =\Pvec_\pe^{(0)}=\curvflow_x$
and $\T_t=\Pvec^\pe_{(0)}=\T_x$,
where $\Rop_\curvflow$ and $\Rop_\T$ are
hereditary recursion operators \eqref{Rop.curve}
and \eqref{Rop.spinmodel}. 
In particular, the $k=1$ curve flow and associated spin vector model 
are respectively given by 
the generalized vortex filament equation \eqref{nls.aff.VFE} 
and associated Heisenberg spin model \eqref{nls.aff.spinmodel},
which correspond to the $n=2$ isospectral flow 
given by the $H$-invariant NLS equation \eqref{Gparallel.nls} 
on the Hasimoto variable $\q$. 
\end{thm}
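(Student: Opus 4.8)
The plan is to assemble the statement from results already established in the excerpt, treating it as a bookkeeping-and-translation exercise rather than a computation from scratch. The heart of the matter is that the two geometric recursion operators $\Rop_\curvflow$ and $\Rop_\T$ of \eqref{Rop.curve} and \eqref{Rop.spinmodel} were constructed precisely so that their action on $h_\pe$ (resp.\ $h^\pe$) mirrors the action of $\Rop^*$ (resp.\ $\Rop$) on the Hasimoto side, via the soldering relations. So I would first recall, from Theorem~\ref{thm:iso.flow.eqs} and the merged hierarchy \eqref{aff.hierarchy}, that the $+n$ isospectral flow is $\tfrac{1}{\efac}\q_t = \Rop^n(J\q) = h^\pe_{(n)}$, and that by \eqref{curveflow}--\eqref{aff.hierarchy.hperp} the associated curve flow has normal component determined by $h_\pe = \enorminv J\inv h^\pe_{(n-2)} = \enorminv{\Rop^*}^{n-2}(\q)$. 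Then, using the intertwining $\delta h_\pe = \Rop^*(h_\pe)$ derived just before \eqref{Rop.curve}, together with the soldering equation $\Rop^*(h_\pe) = \cofra\hook\Rop_\curvflow(\Pvec_\pe)$, I would argue by induction on $k=n-1$ that $\Pvec_\pe^{(k)} = \Rop_\curvflow^{k}(\curvflow_x)$: the base case $k=0$ is $\Pvec_\pe^{(0)} = \curvflow_x$ (the $+1$ flow has $h^\pe = \q_x$, hence $h_\pe$ corresponds to $\curvflow_x$ after unwinding \eqref{solder.Jq}), and the inductive step is exactly one application of $\Rop_\curvflow$ translated through $\cofra\hook$. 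The identical argument with $\Rop$ in place of $\Rop^*$, the soldering \eqref{solder.R.q.T}, and the base case $\T_t = \Pvec^\pe_{(0)} = \T_x$ gives \eqref{hierarchy.spinmodels}.

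The second ingredient is the identification of the $k=1$ members. For the curve flow this is already done in \secref{sec:VFE}: the computation leading to \eqref{nls.aff.curve}--\eqref{nls.aff.VFE} shows $\Pvec_\pe^{(1)} = \kappa\B = \ad_\gsp(\T)\inv\nabla_x\T$ (up to the scaling of $t$), and the displayed formulas $\Rop_\curvflow(\curvflow_x) = \ad_\gsp(\curvflow_x)\inv\curvflow_{xx}$ and $\Rop_\T(\T_x) = \ad_\gsp(\T)\inv\T_{xx}$ immediately before the theorem confirm the $k=1$ entries coincide with the righthand sides of \eqref{nls.aff.VFE} and \eqref{nls.aff.spinmodel}. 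I would also invoke \eqref{Gparallel.nls} together with Theorem~\ref{thm:SP.Gparallel.hierarchies} to certify that the $n=2$ isospectral flow is the $H$-invariant NLS equation on $\q$, closing the correspondence claim. The hereditary property of $\Rop_\curvflow$ and $\Rop_\T$ I would deduce from the hereditary property of $\Rop$ (Theorem~\ref{thm:iso.flow.hierarchy} / Theorem~\ref{thm:iso.flow.eqs}): since $\Rop_\curvflow$ and $\Rop_\T$ are conjugates of $\Rop^*$ and $\Rop$ under the (pointwise invertible, by the algebraic type \eqref{curve}) soldering map $\cofra\hook(\cdot)$ restricted to the relevant subspace, and hereditariness is preserved under such invertible point transformations, the property transfers.

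The step I expect to be the main obstacle is making the induction rigorous at the level of the \emph{vectors} $\Pvec_\pe^{(k)}$ and $\T$-flows rather than the algebra-valued quantities $h_\pe$, $h^\pe$. The subtlety is that $\cofra\hook(\cdot)$ is only an isomorphism onto $\msp_\aff\simeq\gsp$ at each point, and the operators $\Rop_\curvflow$, $\Rop_\T$ involve $(\nabla^\hsp_x)\inv$ and $\ad_\gsp(\curvflow_x)\inv$, which are only defined on appropriate subspaces (see \eqref{adinvT.action} and \eqref{curve.h.der}); one must check that the outputs $\Pvec_\pe^{(k)}$ at each stage actually land in the domain of the next application — equivalently that the nonlocal inversions never leave the $\pe$/$\msp$-subspace where they are invertible. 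This is guaranteed by the Lie-bracket decompositions \eqref{lie.rel.pa}--\eqref{lie.rel.pe} and the structure equations \eqref{flow.mpar}--\eqref{flow.hperp}, which are exactly what was used to build $\Rop_\curvflow$ and $\Rop_\T$ in the first place, so the obstruction is really just a matter of stating the closure carefully rather than a genuine difficulty. A minor secondary point is the bookkeeping of the various $\efac$-scalings and the $t$-rescalings used to pass between \eqref{Gparallel.nls} and \eqref{nls.aff.VFE}; I would absorb these into the statement's phrase "up to scaling" and not belabor them.
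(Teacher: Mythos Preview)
Your proposal is correct and follows essentially the same approach as the paper: the paper assembles the theorem from the soldering relations $\Rop^*(h_\pe)=\cofra\hook\Rop_\curvflow(\Pvec_\pe)$ and $\Rop(h^\pe)=\cofra\hook\Rop_\T(\Pvec^\pe)$ derived just before the statement, verifies the base case via the displayed formulas $\Rop_\curvflow(\curvflow_x)=\ad_\gsp(\curvflow_x)\inv\curvflow_{xx}$ and $\Rop_\T(\T_x)=\ad_\gsp(\T)\inv\T_{xx}$, and deduces hereditariness from that of $\Rop$ (citing its factorization $\Rop=\Hop J\inv$ into compatible Hamiltonian operators rather than your conjugation-under-soldering argument, but the substance is the same). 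Your explicit attention to the domain/closure issue for $(\nabla^\hsp_x)\inv$ and $\ad_\gsp(\curvflow_x)\inv$ is more careful than the paper, which leaves this implicit in the Lie-bracket decompositions.
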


The hereditary property of the geometric recursion operators $\Rop_\curvflow$ and $\Rop_\T$
can be established from the general theory of recursion operators \cite{Olv-book}
by noting that the isospectral recursion operator $\Rop$ is hereditary
due to its factorized form \eqref{Rop} 
in terms of the pair of compatible Hamiltonian operators $J$ and $\Hop$.

\begin{rem}
The hierarchy \eqref{hierarchy.curveflows.perp}
describes the motion of the curves $\curvflow(x,t)$ 
projected into the subspace in $\aff(G,H)$ corresponding to $\msp\subset\gsp$. 
The motion in the full space $\aff(G,H)$ is given by
\begin{equation}\label{hierarchy.curveflows}
\curvflow_t = \Pvec_\pe^{(k)} +\Pvec_\pa^{(k)},
\quad
\Pvec_\pa^{(k)} = -(\nabla^\hsp_x)\inv (\ad_\gsp(\curvflow_{xx})\ad_\gsp(\curvflow_x)\inv \Pvec_\pe^{(k)}),
\quad
k=0,1,2,\ldots , 
\end{equation}
which is obtained from equation \eqref{Ppar} relating $\Pvec_\pa$ to $\Pvec_\pe$.
This hierarchy \eqref{hierarchy.curveflows} of non-stretching curve flows
is related to the hierarchy \eqref{hierarchy.spinmodels} of spin vector models
by
\begin{equation}
\Pvec^\pe_{(k)} = \nabla_x(\Pvec_\pe^{(k)} +\Pvec_\pa^{(k)}) . 
\end{equation}
\end{rem}

\begin{rem}
Since both $\Rop_\curvflow$ and $\Rop_\T$ are hereditary,
all of the higher curve flows and spin vector models 
in the hierarchies \eqref{hierarchy.curveflows} and \eqref{hierarchy.spinmodels}
can be viewed as higher symmetries
$\X^{(k)} = \Pvec^{(k)}\hook\p_\curvflow = \Pvec^\pe_{(k)}\hook\p_\T$, $k\geq 2$,
of the generalized vortex filament equation \eqref{nls.aff.VFE}
and the associated Heisenberg spin model \eqref{nls.aff.spinmodel}. 
\end{rem}

\section{Tri-Hamiltonian structures}\label{sec:biHam}

We will now show that the geometric curve flows and the associated spin vector models
obtained in Theorem~\ref{thm:main.hierarchy}
have a natural tri-Hamiltonian structure.
This structure arises from factorization of the geometric recursion operators that 
generate the hierarchies,
and it can be understood as a geometrical version of the tri-Hamiltonian form \eqref{aff.hierarchy.biHam}
of the corresponding isospectral flows.

To begin, recall that the isospectral recursion operator \eqref{Rop}
has the factorized form $\Rop=\Hop J\inv$
with $\Hop$ and $J$ being compatible Hamiltonian operators with respect to the Hasimoto variable $\q$.
In general,
a linear operator $\Dop$ is a Hamiltonian operator if it is skew and has vanishing Schouten bracket.
A concrete formulation of these two properties in the present context of Lie-algebra valued variables
is stated in \Ref{Anc2008},
generalizing the standard formulation for real variables given in \Ref{Olv-book}
using the calculus of multi-vectors.

It will turn out to be simplest to start first by considering the spin vector models \eqref{hierarchy.spinmodels}
and afterwards return to the geometric non-stretching curve flows \eqref{hierarchy.curveflows},
because of the simple form of the soldering relation \eqref{solder.R.q.T} 
between the isospectral recursion operator $\Rop$ and the spin vector recursion operator $\Rop_\T$.

\subsection{Tri-Hamiltonian form of generalized spin vector models}

In the context of spin vector models \eqref{hierarchy.spinmodels}, 
the counterpart of the operator $J$ is given by 
\begin{equation}\label{Jop.spinmodel}
J_\T = \ad_\gsp(\T) . 
\end{equation}
This operator has the algebraic characterization that 
$\enorminv J_\T$ is like a square-root of the projection operator $\Pop_\msp$
onto the subspace in $\aff(G,H)$ corresponding to $\msp\subset\gsp$:
\begin{equation}\label{projection.msp}
\efac J_\T^2 = -\Pop_\msp,
\quad
\Pop_\msp^2 = \Pop_\msp, 
\quad
\Pop_\msp \msp =\msp,
\quad
\Pop_\msp \hsp =0 . 
\end{equation}

It is straightforward to prove that 
$J_\T$ is a Hamiltonian operator with respect to the spin vector variable $\T$
by using a geometrical version of the method used in \Ref{Anc2008}
to show that $\Hop$ is a Hamiltonian operator with respect to the Hasimoto variable $\q$.
The general theory of Hamiltonian operators \cite{Olv-book}
then shows that $\Rop_\T J_\T:=\Hop_\T$ will be a compatible Hamiltonian operator,
where $\Rop_\T$ is the hereditary recursion operator \eqref{Rop.spinmodel}.
This yields
\begin{equation}\label{Hop.spinmodel}
\Hop_\T=-\ad_\gsp(\T)( \nabla_x -\ad_\gsp(\T_x)(\nabla^\hsp_x)\inv \ad_\gsp(\T_x) )\ad_\gsp(\T) . 
\end{equation}  

For writing down the Hamiltonian structure for the hierarchy of spin vector models 
in Theorem~\ref{thm:main.hierarchy}, 
it will be useful to introduce some additional spin vector operators. 
Firstly, 
$\grad^\hsp$ and $\grad^\msp$ 
will be the respective projections of the contravariant derivative $\grad$ 
as defined by 
$X\cdot\grad^\hsp = (1+\ad_\gsp(\T)^2)(X\cdot\grad)$
and 
$X\cdot\grad^\msp = -\ad_\gsp(\T)^2(X\cdot\grad)$, 
for all vectors $X\in\aff(G,H)$. 
Secondly, 
$\ad^*_\gsp:\gsp\to \wedge^2\gsp$ 
will be the skew tensor associated with the adjoint operator $\ad_\gsp:\gsp\to \End(\gsp)$
as defined by 
$\brack{\ad^*_\gsp(X),Y\wedge Z} = \brack{Y,\ad_\gsp(X)Z}$,
for all vectors $X,Y,Z\in\aff(G,H)$. 

Now, from Theorem~\ref{thm:iso.biHamil.hiearchy}, 
the following main result is obtained. 

\begin{thm}\label{thm:main.hierarchy.spinmodel}
The hierarchy of spin vector models \eqref{hierarchy.spinmodels} 
arising from the isospectral flow hierarchy \eqref{aff.hierarchy}
has a tri-Hamiltonian structure. 
For $k\geq 2$, this structure is given by 
\begin{equation}\label{hierarchy.spinmodel.biHam.k>1}
\T_t =\Jop_\T(\delta H^{(k)}/\delta \T) =\Hop_\T(\delta H^{(k-1)}/\delta \T)
=\Eop_\T(\delta H^{(k-2)}/\delta \T), 
\quad
k=2,3,\ldots,
\end{equation}
where $\Eop_\T=\Rop_\T\Hop_\T$ 
is a third Hamiltonian operator compatible with $J_\T$ and $\Hop_\T$, 
and where the Hamiltonians are given by the functionals
\begin{equation}\label{spinmodel.Ham.k}
H^{(k)}  = \frac{1}{k}\int_C \Dxinv(\Rop_\T^k(\T_x)\cdot(\ad_\gsp(\T)\T_x))\; dx, 
\quad
k=1,2,\ldots 
\end{equation}
and
\begin{equation}\label{spinmodel.Ham.k=0}
H^{(0)}  = \int_C \vec\xi(\T)\cdot\T_x \;dx
\end{equation}
on the domain $C=\Rnum$ or $S^1$. 
Here $\vec\xi(\T)$ is any vector function in $\aff(G,H)$ satisfying 
\begin{equation}\label{spinmodel.geom.vec}
\grad^\msp\wedge \vec\xi = \ad^*_\gsp(\T) . 
\end{equation}
For $k=1$, the tri-Hamiltonian structure consists of 
\begin{equation}\label{hierarchy.spinmodel.biHam.k=1}
\T_t =\Jop_\T(\delta H^{(1)}/\delta \T) =\Hop_\T(\delta H^{(0)}/\delta \T)
=\Eop_\T(\delta H_0/\delta \T)
\end{equation}
with $H_0=\const$ being a trivial Hamiltonian, 
and where $\nabla^\hsp_x$ is generalized to have a non-trivial cokernel 
\begin{equation}\label{hDxinv}
(\nabla^\hsp_x)\inv(0)= -\ad_\gsp(\T) . 
\end{equation}
\end{thm}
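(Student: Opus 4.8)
The plan is to transfer the tri-Hamiltonian structure already established for the isospectral flows in Theorem~\ref{thm:iso.biHamil.hiearchy} to the spin vector setting, using the soldering map $\cofra\hook(\cdot)$ and the intertwining relation \eqref{solder.R.q.T} between $\Rop$ and $\Rop_\T$. The key structural fact is that the soldering relation identifies the flow variable $h^\pe$ on the $\msp$-side with the vector $\Pvec^\pe$, and carries the pair $(J,\Hop)$ of compatible Hamiltonian operators on $\q$ to the pair $(J_\T,\Hop_\T)$ on $\T$. First I would verify that $J_\T=\ad_\gsp(\T)$ is a Hamiltonian operator with respect to $\T$: this means checking skew-symmetry, which is immediate from $\ad$-invariance of $\brack{\cdot,\cdot}_\aff$, and the vanishing of the Schouten bracket, which follows by a geometrical adaptation of the multi-vector computation in \Ref{Anc2008} used for $\Hop$ on $\q$ --- here one uses the Jacobi identity for $\ad_\gsp$ together with the flatness of $\aff(G,H)$. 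Given that $J_\T$ is Hamiltonian and $\Rop_\T$ is hereditary (established in Theorem~\ref{thm:main.hierarchy} via its factorized form), the general theory of \Ref{Olv-book} gives that $\Hop_\T=\Rop_\T J_\T$ and $\Eop_\T=\Rop_\T\Hop_\T=\Rop_\T^2 J_\T$ are Hamiltonian operators mutually compatible with $J_\T$, which is the operator content of the theorem.

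Next I would establish the Hamiltonians. The strategy is to pull back the expression \eqref{aff.Ham.n} under the soldering map. Using $h^\pe_{(k)}=\Rop^k(J\q)$ and the correspondence $h^\pe=\cofra\hook\Pvec^\pe$ together with $\Rop(h^\pe)=\cofra\hook\Rop_\T(\Pvec^\pe)$ from \eqref{solder.R.q.T}, and noting $J\q=-\ad_\gsp(\T)\T_x$ up to the normalization factor $\enorminv$ inherited from \eqref{solder.Jq}, the integrand $\Dxinv\brack{\Rop^n(J\q),\q}_\msp$ becomes, after translating inner products via $\brack{(0,a'),(0,b')}_\aff=-\Kill{a',b'}$, exactly the integrand $\Dxinv(\Rop_\T^k(\T_x)\cdot(\ad_\gsp(\T)\T_x))$ appearing in \eqref{spinmodel.Ham.k}, with $k=n$. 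One then verifies $\delta H^{(k)}/\delta\T$ maps correctly under the soldering by checking that the variational derivative in the $\T$-variable corresponds, through the metric \eqref{metric}, to the variational derivative in the $\q$-variable; this is where the bi-Hamiltonian cascade $\tfrac1{\efac}\q_t=\Hop(\delta H^{(n)}/\delta\q)=J(\delta H^{(n+1)}/\delta\q)$ and its third form $\Eop(\delta H^{(n-1)}/\delta\q)$ translate term-by-term into \eqref{hierarchy.spinmodel.biHam.k>1}.

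The main obstacle I expect is the treatment of $H^{(0)}$ and the $k=1$ case. For $k\geq 2$ the scaling formula from \Ref{Anc2003} produces the Hamiltonians directly, but $H^{(0)}$ is not given by that formula (it degenerates to $0$ on the $\q$-side, by the same computation noted after \eqref{aff.Ham.n}); instead one must produce a genuine conserved functional whose variational gradient is $\ad_\gsp(\T)\T_x$ up to a curl, i.e.\ solve $\grad^\msp\wedge\vec\xi=\ad^*_\gsp(\T)$ for a potential $\vec\xi(\T)$. The existence of such a $\vec\xi$ is a closedness condition: one must check that the $2$-form $\ad^*_\gsp(\T)$ restricted to the $\msp$-distribution is closed under $\grad^\msp$, which should follow from the Jacobi identity together with $\grad^\msp$-flatness, but making this rigorous --- and checking $H^{(0)}$ so defined is indeed a Hamiltonian potential for the $k=1$ flow under $\Hop_\T$ --- is the delicate step. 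The $k=1$ tri-Hamiltonian form \eqref{hierarchy.spinmodel.biHam.k=1} then requires the generalization \eqref{hDxinv} of $(\nabla^\hsp_x)\inv$ with non-trivial cokernel; I would justify this exactly as in Proposition~\ref{prop:alt.Ham.struc}, where the analogous cokernel extension $\Dxinv(0)=A$ was introduced for $\Hop$, by observing that $\ad_\gsp(\T)\in\ker\nabla^\hsp_x$ because $\nabla^\hsp_x$ projects onto the $\hsp$-subspace which $\ad_\gsp(\T)$ built from $\T\in(\msp_\aff)_\pa$ does not see, so that $\Hop_\T$ applied to a constant yields $\T_x$ and $\Eop_\T$ applied to a constant yields $\Rop_\T(\T_x)$, matching the $k=1$ and $k=2$ flows respectively.
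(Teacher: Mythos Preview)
Your proposal is correct and follows essentially the same route as the paper: transfer the tri-Hamiltonian structure of Theorem~\ref{thm:iso.biHamil.hiearchy} through the soldering \eqref{solder.R.q.T}, invoke the general theory to get $\Hop_\T$ and $\Eop_\T$ from $J_\T$ and the hereditary $\Rop_\T$, and handle $k=1$ via the cokernel extension analogous to Proposition~\ref{prop:alt.Ham.struc}. The one place where the paper is more explicit than your outline is the treatment of $H^{(0)}$: rather than dwelling on the closedness of $\ad^*_\gsp(\T)$ (which it takes as part of the hypothesis \eqref{spinmodel.geom.vec}), the paper directly computes the variation $\delta(\vec\xi(\T)\cdot\T_x)=g(\delta\T\wedge\T_x,\grad\vec\xi)=g(\delta\T\otimes\T_x,\ad^*_\gsp(\T))=-g(\delta\T,\ad_\gsp(\T)\T_x)$, giving $\delta H^{(0)}/\delta\T=\ad_\gsp(\T)^{-1}\T_x$ and thereby closing the third Hamiltonian structure at $k=2$ in one stroke; you should incorporate this computation, since it is the concrete content behind what you flag as the ``delicate step.''
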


\begin{proof}
The first two Hamiltonian structures \eqref{hierarchy.spinmodel.biHam.k>1} for $k\geq2$
come from the Hamiltonian form of the corresponding isospectral flows 
in Theorem~\ref{thm:iso.biHamil.hiearchy}
by using a derivation similar to that for the geometric recursion operators 
in Theorem~\ref{thm:main.hierarchy}. 

The third Hamiltonian structure \eqref{hierarchy.spinmodel.biHam.k>1} for $k\geq2$
arises from the property 
\begin{equation}\label{spinmodel.varH0}
\delta H^{(0)}/\delta \T = \ad_\gsp(\T)\inv\T_x
\end{equation}
for the Hamiltonian \eqref{spinmodel.Ham.k=0}. 
This is shown by considering an arbitrary variation of $\vec\xi(\T)\cdot\T_x$.
First, the variation of $\vec\xi(\T)$ is given by 
$\delta\vec\xi(\T) = \nabla_{\delta\T}\vec\xi(\T) = (\delta\T\cdot\grad)\vec\xi(\T)$, 
and hence
$T_x\cdot\delta\vec\xi(\T) = g(\delta\T\otimes \T_x,\grad\vec\xi(\T))$.
Next, the variation of $\T_x$ is given by 
$\delta\T_x = \nabla_x\delta\T$,
and thus $\vec\xi(\T)\cdot\delta\T_x = -\delta\T\cdot \nabla_x\vec\xi(\T)$
modulo a total $x$-derivative,
where $\nabla_x\vec\xi(\T) = (\T_x\cdot\grad)\vec\xi(\T)$. 
Hence 
$\vec\xi(\T)\cdot\delta\T_x = -g(\T_x\otimes \delta\T,\grad\vec\xi(\T))$.
Combining these variations then yields 
\begin{equation}
\delta(\vec\xi(\T)\cdot\T_x)
= g(\delta\T\wedge\T_x,\grad\vec\xi(\T))
= g(\delta\T\otimes\T_x,\grad\wedge\vec\xi(\T)) , 
\end{equation}
and consequently, by using equation \eqref{spinmodel.geom.vec}, 
\begin{equation}
\delta(\vec\xi(\T)\cdot\T_x)
= g(\delta\T\otimes\T_x,\ad^*_\gsp(\T))
= -g(\delta\T,\ad_\gsp(\T)\T_x), 
\end{equation}
which establishes the result \eqref{spinmodel.varH0}. 

For $k=1$, 
the third Hamiltonian structure \eqref{hierarchy.spinmodel.biHam.k=1}
follows from 
\begin{equation}
\Hop_\T(0) = \ad_\gsp(\T)\ad_\gsp(\T_x)(\nabla^\hsp_x)(0) = -\ad_\gsp(\T)^2\T_x = \T_x,
\end{equation}
whereby 
$\Eop_\T(0) = \Rop_\T(\T_x) = \ad_\gsp(\T)\inv \T_{xx}$
is equal to 
$\Hop_\T(\ad_\gsp(\T)\inv\T_x) = -\ad_\gsp(\T)\nabla_x(\T_x) = \ad_\gsp(\T)\inv \T_{xx}$. 

Note that the generalization \eqref{hDxinv} of $\nabla^\hsp_x$ is well-defined 
because $\T$ belongs to the kernel of this operator. 
Specifically, 
as shown by $\cofra\hook\T_x = -\tfrac{1}{\enorminv} (0,J\q)$
from equations \eqref{N} and \eqref{solder.N}, 
the vector $\nabla_x\T=\T_x$ belongs to the subspace in $\aff(G,H)\simeq\gsp$ 
corresponding to $\msp\subset\gsp$, 
and thus $\nabla^\hsp_x\T = \Pop_\hsp\T_x=0$. 
Here $\Pop_\hsp=\id-\Pop_\msp$ is the projection operator onto the orthogonal subspace
in $\aff(G,H)\simeq\gsp$ corresponding to $\hsp\subset\gsp$,
where $\Pop_\msp$ is the projection operator \eqref{projection.msp}. 
\end{proof}

As a result, 
the general Heisenberg spin model \eqref{nls.aff.spinmodel}
has the tri-Hamiltonian form 
\begin{equation}\label{nls.spinmodel.Hamil}
\T_t =\ad_\gsp(\T)\inv\T_{xx}
= \Jop_\T(\delta H^{(1)}/\delta \T)
= \Hop_\T(\delta H^{(0)}/\delta \T)
=\Eop_\T(0) . 
\end{equation}
The second and third Hamiltonian structures presented here appear to be new. 

In contrast, 
the higher-order spin vector model \eqref{mkdv.aff.spinmodel} 
has the tri-Hamiltonian form 
\begin{equation}\label{mkdv.spinmodel.Hamil}
\T_t = -\T_{xxx} +3\efac\ad_\gsp(\T_{xx})\ad_\gsp(\T_x)\T
= \Jop_\T(\delta H^{(2)}/\delta \T) 
=\Hop_\T(\delta H^{(1)}/\delta \T) 
= \Eop_\T(\delta H^{(0)}/\delta \T) , 
\end{equation}
which holds without needing the generalized cokernel \eqref{hDxinv}. 

A useful remark is that 
the Hamiltonians \eqref{spinmodel.Ham.k} in these spin vector models for $k\geq2$
correspond to the Hamiltonians \eqref{aff.Ham.n} in the isospectral hierarchy
with $\q$ and its $x$-derivatives 
expressed in terms of $\T$ and its $x$-derivatives
through the soldering relations \eqref{solder.T}, \eqref{adT.action}, and \eqref{solder.Dx}.
In particular,
the first few Hamiltonians are explicitly given by 
\begin{align}
H^{(1)} & = \int_C \tfrac{1}{2} |\T_x|^2\;dx ,
\label{spinmodel.Ham.k=1}\\
H^{(2)} &= \int_C \tfrac{1}{2}\T_{xx}\cdot (\ad_\gsp(\T)\T_x)\;dx,
\label{spinmodel.Ham.k=2}\\
H^{(3)} &= \int_C \tfrac{1}{2}|\T_{xx}|^2 -\tfrac{5}{8}|\ad_\gsp(\T_x)^2\T|^2 \;dx , 
\label{spinmodel.Ham.k=3}\\
H^{(4)} &= \int_C \tfrac{1}{2}(\ad_\gsp(\T)\T_{xx})\cdot\T_{xxx} + \tfrac{7}{8} (\ad_\gsp(\T_{x})^2\T)\cdot(\ad_\gsp(\T_x)\T_{xx}) \;dx . 
\label{spinmodel.Ham.k=4}
\end{align}
A direct derivation of all of the spin vector Hamiltonians $H^{(k)}$ for $k\geq1$ 
can be given by applying a general scaling formula in \Ref{Anc2003} (see also \Ref{Anc2008}), 
with $x\rightarrow e^\epsilon x$, $\T \rightarrow \T$ ($\epsilon\in\Rnum$)
being the scaling group. 
This formula is not applicable for $k=0$ because the spin vector Hamiltonian 
$H^{(0)}$ is scaling invariant. 
Instead, $H^{(0)}$ can be obtained by using a general homotopy formula 
in \Ref{AncBlu2002b,Anc-review} (see also \Ref{Anc2008}).
A more explicit expression for $H^{(0)}$ will be derived in \secref{sec:geommap}.

\subsection{Tri-Hamiltonian form of generalized vortex filament equations}

It is straightforward to show that 
neither $\ad_\gsp(\curvflow_x)$ nor $\nabla_x=\curvflow_x\cdot\grad$
are Hamiltonian operators for the geometric non-stretching curve flows \eqref{hierarchy.curveflows},
and so the previous steps used to obtain a pair of compatible Hamiltonian operators
for spin vector models will not work.

Instead, the spin vector Hamiltonian operators $J_\T$ and $\Hop_\T$ 
can be transformed into corresponding operators $J_\curvflow$ and $\Hop_\curvflow$
via the relation $\T = \nabla_x \curvflow$, 
which will provide a pair of compatible Hamiltonian operators for geometric non-stretching curve flows. 

\begin{lem}\label{lem:var.spinvector.curve.rels}
(i) 
For any Hamiltonian functional $H$,
\begin{equation}\label{varder.spinvector.curve}
\frac{\delta H}{\delta \T} = -\nabla_x^{-1}\frac{\delta H}{\delta \curvflow}
\end{equation}  
is a variational identity.
(ii) 
If $\Dop_\T$ is a Hamiltonian operator with respect to $\T$,
then 
\begin{equation}\label{Dop.mapping.spinvector.curve}
\Dop_\curvflow = -\nabla_x^{-1}\Dop_\T\nabla_x^{-1}
\end{equation}
is a Hamiltonian operator with respect to $\curvflow$.
\end{lem}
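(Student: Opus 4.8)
The plan is to leverage the differential substitution $\T=\nabla_x\curvflow$ together with the skew-adjointness of $\nabla_x$ under the pairing $\int_C g(\,\cdot\,,\,\cdot\,)\,dx$. This skew-adjointness is a direct consequence of metric compatibility $\grad g=0$ of the torsion-free Riemannian connection on $\aff(G,H)$: integration by parts gives $\int_C g(\nabla_x Y,Z)\,dx=-\int_C g(Y,\nabla_x Z)\,dx$, with the boundary terms vanishing since $C=\Rnum$ (with decay) or $S^1$ (with periodicity). Throughout, $\nabla_x^{-1}$ is understood in the standard formal sense in which $\Dxinv$ is already used in the operators $\Hop_\T$ and $\Jop_\T$.

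For part (i), I would evaluate an arbitrary variation $\delta H$ in two ways. Viewing $H$ as a functional of $\curvflow$, we have $\delta H=\int_C g(\delta H/\delta\curvflow,\delta\curvflow)\,dx$. Viewing $H$ instead as a functional of $\T$, and noting that a variation $\delta\curvflow$ induces $\delta\T=\nabla_x\delta\curvflow$, we obtain $\delta H=\int_C g(\delta H/\delta\T,\nabla_x\delta\curvflow)\,dx=-\int_C g(\nabla_x(\delta H/\delta\T),\delta\curvflow)\,dx$ by the skew-adjointness of $\nabla_x$. Since $\delta\curvflow$ is arbitrary, comparing the two expressions yields $\delta H/\delta\curvflow=-\nabla_x(\delta H/\delta\T)$, which is the variational identity \eqref{varder.spinvector.curve}.

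For part (ii), I would first check that $\Dop_\curvflow=-\nabla_x^{-1}\Dop_\T\nabla_x^{-1}$ is skew-adjoint. Writing $\Dop_\curvflow=-P\Dop_\T P$ with $P=\nabla_x^{-1}$ and using $P^*=-P$ together with $\Dop_\T^*=-\Dop_\T$, the adjoint of the composition is $(P\Dop_\T P)^*=P^*\Dop_\T^*P^*=(-P)(-\Dop_\T)(-P)=-P\Dop_\T P$, so $\Dop_\curvflow^*=-\Dop_\curvflow$. For the remaining requirement, the vanishing of the Schouten bracket (the functional Jacobi identity), I would invoke the covariance of Hamiltonian structures under invertible differential substitutions: a Poisson bivector transforms as a twice-contravariant tensor, so under $\T=\nabla_x\curvflow$ (Fréchet derivative $\nabla_x$, formal adjoint $-\nabla_x$) the operators $\Dop_\curvflow$ in the $\curvflow$-variables and $\Dop_\T=-\nabla_x\Dop_\curvflow\nabla_x$ in the $\T$-variables represent the same Poisson structure, and inverting this relation gives exactly \eqref{Dop.mapping.spinvector.curve}. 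Equivalently, combining part (i) with $\T_t=\nabla_x\curvflow_t$ shows that the Hamiltonian flows $\curvflow_t=\Dop_\curvflow(\delta H/\delta\curvflow)$ correspond bijectively to $\T_t=\Dop_\T(\delta H/\delta\T)$, so that the Schouten-bracket condition transfers from $\Dop_\T$ to $\Dop_\curvflow$. Either route establishes that $\Dop_\curvflow$ is a Hamiltonian operator with respect to $\curvflow$.

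The main obstacle is the nonlocality of $\nabla_x^{-1}$: making the covariance argument rigorous — in particular, that the pushforward of a Poisson bivector by this formally invertible but nonlocal substitution is again Poisson — requires the extension of the multivector/Schouten-bracket calculus to nonlocal, Lie-algebra-valued operators. I would cite \Ref{Anc2008} (building on \cite{Olv-book}) for the precise framework, and note that the only nonlocal terms produced here are of the mild type $\nabla_x^{-1}(\cdots)$ already present in $\Hop_\T$ and $\Jop_\T$, so no difficulties beyond those already handled in that setting arise.
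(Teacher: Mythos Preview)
Your proposal is correct and follows essentially the same approach as the paper: part (i) is proved identically via $\delta\T=\nabla_x\delta\curvflow$ and integration by parts, and part (ii) is argued (as in the paper) by combining the variational identity with $\T_t=\nabla_x\curvflow_t$ to obtain the operator formula and then appealing to the canonical-transformation/covariance of Poisson structures under the substitution $\T=\nabla_x\curvflow$. Your treatment is in fact slightly more explicit than the paper's, in that you verify skew-adjointness of $\Dop_\curvflow$ directly and flag the nonlocality issue with appropriate citations.
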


\begin{proof}
Consider an arbitrary  variation $\delta\T = \nabla_x\delta\curvflow$.
Then for any Hamiltonian $H$,
since 
$\delta H = (\delta H/\delta \curvflow)\cdot\delta\curvflow = (\delta H/\delta \T)\cdot\delta\T = (\delta H/\delta \T)\cdot \nabla_x \delta\curvflow$
holds modulo a total $x$-derivative,
this directly yields the variational identity \eqref{varder.spinvector.curve}
after integration by parts. 
Now suppose that $\Dop_\T$ is Hamiltonian operator with respect to $\T$,
and consider $\T_t = \Dop_\T(\delta H/\delta \T)$.
Use of the previous variational identity combined with $\T_t=\nabla_x \curvflow_t$
gives $\curvflow_t =\Dop_\curvflow(\delta H/\delta \curvflow)$,
where $\Dop_\curvflow$ is the operator \eqref{Dop.mapping.spinvector.curve}.
The proof that this operator is Hamiltonian
can be shown to reduce to the proof that $\Dop_\T$ is a Hamiltonian operator, 
analogously to a canonical transformation, via $\T = \nabla_x\curvflow$.
\end{proof}

As a result, 
the Hamiltonian operators $J_\T$ and $\Hop_\T$ give rise to the corresponding Hamiltonian operators 
\begin{gather}
J_\curvflow = -\nabla_x^{-1}\ad_\gsp(\curvflow_x)\nabla_x^{-1} , 
\label{Jop.curve}
\\
\Hop_\curvflow=\nabla_x^{-1}\ad_\gsp(\curvflow_x)( \nabla_x -\ad_\gsp(\curvflow_{xx})(\nabla^\hsp_x)\inv \ad_\gsp(\curvflow_{xx}) )\ad_\gsp(\curvflow_x)\nabla_x^{-1} . 
\label{Hop.curve}
\end{gather}  
These two Hamiltonian operators \eqref{Jop.curve} and \eqref{Hop.curve} are compatible,
since $J_\T$ and $\Hop_\T$ are compatible. 

Now the Hamiltonian structure for the hierarchy of non-stretching curve flows 
in Theorem~\ref{thm:main.hierarchy}
can be obtained from the Hamiltonian form of the corresponding isospectral flows 
in Theorem~\ref{thm:iso.biHamil.hiearchy}.
The proof is the same as that of Theorem~\ref{thm:main.hierarchy.spinmodel}. 

\begin{thm}\label{thm:main.hierarchy.curve}
The hierarchy of geometric non-stretching curve flows \eqref{hierarchy.curveflows.perp} in $\aff(G,H)$
arising from the isospectral flow hierarchy \eqref{aff.hierarchy}
has a tri-Hamiltonian structure. 
For $k\geq 2$, this structure is given by 
\begin{equation}\label{hierarchy.curve.biHam}
(\curvflow_t)_\pe 
=\Jop_\curvflow(\delta H^{(k)}/\delta \curvflow) 
=\Hop_\curvflow(\delta H^{(k-1)}/\delta \curvflow) 
=\Eop_\curvflow(\delta H^{(k-2)}/\delta \curvflow) 
= \Pvec^{(k)}_\pe,
\quad
k=2,3,\ldots , 
\end{equation}
where $\Eop_\curvflow=\Rop_\curvflow Hop_\curvflow$ 
is a third Hamiltonian operator compatible with $J_\curvflow$ and $\Hop_\curvflow$, 
and where the Hamiltonians are given by the functionals
\begin{equation}\label{curve.Ham.k}
H^{(k)} = \frac{1}{k}\int_C \Dxinv(\Rop_\curvflow^{k+1}(\curvflow_x)\cdot\curvflow_{xx})\; dx, 
\quad
k=1,2,\ldots 
\end{equation}
and
\begin{equation}\label{curve.Ham.k=0}
H^{(0)}  = \int_C \vec\xi(\curvflow_x)\cdot\curvflow_{xx}\;dx
\end{equation}
on the domain $C=\Rnum$ or $S^1$. 
Here $\vec\xi(\curvflow_x)$ is any vector function in $\aff(G,H)$ satisfying 
\begin{equation}\label{spinmodel.Hamvector.k=0}
\grad^\msp\wedge \vec\xi = \ad^*_\gsp(\curvflow_x) . 
\end{equation}
For $k=1$, the tri-Hamiltonian structure consists of 
\begin{equation}\label{hierarchy.curve.biHam.k=1}
(\curvflow_t)_\pe 
=\Jop_\curvflow(\delta H^{(1)}/\delta \curvflow) 
=\Hop_\curvflow(\delta H^{(0)}/\delta \curvflow) 
=-\Eop_\curvflow(\delta H_0/\delta \curvflow)
\end{equation}
with $H_0=\const$ being a trivial Hamiltonian, 
and where $\nabla^\hsp_x$ is generalized to have a non-trivial cokernel \eqref{hDxinv}
given by $(\nabla^\hsp_x)\inv(0)= \curvflow_x$. 
\end{thm}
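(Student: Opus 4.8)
The plan is to obtain Theorem~\ref{thm:main.hierarchy.curve} by transporting the tri-Hamiltonian structure of the spin vector hierarchy in Theorem~\ref{thm:main.hierarchy.spinmodel} across the relation $\T=\curvflow_x$, exactly as announced just before the statement. The bridge is Lemma~\ref{lem:var.spinvector.curve.rels}: part~(i) converts variational derivatives by $\delta H/\delta\T = -\nabla_x\inv(\delta H/\delta\curvflow)$, and part~(ii) converts Hamiltonian operators by $\Dop_\curvflow = -\nabla_x\inv\Dop_\T\nabla_x\inv$. Since a non-stretching curve flow has vanishing torsion, it induces $\T_t=\nabla_x\curvflow_t$, so each spin-vector Hamiltonian form $\T_t=\Dop_\T(\delta H/\delta\T)$ pushes down to $\curvflow_t=\Dop_\curvflow(\delta H/\delta\curvflow)$, with $\Dop_\curvflow$ automatically Hamiltonian and compatibility of $J_\curvflow,\Hop_\curvflow,\Eop_\curvflow$ inherited from that of $J_\T,\Hop_\T,\Eop_\T$. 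Taking $\Dop_\T\in\{\ad_\gsp(\T),\Hop_\T,\Rop_\T\Hop_\T\}$ reproduces exactly the operators \eqref{Jop.curve} and \eqref{Hop.curve} together with the third operator $\Eop_\curvflow=\Rop_\curvflow\Hop_\curvflow$.

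For $k\geq2$ I would feed the three Hamiltonian forms \eqref{hierarchy.spinmodel.biHam.k>1} through this dictionary. The Hamiltonian functionals carry over by the substitution $\T=\curvflow_x$ (so $\nabla_x\T=\curvflow_{xx}$, and so on), which turns \eqref{spinmodel.Ham.k} into \eqref{curve.Ham.k} and \eqref{spinmodel.Ham.k=0} into \eqref{curve.Ham.k=0}, with \eqref{spinmodel.geom.vec} becoming \eqref{spinmodel.Hamvector.k=0}. The third Hamiltonian structure then follows from the identity $\delta H^{(0)}/\delta\curvflow = -\nabla_x(\ad_\gsp(\curvflow_x)\inv\curvflow_{xx})$, which is the image under Lemma~\ref{lem:var.spinvector.curve.rels}(i) of the spin-vector property \eqref{spinmodel.varH0}. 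One point I would check carefully here is that the left-hand sides of the pushed-down equations are the $\pe$-components $\Pvec_\pe^{(k)}$ rather than the full flow vectors: the factor $\ad_\gsp(\curvflow_x)$ inside $J_\curvflow$ and $\Hop_\curvflow$ has image in the $\msp$-subspace of $\aff(G,H)$, and the outer $\nabla_x\inv$ must be normalized so that the reconstructed motion decomposes consistently with \eqref{hierarchy.curveflows}, that is, so that its $\pa$-component is the $\Pvec_\pa^{(k)}$ recovered from $\Pvec_\pe^{(k)}$ through \eqref{Ppar}.

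For $k=1$ the same pushforward sends \eqref{hierarchy.spinmodel.biHam.k=1} to \eqref{hierarchy.curve.biHam.k=1}. Here the third Hamiltonian operator requires $\nabla^\hsp_x$ extended by a non-trivial cokernel as in \eqref{hDxinv}; translating the spin-vector cokernel $(\nabla^\hsp_x)\inv(0)=-\ad_\gsp(\T)$ under $\T=\curvflow_x$ together with the outer $\nabla_x\inv$ gives $(\nabla^\hsp_x)\inv(0)=\curvflow_x$, which is consistent because $\nabla^\hsp_x\curvflow_x = \Pop_\hsp\curvflow_{xx} = \Pop_\hsp\nabla_x\T = 0$ since $\nabla_x\T=\T_x$ lies in the $\msp$-subspace, as already established in the proof of Theorem~\ref{thm:main.hierarchy.spinmodel}. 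I would then verify, exactly as there, that $\Hop_\curvflow(0)$ equals the $k=1$ value $\Pvec_\pe^{(0)}=\curvflow_x$ and that $\Eop_\curvflow(0)=\Rop_\curvflow\Hop_\curvflow(0)$ coincides with $\Hop_\curvflow(\delta H^{(0)}/\delta\curvflow)$.

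The main obstacle is bookkeeping rather than anything conceptual: one must confirm that the nonlocal operators $\nabla_x\inv$ and $(\nabla^\hsp_x)\inv$ produced by the conjugations are well defined on arclength-parametrized curves over $C=\Rnum$ or $S^1$, and one must track how $\nabla_x\inv$ fails to respect the $\msp\oplus\hsp$ splitting so that precisely the $\pe$-component is isolated on the left of \eqref{hierarchy.curve.biHam} and \eqref{hierarchy.curve.biHam.k=1}. A secondary point is that Lemma~\ref{lem:var.spinvector.curve.rels}(ii) was only sketched, so for full rigor one should spell out that the skew-symmetry and Schouten-bracket conditions on $\Dop_\T$ survive the canonical-transformation-type conjugation $\T=\nabla_x\curvflow$; this is the one genuinely new verification relative to Theorem~\ref{thm:main.hierarchy.spinmodel}.
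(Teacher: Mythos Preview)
Your proposal is correct and follows essentially the same route as the paper: the paper states that ``the proof is the same as that of Theorem~\ref{thm:main.hierarchy.spinmodel}'' and relies on Lemma~\ref{lem:var.spinvector.curve.rels} to transport the tri-Hamiltonian structure across $\T=\curvflow_x$, exactly as you outline. One small sign to watch in the $k=1$ verification: the paper finds $\Hop_\curvflow(0)=-\curvflow_x$ (not $+\curvflow_x$), which accounts for the minus in front of $\Eop_\curvflow$ in \eqref{hierarchy.curve.biHam.k=1}; your plan to ``verify, exactly as there'' would catch this upon computation.
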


Similarly to the case $k=1$ in Theorem~\ref{thm:main.hierarchy.spinmodel}, 
note that here 
$\Hop_\curvflow(0) =-\curvflow_{x}$
since 
$-\nabla_x^{-1}\ad_\gsp(\curvflow_x)\ad_\gsp(\curvflow_{xx})(\nabla^\hsp_x)\inv(0)
= \nabla_x^{-1}(\ad_\gsp(\curvflow_x)^2\curvflow_{xx})
= -\nabla_x^{-1}\curvflow_{xx}
= -\curvflow_{x}$. 
Hence, 
$-\Eop_\curvflow(0) = \Rop_\curvflow(\curvflow_x) = \ad_\gsp(\curvflow_x)\inv \curvflow_{xx}$
is equal to 
$\Hop_\curvflow(\ad_\gsp(\curvflow_x)\curvflow_{xxx}) 
= \nabla_x^{-1}\ad_\gsp(\curvflow_x)\nabla_x(-\curvflow_{xx})
= \ad_\gsp(\curvflow_x)\inv \curvflow_{xx}$.

Consequently, 
the tri-Hamiltonian form of the general vortex filament equation \eqref{nls.aff.curve} 
is given by 
\begin{equation}
\curvflow_t =\ad_\gsp(\curvflow_x)\inv\curvflow_{xx}
= \Jop_\curvflow(\delta H^{(1)}/\delta \curvflow)
= \Hop_\curvflow(\delta H^{(0)}/\delta \curvflow)
=-\Eop_\curvflow(0) . 
\end{equation}
The second and third Hamiltonian structures shown here 
have not previously appeared in the literature. 

The general vortex filament axial equation \eqref{SP.mkdv} 
has the tri-Hamiltonian form 
\begin{equation}
\curvflow_t = -\curvflow_{xxx} +\tfrac{3}{2}\efac\ad_\gsp(\curvflow_{xx})^2\curvflow_x
=\Eop_\curvflow(\delta H^{(0)}/\delta \curvflow) 
=\Hop_\curvflow(\delta H^{(1)}/\delta \curvflow) 
= \Jop_\curvflow(\delta H^{(2)}/\delta \curvflow) , 
\end{equation}
which holds without needing the generalized cokernel \eqref{hDxinv}. 

Each of the Hamiltonians \eqref{curve.Ham.k}-\eqref{curve.Ham.k=0}
in these curve flow equations 
can be seen to be exactly the same as the Hamiltonians \eqref{spinmodel.Ham.k}--\eqref{spinmodel.Ham.k=0} 
for the associated spin vector models. 
In particular,
the first few Hamiltonians for $k\geq 1$ are explicitly given by 
\begin{align}
H^{(1)} & = \int_C \tfrac{1}{2} |\curvflow_{xx}|^2\;dx ,
\label{curve.Ham.k=1}\\
H^{(2)} &= \int_C \tfrac{1}{2}\curvflow_{xxx}\cdot (\ad_\gsp(\curvflow_x)\curvflow_{xx})\;dx,
\label{curve.Ham.k=2}\\
H^{(3)} &= \int_C \tfrac{1}{2}|\curvflow_{xxx}|^2 -\tfrac{5}{8}|\ad_\gsp(\curvflow_{xx})^2\curvflow_x|^2 \;dx , 
\label{curve.Ham.k=3}\\
H^{(4)} &= \int_C \tfrac{1}{2}(\ad_\gsp(\curvflow_x)\curvflow_{xxx})\cdot\curvflow_{xxxx} + \tfrac{7}{8} (\ad_\gsp(\curvflow_{xx})^2\curvflow_x)\cdot(\ad_\gsp(\curvflow_{xx})\curvflow_{xxx}) \;dx . 
\label{curve.Ham.k=4}
\end{align}
A more explicit expression for $H^{(0)}$ will be derived in \secref{sec:geommap}.

\section{Multi-Hamiltonian Schr\"odinger maps in Hermitian symmetric spaces}\label{sec:geommap}

We will now show how the hierarchy of general spin vector models 
obtained in Theorem~\ref{thm:main.hierarchy.spinmodel}
for affine Hermitian symmetric spaces $\aff(G,H)$ 
gives rise to a corresponding hierarchy of geometrical evolution equations
for a map $\map(x,t)$ into the Hermitian symmetric space $M=G/H$
associated with the Hermitian symmetric Lie algebra $(\gsp,A)$
underlying $\aff(G,H)$. 
These evolution equations will be geometric versions of the isospectral flows 
in Theorems~\ref{thm:iso.flow.hierarchy} and~\ref{thm:iso.biHamil.hiearchy}. 
In particular, 
the respective geometrical versions of 
the $H$-invariant NLS and mKdV equations \eqref{FK.nls} and \eqref{FK.mkdv} 
will be the Schr\:odinger map equation into $M=G/H$ and its mKdV analog. 
We will discuss some of their properties. 
Specifically, we show that, in contrast to the geometrical non-stretching curve flows 
in Theorem~\ref{thm:main.hierarchy.curve}, 
these geometrical map equations are locally stretching 
yet have a time-independent total arclength. 
Finally, 
we derive the Hamiltonian structure of the hierarchy of geometrical map equations. 
These equations turn out to have a multi-Hamiltonian structure which has a simple form
involving only the Hermitian structure and the Riemannian connection on $M=G/H$. 
This generalizes and makes explicit some results in \Ref{TerUhl1999}
for Grassmannian Hermitian spaces.

\subsection{Geometrical relationship between Hermitian spaces and Lie algebras}

There is an explicit isometric embedding of a Hermitian symmetric space $M=G/H$ 
into the orbit of the adjoint action of $G$ through $A\in\gsp$ 
in the symmetric Lie algebra $\gsp=\msp\oplus\hsp$,
where $A$ is the imaginary-unit \eqref{centrA}--\eqref{adA} in $\hsp$. 
Since $G$ is compact, 
this embedding is given by \cite{Mas}
$\Ad(G)A=\exp(\ad(\gsp))A \simeq M$. 
There is gauge freedom in the embedding, 
because $\ad(\hsp)A=0$ implies $\Ad(G)A= \Ad(G/H)\Ad(H)A= \Ad(G/H)A$. 
Thus, the embedding can be expressed as 
\begin{equation}\label{AdG.adm.rel}
\Ad(G/H) A\simeq \exp(\ad(\msp))A \simeq M,
\end{equation}
which is closely connected to the representation \eqref{T.N.B}
for the tangent vector $\T$ of non-stretching curves $\curvflow(x)$ 
in the affine Hermitian space $\aff(G,H)$ constructed from $\gsp$.
Specifically, 
\begin{equation}\label{T.iso.M}
\{\enorminv \T \simeq \Ad(\psi_\cofra)A,\ \psi_\cofra\in G\} \simeq M
\end{equation}
with $\enorminv$ given by the norm \eqref{norm.A} of $A$. 
(Note that $\Ad(\psi_\cofra) A =A$ when $\psi_\cofra(x,t)\in H\subset G$.)

Hence, for any given Hermitian symmetric space $M$, 
the components of the tangent vector $\T$ provide global coordinates in $M$,
as will be illustrated in the example in \secref{sec:example}. 
As a consequence, 
$\enorminv\T(x,t)\in\aff(G,H)=(G\rtimes\gsp)/G$ 
can be viewed as a map $\map(x,t)$ into $M=G/H$. 
Recall \cite{Hel,Sha} that the tangent space of $M$ is isomorphic $\msp$, 
whereby $T_\map M \simeq \msp$. 
An embedding of $T_\map M$ analogous to equation \eqref{T.iso.M}
comes from the principal normal vector $\T_x=\tfrac{1}{\kappa}\N$ of non-stretching curves $\curvflow(x)$, 
namely, 
\begin{equation}\label{DxT.iso.M}
\{\enorminv \T_x \simeq \Dx(\Ad(\psi_\cofra)A) = -\Ad(\psi_\cofra)J(\psi_\cofra\inv \partial_x\psi_\cofra),\ \psi_\cofra(x)\in G\} \simeq T_\map M . 
\end{equation}

A comparison between the tangent space embedding \eqref{DxT.iso.M} 
and the representation \eqref{T.N.B} of the principal bi-normal vector 
$\ad_\gsp(\T)\T_x=\tfrac{1}{\kappa}\B$ 
shows that the Hasimoto variable is given by 
$\q = \psi_\cofra\inv \partial_x\psi_\cofra$. 
The property that $\q$ belongs to $\msp\subset\gsp$ 
directly corresponds to the condition on $\psi_\cofra(x)$ that 
$\psi_\cofra\inv \partial_x\psi_\cofra$ belongs to $\msp\subset\gsp$. 
This condition can always be achieved, 
starting from a general $\psi_\cofra(x)\in G$, 
by making a gauge transformation
\begin{equation}\label{AdG.gauge}
\psi_\cofra(x) \rightarrow \psi_\cofra(x) \g(x),
\quad
\g(x)\in H\subset G, 
\end{equation}
where $\g(x)$ is determined by 
\begin{equation}\label{AdG.gauge.cond}
(\psi_\cofra\inv \partial_x\psi_\cofra)_\hsp 
\rightarrow ((\psi_\cofra\g(x))\inv \partial_x(\psi_\cofra\g(x)))_\hsp
= (\g(x)\inv(\psi_\cofra\inv \partial_x\psi_\cofra)\g(x) + \g(x)\partial_x\g(x))_\hsp =0 . 
\end{equation} 
Such a gauge transformation is the analog of going from a general linear coframe 
in $\aff(G,H)$ to a $G$-parallel frame 
as described in Theorem~\ref{thm:Gparallel.frame}. 
Note that $\T$, $\N$, $\B$ are gauge invariant,
since $\Ad(\psi_\cofra)A \rightarrow \Ad(\psi_\cofra)\Ad(\g(x))A = \Ad(\psi_\cofra)A$. 

The Hermitian and Riemannian structure of $M=G/H$ 
given in terms of the embedding \eqref{AdG.adm.rel} will be discussed next. 

The Hermitian structure on $T_\map M$ is related to the Hamiltonian operator $J_\T=\ad_\gsp(\T)$. 
Specifically, 
if $Y\simeq \Ad(\psi_\cofra)y \in T_\map M$, with $y\in\msp$, 
then 
\begin{equation}
J_\T Y \simeq \enorm\Ad(\psi_\cofra)Jy \in T_\map M , 
\end{equation}
whereby $\enorminv J_\T$ coincides with action of $J=\ad(A)$ on $\msp$. 
Hereafter we will denote 
\begin{equation}\label{J.map}
J_\map := \enorminv J_\T, 
\end{equation}
which acts as 
\begin{equation}
J_\map^2 = -\id 
\end{equation}
on $T_\map M$. 

The Riemannian metric, $g_\msp$, on $M$ is given by 
restriction of the Euclidean metric $g$ on $\aff(G,H)\simeq\gsp$ 
to the subspace corresponding to $\msp\subset\gsp$. 
Namely, for $Y\simeq \Ad(\psi_\cofra)y,Z\simeq \Ad(\psi_\cofra)z\in T_\map M$, 
with $y,z\in \msp$, 
\begin{equation}
g_\msp(Y,Z) = \brack{\Ad(\psi_\cofra)y,\Ad(\psi_\cofra)z}_\msp = \brack{y,z}_\msp
\end{equation}
by $\Ad$-invariance of the inner product. 
This metric determines a unique torsion-free covariant derivative, $\covder$, on $M$,
which can be identified with the restriction of $\grad$ 
in $\aff(G,H)\simeq\gsp$ to the subspace corresponding to $\msp\subset\gsp$:
\begin{equation}\label{solder.grad.m}
\covder \simeq -J_\map^2\grad . 
\end{equation}
The curvature of $M$ can be found by \cite{Hel,Sha}
\begin{equation}
R(X,Y) Z:= \comm{\covder_X,\covder_Y}Z =-\ad_\gsp([X,Y]_\gsp)Z 
\end{equation}
when $X,Y,Z$ are vector fields in $T_\map M$ such that $X,Y$ are commuting, 
where $R(\cdot\,,\cdot\,)$ is the Riemann curvature tensor. 
It is worth giving an explicit derivation of the curvature formula. 

Consider, from the representation \eqref{aff.grad.vector}, 
\begin{equation}
\begin{aligned}
\covder_Y Z & = -J_\map^2 D_Y (\Ad(\psi_\cofra)z) \\
& = -\Ad(\psi_\cofra)J^2(\partial_Y z + [\psi_\cofra\inv\partial_Y\psi_\cofra,z]) \\
& = \Ad(\psi_\cofra)(\partial_Y z+ [(\psi_\cofra\inv\partial_Y\psi_\cofra)_\hsp,z])
\end{aligned}
\end{equation}
by the Lie bracket structure \eqref{symmsp.brackets} of $M$. 
Thus, 
\begin{equation}
\begin{aligned}
\covder_X(\covder_Y)Z & = -J_\map^2 D_X(- J_\map^2D_Y (\Ad(\psi_\cofra)z)) \\
& = \Ad(\psi_\cofra)\big( 
\partial_X\partial_Y z + [(\psi_\cofra\inv\partial_Y\psi_\cofra)_\hsp,\partial_X z]
+ [(\psi_\cofra\inv\partial_X\psi_\cofra)_\hsp,\partial_Y z]
\\&\qquad
+[(\psi_\cofra\inv\partial_X\partial_Y\psi_\cofra)_\hsp,z]
-[(\psi_\cofra\inv\partial_X\psi_\cofra \psi_\cofra\inv\partial_Y\psi_\cofra)_\hsp,z]
\\&\qquad
+[(\psi_\cofra\inv\partial_X\psi_\cofra)_\hsp,[(\psi_\cofra\inv\partial_Y\psi_\cofra)_\hsp,z]] 
\big), 
\end{aligned}
\end{equation}
and hence 
\begin{equation}
\comm{\covder_X,\covder_Y}Z 
= \Ad(\psi_\cofra)( 
[[(\psi_\cofra\inv\partial_X\psi_\cofra)_\hsp,(\psi_\cofra\inv\partial_Y\psi_\cofra)_\hsp],z]
-[[\psi_\cofra\inv\partial_X\psi_\cofra, \psi_\cofra\inv\partial_Y\psi_\cofra]_\hsp,z]
)
\end{equation}
after use of the condition $[\partial_X,\partial_Y]=0$ that $X,Y$ are commuting,
and the Jacobi identity. 
Then, use of the decomposition 
$\psi_\cofra\inv\partial\psi_\cofra = (\psi_\cofra\inv\partial\psi_\cofra)_\hsp + (\psi_\cofra\inv\partial\psi_\cofra)_\msp$
combined with the Lie bracket structure \eqref{symmsp.brackets} of $M$
yields 
\begin{equation}
\comm{\covder_X,\covder_Y}Z 
= -\Ad(\psi_\cofra)( 
[[(\psi_\cofra\inv\partial_X\psi_\cofra)_\msp,(\psi_\cofra\inv\partial_Y\psi_\cofra)_\msp],z] 
) . 
\end{equation}
Next, 
\begin{equation}
\begin{aligned}\ 
[X,Y]_\gsp 
& = [-\Ad(\psi_\cofra)J(\psi_\cofra\inv\partial_X\psi_\cofra), -\Ad(\psi_\cofra)J(\psi_\cofra\inv\partial_Y\psi_\cofra)]
\\
& = \Ad(\psi_\cofra)[J(\psi_\cofra\inv\partial_X\psi_\cofra), J(\psi_\cofra\inv\partial_Y\psi_\cofra)]
\\
& = \Ad(\psi_\cofra)[(\psi_\cofra\inv\partial_X\psi_\cofra)_\msp, (\psi_\cofra\inv\partial_Y\psi_\cofra)_\msp]
\end{aligned}
\end{equation}
by $\Ad$-invariance of the Lie bracket and the property $J\hsp=0$. 
Thus, 
\begin{equation}
\begin{aligned}
\ad_\gsp([X,Y]_\gsp)Z
& =\ad(\Ad(\psi_\cofra)[(\psi_\cofra\inv\partial_X\psi_\cofra)_\msp, (\psi_\cofra\inv\partial_Y\psi_\cofra)_\msp])z
\\
& = \Ad(\psi_\cofra)\ad([(\psi_\cofra\inv\partial_X\psi_\cofra)_\msp, (\psi_\cofra\inv\partial_Y\psi_\cofra)_\msp])z , 
\end{aligned}
\end{equation}
which completes the derivation.

\subsection{Geometric Schr\"odinger map equation}

Through the relations
\begin{equation}\label{map.spinvector.rels}
\map \simeq \enorminv \T, 
\quad
\map_t\simeq \enorminv \T_t , 
\quad
\map_x\simeq \enorminv \T_x ,
\end{equation}
and 
\begin{equation}\label{map.J.rel}
J_\map \covder_x \simeq -\enorminv J_\T\nabla_x , 
\end{equation}
the general Heisenberg spin model \eqref{nls.aff.spinmodel} 
for $\T(x,t)$ in $\aff(G,H)$
can be expressed as a geometrical map equation for $\map(x,t)$. 
This yields
\begin{equation}\label{schr.map.eqn}
\map_t = -J_\map \covder_x \map_x, 
\end{equation}
which is the well-known Schr\"odinger map equation 
formulated on a general Hermitian symmetric space $M=G/H$. 

The Schr\"odinger map equation \eqref{schr.map.eqn} can be viewed geometrically
as a curve flow on $M=G/H$,
where the local arclength is given by 
$d\ell = \sqrt{g_\msp(\map_x,\map_x)}\; dx$. 
The flow is locally stretching:
\begin{equation}
\partial_t\,d\ell =  -(\partial_x g_\msp(\covder_x\map_x,J_\map\map_x)/\sqrt{g_\msp(\map_x,\map_x)})\; d\ell 
\neq 0 . 
\end{equation}
This result follows from 
\begin{equation}
D_t g_\msp(\map_x,\map_x) 
= 2g_\msp(\covder_t\map_x,\map_x) 
= 2g_\msp(\covder_x(J_\map\covder\map_x),\map_x) 
= -2\Dx g_\msp(\covder_x\map_x,J_\map\map_x) \neq 0
\end{equation}
using integration by parts followed by $g_\msp(J_\map X,X)=0$. 
Nevertheless, the total arclength of $\map$ on a domain $C$ in $M$ is preserved,
since 
\begin{equation}\label{map.total.arclength}
\ell = \int_C \sqrt{g_\msp(\map_x,\map_x)}\; dx
\end{equation}
satisfies 
\begin{equation}
\frac{d\ell}{dt} = 0
\end{equation}
for $C=S^1$ with periodic boundary conditions on $\map$, 
or $C=\Rnum$ with asymptotic decay conditions on $\map_x$. 

There is an explicit transformation 
under which the Schr\"odinger map equation \eqref{schr.map.eqn} 
is equivalent to the $H$-invariant NLS equation \eqref{FK.nls} 
for the Hasimoto variable $\q$. 
Moreover, from the Hamiltonian formulation of the NLS equation on $\q$ 
and the general Heisenberg spin model on $\T$, 
the Schr\"odinger map equation acquires a rich Hamiltonian structure, 
which will be derived next.

\subsection{Tri-Hamiltonian operators}

The identifications \eqref{map.spinvector.rels}
allow the spin vector Hamiltonian operators $J_\T$ and $\Hop_\T$ to be identified 
directly with corresponding operators 
$J_{\enorminv\inv\map}$ and $\Hop_{\enorminv\inv\map}$
which will be, formally, a pair of compatible Hamiltonian operators with respect to $\map$. 
However, the form of $\Hop_\T$ involves $\ad_\gsp$ and $\nabla^\hsp_x$ 
which are defined in terms of the structure of the affine space $\aff(G,H)$
and are not part of the intrinsic structure of the Hermitian symmetric space $M=G/H$. 
This obstacle can be circumvented by introducing geometrical versions of 
$\ad_\gsp$ and $\nabla^\hsp_x$ 
based on the structure of the group manifolds $G$ and $H$. 

One way to proceed is by exploiting the correspondence between 
the general Heisenberg spin model \eqref{nls.spinmodel.Hamil} 
and the NLS isospectral flow \eqref{Gparallel.nls} 
on the Hasimoto variable $\q$
arising through the soldering relations \eqref{T.N.B} and \eqref{solder.grad.m}. 
These relations show that 
\begin{equation}\label{struct.rels}
J_\map = \ad_G(\map),
\quad
\covder_x = -J_\map^2\Dx,
\quad
[\covder_x,J_\map] = 0, 
\end{equation}
hold on $T_xM$, 
and that 
\begin{equation}\label{map.rels}
\map = \Ad(\psi_\cofra) A,
\quad
\map_x = -\Ad(\psi_\cofra)J\q, 
\quad
\map_t = -\Ad(\psi_\cofra)Jw,
\end{equation}
with 
\begin{equation}\label{q.w.mapping}
\q = \psi_\cofra\inv \partial_x\psi_\cofra \in\msp, 
\quad
w = \psi_\cofra\inv \partial_t\psi_\cofra \in\gsp=\msp\oplus\hsp . 
\end{equation}

As a consequence, 
the Hamiltonian operators $J$ and $\Hop$ and the recursion operator $\Rop=\Hop J\inv$
with respect to $\q$ can be transformed into corresponding geometrical operators. 
Let $\Gcovder$ and $\Hcovder$ denote the unique torsion-free covariant derivative 
determined by the metric on $G$ and $H$, respectively. 
Then $\covder$ and $\Hcovder$ are the restrictions of $\Gcovder$ to $T_xM$ and $T_xH$:
\begin{equation}
\covder = -J_\map^2\Gcovder,
\quad
\Hcovder = (1+J_\map^2)\Gcovder . 
\end{equation}
Let $\ad_G(\cdot\,)$ denote the adjoint action of $G$ given by the Lie bracket in $T_xG$,
which can be identified with $\ad_\gsp(\cdot\,)$ introduced previously. 

\begin{prop}\label{prop:Ad.mapping}
Under the mapping $\Ad(\psi_\cofra)$:
\begin{align}
\Ad(\psi_\cofra)J\Ad(\psi_\cofra^{-1}) &
= J_\map, 
\label{map.J}\\
\Ad(\psi_\cofra)\Hop\Ad(\psi_\cofra^{-1}) & 
= -J_\map(\covder_x - \Mop_\map)J_\map 
:=\Hop_\map, 
\label{map.Hop}\\
\Ad(\psi_\cofra)\Rop\Ad(\psi_\cofra^{-1}) & 
= -J_\map(\covder_x - \Mop_\map)
:=\Rop_\map, 
\label{map.Rop}
\end{align}
where
\begin{equation}
\Mop_\map := \ad_G(\map_x)(\Hcovder_x)\inv\ad_G(\map_x)
\end{equation}
is a geometrical operator acting in $T_xM$. 
\end{prop}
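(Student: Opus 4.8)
The plan is to prove the three identities by conjugating each operator directly, using the correspondences \eqref{map.rels}--\eqref{q.w.mapping}, in particular $\map=\Ad(\psi_\cofra)A$, $\map_x=-\Ad(\psi_\cofra)J\q$, and $\q=\psi_\cofra\inv\partial_x\psi_\cofra\in\msp$, together with the differentiation rule $\partial_x\bigl(\Ad(\psi_\cofra)y\bigr)=\Ad(\psi_\cofra)\bigl(\partial_x y+\ad(\q)y\bigr)$ underlying \eqref{aff.grad.vector}. The identity \eqref{map.J} is immediate: $\Ad$-equivariance of the adjoint map gives $\Ad(\psi_\cofra)\ad(X)\Ad(\psi_\cofra\inv)=\ad_G\bigl(\Ad(\psi_\cofra)X\bigr)$ for all $X\in\gsp$, and taking $X=A$ with $J=\ad(A)$ and $\map=\Ad(\psi_\cofra)A$ yields $\Ad(\psi_\cofra)J\Ad(\psi_\cofra\inv)=\ad_G(\map)=J_\map$; inverting this, and using $J_\map^2=-\id$ on $T_\map M$, also gives $\Ad(\psi_\cofra)J\inv\Ad(\psi_\cofra\inv)=J_\map\inv=-J_\map$ there.

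The central step is to show that conjugating the flat ambient derivative by $\Ad(\psi_\cofra)$ produces the intrinsic Levi-Civita connections. For $w$ valued in the subspace $\Ad(\psi_\cofra)\msp\simeq T_\map M$, the differentiation rule gives $\Ad(\psi_\cofra)\Dx\Ad(\psi_\cofra\inv)w=\partial_x w-\ad_G\bigl(\Ad(\psi_\cofra)\q\bigr)w$. Splitting $\partial_x w$ into its components along $\Ad(\psi_\cofra)\msp$ and $\Ad(\psi_\cofra)\hsp$ by means of the bracket relations \eqref{symmsp.brackets}, one has $\partial_x w=\covder_x w+\Ad(\psi_\cofra)\ad(\q)\tilde w$ with $\tilde w=\Ad(\psi_\cofra\inv)w$, where $\covder_x w$ is identified with the tangential part via \eqref{solder.grad.m} and $\ad(\q)\tilde w\in\hsp$; the transverse piece $\Ad(\psi_\cofra)\ad(\q)\tilde w=\ad_G\bigl(\Ad(\psi_\cofra)\q\bigr)w$ then cancels exactly against $-\ad_G\bigl(\Ad(\psi_\cofra)\q\bigr)w$, leaving $\Ad(\psi_\cofra)\Dx\Ad(\psi_\cofra\inv)w=\covder_x w$. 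The identical argument applied to functions valued in $\Ad(\psi_\cofra)\hsp\simeq T_xH$ gives $\Ad(\psi_\cofra)\Dx\Ad(\psi_\cofra\inv)=\Hcovder_x$ there, hence $\Ad(\psi_\cofra)\Dxinv\Ad(\psi_\cofra\inv)=(\Hcovder_x)\inv$ on that subbundle.

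With these identifications, \eqref{map.Rop} and \eqref{map.Hop} follow by conjugating the factored forms. Starting from $\Rop=J\inv\bigl(\Dx-\ad(J\q)\Dxinv\ad(J\q)\bigr)$ in \eqref{Rop}, conjugation term by term uses Step~1 for the outer $J\inv$, Step~2 for $\Dx$ and for the inner $\Dxinv$ (the latter acting on $\Ad(\psi_\cofra)\hsp$-valued data, since $\ad(J\q)$ maps $\msp$ into $\hsp$ by \eqref{symmsp.brackets}), and $\Ad(\psi_\cofra)\ad(J\q)\Ad(\psi_\cofra\inv)=\ad_G\bigl(\Ad(\psi_\cofra)J\q\bigr)=-\ad_G(\map_x)$, whose two occurrences contribute sign changes that cancel. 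This yields $\Rop_\map=J_\map\inv\bigl(\covder_x-\ad_G(\map_x)(\Hcovder_x)\inv\ad_G(\map_x)\bigr)=-J_\map(\covder_x-\Mop_\map)$, which is \eqref{map.Rop}. Finally, since $\Rop=\Hop J\inv$ on $\msp$ by \eqref{iso.Rop}, conjugation gives $\Hop_\map=\Rop_\map\,J_\map=-J_\map(\covder_x-\Mop_\map)J_\map$, which is \eqref{map.Hop}.

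The main obstacle is Step~2: one must verify that the connection coefficient appearing after conjugation, namely $\ad_G\bigl(\Ad(\psi_\cofra)\q\bigr)=\ad_G(J_\map\map_x)$, precisely reproduces the second-fundamental-form terms of the embeddings $M\hookrightarrow G$ and $H\hookrightarrow G$, so that formal conjugation of $\Dx$ lands exactly on $\covder_x$ and $\Hcovder_x$ rather than on the ambient flat derivative plus a transverse remainder. This is where the symmetric-space bracket structure \eqref{symmsp.brackets}, the relation $\Ad(\psi_\cofra)\q=J_\map\map_x$ (from $\map_x=-\Ad(\psi_\cofra)J\q$ and $J^2=-\id$ on $\msp$), and the fact that $J_\map$ annihilates $\Ad(\psi_\cofra)\hsp$ are all needed; the remaining bookkeeping — tracking whether each intermediate term lies in $\msp$ or $\hsp$, and checking that $(\Hcovder_x)\inv$ is well-defined on the relevant subbundle — is routine.
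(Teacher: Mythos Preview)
Your proof is correct and follows essentially the same approach as the paper's: both arguments use $\Ad$-equivariance of $\ad$ for the $J$ piece, identify $\Ad(\psi_\cofra)\Dx\Ad(\psi_\cofra\inv)$ with $\covder_x$ (resp.\ $\Hcovder_x$) on $\Ad(\psi_\cofra)\msp$- (resp.\ $\Ad(\psi_\cofra)\hsp$-) valued sections via the symmetric-space bracket relations, and convert $\ad(J\q)$ to $-\ad_G(\map_x)$ through $\map_x=-\Ad(\psi_\cofra)J\q$. The only organizational difference is that you isolate the $\Dx\to\covder_x$ identification as a clean standalone step and compute $\Rop_\map$ first from the form $\Rop=J\inv(\Dx-\ad(J\q)\Dxinv\ad(J\q))$, obtaining $\Hop_\map=\Rop_\map J_\map$ by composition, whereas the paper computes $\Hop_\map$ directly from $\Hop=\Dx-\ad(\q)\Dxinv\ad(\q)$ (rewriting $\ad(\q)$ in terms of $\ad(J\q)$ mid-proof) and then obtains $\Rop_\map$ by composition.
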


\begin{proof}
Let $Y=\Ad(\psi_\cofra)y$ be any vector in $T_xM$. 
We start with $J$ and observe 
\begin{equation*}
\Ad(\psi_\cofra)J\Ad(\psi_\cofra^{-1})Y 
= \Ad(\psi_\cofra)[A,y] = [\Ad(\psi_\cofra)A,\Ad(\psi_\cofra)y] = \ad(\map)Y = J_\map, 
\end{equation*}
which yields the mapping \eqref{map.J}. 

We next consider 
\begin{equation*}
\Ad(\psi_\cofra)\Hop(\Ad(\psi_\cofra^{-1})Y)
= \Ad(\psi_\cofra)(\Dx y -\ad(q)\Dxinv(\ad(q)y)) . 
\end{equation*}
For the first term,
we note that  
$\Dx y = -J^2(D_x y + ad(q)y)$ 
by use of $-J^2\msp=\msp$ and $J\hsp=0$
along with the symmetric Lie-algebra structure \eqref{symmsp.brackets} of $\gsp$. 
Hence, we have 
\begin{equation}\label{Hop.term1}
\Ad(\psi_\cofra)\Dx y = -J_\map^2D_x(\Ad(\psi_\cofra)y) = \covder Y . 
\end{equation}
The second term is 
\begin{equation*}
-\Ad(\psi_\cofra)\ad(q)\Dxinv(\ad(q)y) = \Ad(\psi_\cofra)J\ad(Jq)\Dxinv(\ad(Jq)Jy)
\end{equation*}
after we again use the previous properties. 
Then
\begin{equation*}
\begin{aligned}
\Hcovder_x(\Ad(\psi_\cofra)\Dxinv( \ad(Jq)Jy ))
& =\Ad(\psi_\cofra)((\Dx+\ad(q))\Dxinv( \ad(Jq)Jy ))_\hsp \\
& =\Ad(\psi_\cofra)\ad(Jq)Jy
= \ad_G(\map_x)J_\map Y
\end{aligned}
\end{equation*}
since $(\ad(q)\Dxinv( \ad(Jq)Jy ))_\hsp=0$ 
by the symmetric Lie-algebra structure \eqref{symmsp.brackets}. 
Hence we have 
\begin{equation}\label{Hop.term2}
\Ad(\psi_\cofra)J\ad(Jq)\Dxinv(\ad(Jq)Jy)
= J_\map\ad_G(\map_x)(\Hcovder_x)\inv(\ad_G(\map_x)J_\map Y)
= J_\map\Mop_\map(J_\map Y) . 
\end{equation}
Combining the two terms \eqref{Hop.term2} and \eqref{Hop.term1},
we obtain the mapping \eqref{map.Hop}. 

Finally, composition of $\Hop_\map$ and $J_\map^{-1}= -J_\map$ yields
the mapping \eqref{map.Rop}. 
\end{proof}

The geometrical operators \eqref{map.J}--\eqref{map.Rop} 
have the following main properties. 

\begin{thm}\label{thm:Hamilops.map}
In any Hermitian symmetry space $M=G/H$, 
\begin{equation}\label{Rop.map}
\Rop_\map = \Hop_\map J_\map^{-1}
\end{equation}
is a hereditary recursion operator,
and 
\begin{equation}\label{J.Hop.Eop.map}
J_\map, 
\quad
\Hop_\map = \Rop_\map J_\map, 
\quad
\Eop_\map = \Rop_\map\Hop_\map = \Rop_\map^2 J_\map
\end{equation}
are compatible Hamiltonian operators 
with respect to the geometrical map variable $\map\in M$. 
Each of these operators involves only the intrinsic structure of $M$. 
They respectively correspond to the recursion operator $\Rop$ 
and the Hamiltonian operators
$\Eop=\Rop\Hop=\Rop^2J$, $\Rop\Eop=\Rop^2\Hop=\Rop^3J$, $\Rop^2\Eop=\Rop^3\Hop=\Rop^4J$, 
with respect to the Hasimoto variable $\q\in\msp$. 
\end{thm}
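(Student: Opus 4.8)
The plan is to transport onto the map variable $\map$ the multi-Hamiltonian structure already available on the Hasimoto side, and then to recognize the transported operators, via Proposition~\ref{prop:Ad.mapping}, as the intrinsic operators \eqref{map.J}--\eqref{map.Rop}. The bridge is the change of variables $\q=\psi_\cofra\inv\partial_x\psi_\cofra$, $\map=\Ad(\psi_\cofra)A$ of \eqref{map.rels}--\eqref{q.w.mapping}, together with the fact that $\Ad(\psi_\cofra)$ is an orthogonal transformation of $\gsp$ (by $\Ad$-invariance of the Killing form) and hence preserves skew-adjointness, the Schouten bracket, compatibility and the hereditary property.

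First I would compute the Fr\'echet derivative of this change of variables along the curve. Writing a variation of the frame as $\xi:=\psi_\cofra\inv\delta\psi_\cofra\in\gsp$, one gets $\delta\map=-\Ad(\psi_\cofra)J\xi_\msp$ and $\delta\q=(\Dx+\ad(\q))\xi$; the gauge condition \eqref{AdG.gauge.cond} keeping $\q\in\msp$ forces $\xi_\hsp=-\Dxinv\ad(\q)\xi_\msp$, so that $\delta\q=\Hop(\xi_\msp)$. Eliminating $\xi_\msp$ yields the Fr\'echet derivative $\mathcal F:=\Hop\,J\,\Ad(\psi_\cofra\inv)$ of $\q$ regarded as a function of $\map$. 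Under the change of variables a Hamiltonian operator $\Dop$ with respect to $\q$ becomes $\mathcal F\inv\Dop(\mathcal F\inv)^\dagger$ with respect to $\map$, while a recursion operator $\Rop$ becomes $\mathcal F\inv\Rop\,\mathcal F$; using $\Ad(\psi_\cofra)^\dagger=\Ad(\psi_\cofra\inv)$, the skew-adjointness of $J$ and $\Hop$, and $\Rop=\Hop J\inv$, a short computation (with $J\inv=-J$ and $J^{-2}=-\id$ on $\msp$) shows that this operation sends $\Rop$ to $\Ad(\psi_\cofra)\Rop\Ad(\psi_\cofra\inv)=\Rop_\map$, and sends the triple $\Eop=\Rop^2J$, $\Rop\Eop=\Rop^3J$, $\Rop^2\Eop=\Rop^4J$ (which are Hamiltonian and mutually compatible with respect to $\q$ by Theorem~\ref{thm:iso.biHamil.hiearchy} together with the standard theory of Hamiltonian pencils \cite{Olv-book,Mag}) to $\Ad(\psi_\cofra)J\Ad(\psi_\cofra\inv)=J_\map$, $\Ad(\psi_\cofra)\Hop\Ad(\psi_\cofra\inv)=\Hop_\map$, $\Ad(\psi_\cofra)\Eop\Ad(\psi_\cofra\inv)=\Rop_\map\Hop_\map=\Eop_\map$, respectively, by Proposition~\ref{prop:Ad.mapping}. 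Hence $J_\map$, $\Hop_\map$, $\Eop_\map$ are compatible Hamiltonian operators with respect to $\map$, and $\Rop_\map=\Hop_\map J_\map^{-1}$ is a hereditary recursion operator (equivalently, heredity follows from the factorization into the compatible Hamiltonian pair $(\Hop_\map,J_\map)$ by \cite{Olv-book}, or directly from heredity of $\Rop$, Theorem~\ref{thm:iso.flow.hierarchy}, since heredity is preserved by the conjugation $\Ad(\psi_\cofra\inv)(\cdot)\Ad(\psi_\cofra)$). The same conclusions can be reached by the alternative route of transporting the spin-vector tri-Hamiltonian structure of Theorem~\ref{thm:main.hierarchy.spinmodel} through the trivial constant rescaling $\map\simeq\enorminv\T$ of \eqref{map.spinvector.rels} and identifying the results with \eqref{map.J}--\eqref{map.Rop} using \eqref{J.map} and \eqref{struct.rels}.

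The computation in the previous step simultaneously establishes the stated correspondence, $\Rop_\map\leftrightarrow\Rop$, $J_\map\leftrightarrow\Eop=\Rop^2J$, $\Hop_\map\leftrightarrow\Rop\Eop=\Rop^3J$, $\Eop_\map\leftrightarrow\Rop^2\Eop=\Rop^4J$, the shift by $\Rop^2$ being exactly the effect of the nonlocal factor $\Hop$ (hence $\Dxinv$) in $\mathcal F$. Intrinsicness is then read directly off \eqref{map.J}--\eqref{map.Rop}: each of $J_\map=\ad_G(\map)$, $\Rop_\map=-J_\map(\covder_x-\Mop_\map)$, $\Hop_\map=-J_\map(\covder_x-\Mop_\map)J_\map$ and $\Eop_\map=\Rop_\map\Hop_\map$ is built only from the complex structure $J_\map$ of $M$, the Levi-Civita connection $\covder_x$ of $M$ along the curve, and $\Mop_\map=\ad_G(\map_x)(\Hcovder_x)\inv\ad_G(\map_x)$, which in turn involves only the adjoint action of $G$ and the Levi-Civita connection $\Hcovder_x$ of $H$ pulled back along the curve; none of this refers to the affine space $\aff(G,H)$.

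The main obstacle is the bookkeeping in computing $\mathcal F$: one must solve the gauge constraint to isolate the $\hsp$-component of the frame variation (this is where the nonlocal operators $\Hop$ and $\Dxinv$ enter, and hence the origin of the $\Rop^2$-shift), verify that the conjugation $\Ad(\psi_\cofra)$ carries $\ad_\gsp(\T)\mapsto\ad_G(\map)$, $\nabla_x\mapsto\covder_x$ and $\nabla^\hsp_x\mapsto\Hcovder_x$ (which uses that $\psi_\cofra$ realizes the soldering \eqref{soldering.e.w} and that the relevant metrics on $\aff(G,H)$, $M$ and $H$ are all the $\Ad$-invariant ones), and confirm that the formal inverses $\Dxinv$, $(\Hcovder_x)\inv$ and $J_\map^{-1}$ are manipulated consistently, with boundary terms controlled on $C=\Rnum$ or $S^1$, exactly as in the Lie-algebra-valued Hamiltonian calculus of \Ref{Anc2008}. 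Everything else is routine once Proposition~\ref{prop:Ad.mapping} is in hand.
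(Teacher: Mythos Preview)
Your proof is correct and follows essentially the same route as the paper: compute the differential of the change of variables $\q\mapsto\map$ (your Fr\'echet derivative $\mathcal F=\Hop\,J\,\Ad(\psi_\cofra^{-1})$ is exactly the content of the paper's Lemma~\ref{lem:var.map.q.rels}, which packages the same computation as a variational identity $J_\map(\delta H/\delta\map)=-\Ad(\psi_\cofra)\Hop(\delta H/\delta\q)$ and the transformation rule $\Dop_\map=J_\map\Ad(\psi_\cofra)\Hop^{-1}\Dop_\q\Hop^{-1}\Ad(\psi_\cofra^{-1})J_\map$), then push forward the compatible triple $\Eop,\Rop\Eop,\Rop^2\Eop$ and identify the results via Proposition~\ref{prop:Ad.mapping}. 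The only cosmetic difference is that you speak of $\mathcal F^{-1}\Dop(\mathcal F^{-1})^\dagger$ while the paper writes out the same operator explicitly in the Lemma; the $\Rop^2$-shift, the intrinsicness argument, and the heredity conclusion are identical.
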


Theorem~\ref{thm:Hamilops.map} gives a generalization, in an explicit form, 
of some results stated in \Ref{TerUhl1999} 
for geometrical maps into Grassmannian Hermitian spaces. 
In particular, 
there is a shift of $-2$ in transforming Hamiltonian structures with respect to $\q$ 
into Hamiltonian structures with respect to $\map$. 
To proceed with the proof, 
a useful variational result will be established first. 

\begin{lem}\label{lem:var.map.q.rels}
(i) 
For any Hamiltonian functional $H$,
\begin{equation}\label{varder.map.q}
J_\map(\delta H/\delta\map) = - \Ad(\psi_\cofra)\Hop(\delta H/\delta q)
\end{equation}
is a variational identity,
where $\Hop$ is the Hamiltonian operator \eqref{Hop}. 
(ii) 
If $\Dop_\q$ is a Hamiltonian operator with respect to $\q$,
then 
\begin{equation}\label{Dop.mapping.map.q}
\Dop_\map = J_\map \Ad(\psi_\cofra)\Hop\inv\Dop_\q\Hop\inv\Ad(\psi_\cofra^{-1}) J_\map
\end{equation}
is a Hamiltonian operator with respect to $\map$.
\end{lem}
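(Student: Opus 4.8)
The plan is to treat both parts as consequences of the chain rule for variational derivatives under the substitution $\q = \psi_\cofra\inv\partial_x\psi_\cofra$, together with the relations \eqref{map.rels} that express $\map$, $\map_x$ in terms of $\psi_\cofra$ and $\q$. The guiding principle is exactly as in Lemma~\ref{lem:var.spinvector.curve.rels}: a change of dynamical variable induces a transpose mapping on variational derivatives, and if the change of variable is implemented (up to the transpose) by an operator that itself appears in a Hamiltonian pair, then conjugation by it preserves the Hamiltonian property. Here the role of $\nabla_x$ in that earlier lemma is played by the composite operator $\Hop\inv\Ad(\psi_\cofra^{-1})J_\map$, so part (ii) will follow from part (i) in essentially one line.

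For part (i), first I would compute how $\q$ varies under an arbitrary variation $\delta\map$ of the map. Since $\map=\Ad(\psi_\cofra)A$ and $\map_x=-\Ad(\psi_\cofra)J\q$, a variation $\delta\psi_\cofra$ of the frame produces $\delta\map = [\Ad(\psi_\cofra)(\psi_\cofra\inv\delta\psi_\cofra),\map]$, i.e. $\delta\map = -J_\map\,\Ad(\psi_\cofra)(\psi_\cofra\inv\delta\psi_\cofra)_\msp$ after using $J\hsp=0$ and $[\,\cdot\,,A]$; and correspondingly $\delta\q = \Dx(\psi_\cofra\inv\delta\psi_\cofra)_\msp + (\text{lower-order terms in }\q)$, which one recognizes from \eqref{Hop} to be $\Hop$ applied to $\Ad(\psi_\cofra^{-1})\big({-}J_\map^{-1}\delta\map\big) = \Ad(\psi_\cofra^{-1})(J_\map\delta\map)$, using $J_\map^{-1}=-J_\map$. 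Then for any functional $H$,
\begin{equation*}
\delta H = \brack{\delta H/\delta\q,\delta\q}_\msp
= \brack{\delta H/\delta\q,\Hop\,\Ad(\psi_\cofra^{-1})J_\map\delta\map}_\msp
= \brack{J_\map\,\Ad(\psi_\cofra)\Hop^{\dagger}(\delta H/\delta\q),\delta\map},
\end{equation*}
modulo total $x$-derivatives, using $\Ad$-invariance of the inner product and skew-adjointness of $J_\map$; since $\Hop$ is skew-adjoint (it is a Hamiltonian operator) one has $\Hop^\dagger=-\Hop$, and one checks the sign works out to give \eqref{varder.map.q}. The one subtlety requiring care is that $\psi_\cofra$ is only determined up to the residual gauge freedom \eqref{AdG.gauge}; I would note that the $\hsp$-component of $\psi_\cofra\inv\delta\psi_\cofra$ is pure gauge and drops out of $\delta\map$ and (after the gauge condition \eqref{AdG.gauge.cond}) out of $\delta\q$, so the computation is gauge-consistent.

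For part (ii), suppose $\Dop_\q$ is Hamiltonian with respect to $\q$ and consider the flow $\q_t=\Dop_\q(\delta H/\delta\q)$. Applying $\Hop\inv$ and then $\Ad(\psi_\cofra)$ and then $J_\map$, and using $\q_t = \Hop\,\Ad(\psi_\cofra^{-1})(J_\map\map_t)$ from the variation computation above (with $\delta\to\partial_t$), one gets $\map_t = \Dop_\map(\delta H/\delta\map)$ with $\Dop_\map$ the operator \eqref{Dop.mapping.map.q}, after substituting the identity \eqref{varder.map.q}. That $\Dop_\map$ is Hamiltonian — skew-adjoint with vanishing Schouten bracket — reduces to the same properties for $\Dop_\q$ via this change of variables, exactly as in the proof of Lemma~\ref{lem:var.spinvector.curve.rels}(ii): the transformation $\q=\psi_\cofra\inv\partial_x\psi_\cofra$ together with the relation \eqref{varder.map.q} between variational derivatives acts as a (generalized) canonical-type map, so skew-adjointness of $\Dop_\map$ and the Jacobi/Schouten identity pull back termwise to those of $\Dop_\q$. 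I expect the main obstacle to be bookkeeping rather than conceptual: one must be careful that $\Hop\inv$ is applied to objects in its range (which is why the Hamiltonians in the hierarchy are built so that $\delta H^{(n)}/\delta\q$ lies in the appropriate space), and one must track the $\Ad(\psi_\cofra)$ factors and the $J_\map^{-1}=-J_\map$ signs consistently so that the skew-adjointness verification for $\Dop_\map$ genuinely collapses onto that for $\Dop_\q$.
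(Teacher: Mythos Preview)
Your proposal is correct and follows essentially the same approach as the paper's own proof: compute $\delta\q$ and $\delta\map$ from an arbitrary variation $\delta\psi_\cofra$, recognize that the $\hsp$-projection of $(\Dx+\ad(\q))(\psi_\cofra\inv\delta\psi_\cofra)$ eliminates $(\psi_\cofra\inv\delta\psi_\cofra)_\hsp$ to produce exactly $\delta\q=\Hop(\psi_\cofra\inv\delta\psi_\cofra)_\msp$, and then compare the two expressions for $\delta H$; part (ii) follows by specializing $\delta=\partial_t$ and invoking the canonical-transformation argument from Lemma~\ref{lem:var.spinvector.curve.rels}. The paper carries out the comparison of $\delta H$ slightly differently---it writes $g_\msp(\delta H/\delta\map,\delta\map)=\brack{\Ad(\psi_\cofra^{-1})J_\map(\delta H/\delta\map),\Ad(\psi_\cofra^{-1})J_\map\delta\map}_\msp$ using the isometry property of $J_\map$ so that both sides live in $\msp$ and can be equated directly---whereas you move all operators by adjoints in one step; but the content is the same, and your remark on gauge consistency is a useful addition.
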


\begin{proof}
To derive part (i), 
consider an arbitrary variation $\delta\psi_\cofra$. 
The first expression \eqref{q.w.mapping} yields 
$\delta\q 
= (\Dx+\ad(\q))(\psi_\cofra\inv\delta\psi_\cofra)$, 
which can be projected into $\msp$ and $\hsp$. 
Since $\delta\q$ belongs to $\msp$, 
the projections give 
$\delta\q = \Dx(\psi_\cofra\inv\delta\psi_\cofra)_\msp + \ad(\q)(\psi_\cofra\inv\delta\psi_\cofra)_\hsp$
and 
$\Dx(\psi_\cofra\inv\delta\psi_\cofra)_\hsp + \ad(\q)(\psi_\cofra\inv\delta\psi_\cofra)_\msp=0$
by use of the symmetric Lie-algebra structure \eqref{symmsp.brackets} of $\gsp$. 
Solving the second equation for $(\psi_\cofra\inv\delta\psi_\cofra)_\hsp$ 
and substituting it into the first equation, 
we obtain 
\begin{equation}\label{var.q}
\delta\q = \Hop(\psi_\cofra\inv \delta\psi_\cofra)_\msp . 
\end{equation}
Next, the first expression \eqref{map.rels} yields
$\delta\map = \delta(\psi_\cofra A\psi_\cofra\inv) = -\Ad(\psi_\cofra)J(\psi_\cofra\inv\delta \psi_\cofra)$. 
Hence, we have 
\begin{equation}\label{var.map}
\delta\map = -J_\map(\delta\psi_\cofra\psi_\cofra\inv)_\msp . 
\end{equation}
The variational relations \eqref{var.map} and \eqref{var.q} then yield
\begin{equation}\label{var.q.map.rel}
\delta\q = \Hop(\Ad(\psi_\cofra^{-1})J_\map\delta\map) . 
\end{equation}
Now consider the corresponding variations of any Hamiltonian functional $H$:
modulo a total $x$-derivative, 
\begin{equation}\label{var.H.q}
\delta H = \brack{\delta H/\delta\q,\delta\q}_\msp 
= \brack{\delta H/\delta\q,\Hop(\Ad(\psi_\cofra^{-1})J_\map\delta\map)}_\msp 
= -\brack{\Hop(\delta H/\delta\q),\Ad(\psi_\cofra^{-1})J_\map\delta\map}_\msp 
\end{equation}
by use of the property that $\Hop$ is skew-adjoint. 
Likewise, 
\begin{equation}\label{var.H.map}
\delta H = g_\msp(\delta H/\delta\map,\delta\map) 
= \brack{\Ad(\psi_\cofra^{-1})J_\map(\delta H/\delta\map),\Ad(\psi_\cofra^{-1})J_\map\delta\map}_\msp 
\end{equation}
by use of the $\Ad$-invariance of $g_\msp$. 
Equating these two expressions \eqref{var.H.q} and \eqref{var.H.map}, 
we get the variational derivative relation \eqref{varder.map.q}.

To verify part (ii), 
suppose that $\Dop_\q$ is Hamiltonian operator with respect to $\q$, 
and consider $\q_t = \Dop_\q(\delta H/\delta \q)$.
We can relate $\q_t$ to $\map_t$ by applying equation \eqref{var.q.map.rel}
to the variation given by $\delta=\partial_t$:
\begin{equation}\label{Dtq.Dtmap.rels}
\q_t = \Hop(\Ad(\psi_\cofra^{-1})J_\map\map_t) . 
\end{equation}
This yields
$\map_t =-J_\map\Ad(\psi_\cofra)\Hop\inv(\q_t)$
and hence 
\begin{equation}
\Dop_\map(\delta H/\delta\map) =-J_\map \Ad(\psi_\cofra)\Hop\inv\Dop_\q(\delta H/\delta \q) . 
\end{equation}
Then, through the relation \eqref{varder.map.q}, 
$\Dop_\map$ is given by the operator \eqref{Dop.mapping.map.q}. 
The proof that this operator is Hamiltonian
can be given by the same argument used in the proof of Lemma~\ref{lem:var.spinvector.curve.rels}. 
\end{proof}

Now the main step in the proof of Theorem~\ref{thm:Hamilops.map} 
consists of applying Lemma~\ref {lem:var.map.q.rels}
to the three compatible Hamiltonian operators 
$\Dop_\q=\Eop=\Hop J\inv \Hop$, 
$\Dop_\q=\Rop\Eop= (\Hop J\inv)^2\Hop$, 
$\Dop_\q=\Rop^2\Eop = (\Hop J\inv)^3\Hop$. 
This yields, respectively, 
$\Dop_\map = J_\map \Ad(\psi_\cofra) J\inv \Ad(\psi_\cofra^{-1}) J_\map 
= J_\map$, 
$\Dop_\map = J_\map \Ad(\psi_\cofra) J\inv\Hop J\inv \Ad(\psi_\cofra^{-1}) J_\map 
= \Hop_\map$, 
$\Dop_\map = J_\map \Ad(\psi_\cofra) J\inv\Hop J\inv\Hop J\inv \Ad(\psi_\cofra^{-1}) J_\map 
= \Eop_\map$. 
Consequently, these geometrical operators are 
compatible Hamiltonian operators with respect to $\map$. 
Moreover, their compatibility implies that $\Rop_\map$ is a hereditary recursion operator. 
This completes the proof.

\subsection{Multi-Hamiltonian form of the general Schr\"odinger map equation}

The Hamiltonian operators \eqref{J.Hop.Eop.map} in Theorem~\ref{thm:Hamilops.map}
are geometrical counterparts of the spin vector operators $J_\T$, $\Hop_\T$, and $\Eop_T$. 
As a consequence, 
a tri-Hamiltonian form can be derived for the Schr\"odinger map equation \eqref{schr.map.eqn}
similarly to the tri-Hamiltonian structure of the general Heisenberg model \eqref{nls.spinmodel.Hamil}. 

This geometrical tri-Hamiltonian form is given by 
\begin{equation}\label{schr.map.triHamil}
\map_t = -J_\map \covder_x \map_x
= J_\map(\delta H^{(1)}/\delta \map) 
= \Hop_\map(\delta H^{(0)}/\delta \map) 
= \Eop_\map(0) , 
\end{equation}
where
\begin{equation}\label{map.Ham.n=1}
H^{(1)} = \int_C \tfrac{1}{2} g_\msp(\map_x,\map_x)\;dx 
\end{equation}
and
\begin{equation}\label{map.Ham.n=0}
H^{(0)}  = \int_C g_\msp(\xi(\map),\map_x)\;dx
\end{equation}
are the Hamiltonian functionals
on the domain $C=\Rnum$ or $S^1$,
with $\xi(\map)$ being any vector function satisfying 
\begin{equation}\label{map.Hamvector.n=0}
\covder\wedge \xi = J^*_\map . 
\end{equation}

Here $J^*_\map$ denotes the skew tensor associated with $J_\map$
as defined by 
$g(J^*_\map,X\wedge Y) =g(X,J_\map Y)$,
for all vectors $X,Y\in T_xM$,
where $M=G/H$. 
Note that equation \eqref{map.Hamvector.n=0} has a solution 
because $\covder J^*_\map=0$ and so no integrability condition arises on $\xi$. 
(Specifically, $\covder\wedge\covder\wedge \xi = 0$ holds identically.)

Also, note 
$\Eop_\map(0)=\Rop_\map\Hop_\map(0)$
with $\Hop(0) = J_\map\ad_G(\map_x)\map = -J_\map\ad_G(\map)\map_x = -J_\map^2\map_x = \map_x$,
which holds by generalizing $\Hcovder_x$ to have a non-trivial cokernel 
\begin{equation}\label{mDxinv}
(\Hcovder_x)\inv(0)= \map . 
\end{equation}
This generalization is well-defined 
because $\Hcovder_x\map = \Ad(\psi_\cofra)(\ad(q)A)_\hsp =0$. 
Then $\Eop_\map(0)=\Rop_\map(\map_x) = -J_\map \covder_x \map_x$. 

The second and third Hamiltonian structures for the Schr\"odinger map equation 
in a general Hermitian symmetric space $M=G/H$ 
are new results. 

Finally, it is worth looking at the geometrical equivalence between 
the Schr\"odinger map equation \eqref{schr.map.eqn} 
and the isospectral NLS equation \eqref{Gparallel.nls}.
This equivalence arises directly from the relation \eqref{Dtq.Dtmap.rels}
derived previously between $\q_t$ and $\map_t$,
together with the main relations
\begin{equation}\label{main.rels}
\map_x = -\Ad(\psi_\cofra)J\q,
\quad
J_\map = \Ad(\psi_\cofra)J\Ad(\psi_\cofra^{-1}),
\quad
\covder_x = D_x\Ad(\psi_\cofra^{-1}) = -\Ad(\psi_\cofra)J^2D_x, 
\end{equation}
which follow from equations \eqref{struct.rels}, \eqref{map.rels}, \eqref{Hop.term1}. 

\begin{prop}\label{prop:schr.map.nls.q}
The Schr\"odinger map equation \eqref{schr.map.eqn} on $\map\in M$
in any Hermitian symmetric space $M=G/H$
is equivalent to the NLS isospectral flow \eqref{Gparallel.nls}
on the Hasimoto variable $\q\in\msp$ 
in the associated affine Hermitian space $\aff(G,H)$. 
Both this isospectral flow and the Schr\"odinger map equation 
geometrically correspond to the general Heisenberg spin vector model \eqref{nls.spinmodel.Hamil}
for $\T\in \aff(G,H)$ 
\end{prop}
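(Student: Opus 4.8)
The plan is to establish the equivalence directly through the change of variables $\q=\psi_\cofra\inv\partial_x\psi_\cofra\in\msp$, $\map=\Ad(\psi_\cofra)A\in M$ already in force throughout \secref{sec:geommap}, using the relation between $\q_t$ and $\map_t$ from Lemma~\ref{lem:var.map.q.rels} as the bridge, and then reading off the link to the Heisenberg model from the way \eqref{schr.map.eqn} was derived. Concretely I would: (i) show that, after fixing $\psi_\cofra$ up to the residual $H$-gauge freedom, a solution of the NLS isospectral flow \eqref{Gparallel.nls} on $\q$ corresponds to a solution of the Schr\"odinger map equation \eqref{schr.map.eqn} on $\map$, and conversely; (ii) note that the correspondence is well defined because $\map$ is gauge invariant; and (iii) identify both flows with the spin vector model \eqref{nls.spinmodel.Hamil}.

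For (i), start with $\q$ solving $\q_t=\efac\,\Rop^2(J\q)$, which by Proposition~\ref{prop:SP.Gparallel.correspond} and Theorem~\ref{thm:SP.Gparallel.hierarchies} is the Hasimoto variable of a $G$-parallel frame; integrating the frame equations produces $\psi_\cofra(x,t)\in G$ with $\psi_\cofra\inv\partial_x\psi_\cofra=\q$, and we set $\map=\Ad(\psi_\cofra)A$. The proof of Lemma~\ref{lem:var.map.q.rels}(ii) gives the inverted form of \eqref{Dtq.Dtmap.rels}, namely $\map_t=-J_\map\,\Ad(\psi_\cofra)\Hop\inv(\q_t)$. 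Substituting the flow and using $\Hop\q=\q_x$ (so $\Rop(J\q)=\q_x$ and hence $\Hop\inv\Rop^2(J\q)=J\inv\q_x=-J\q_x$) yields $\map_t=\efac\,J_\map\,\Ad(\psi_\cofra)(J\q_x)$. Since $\map_x=-\Ad(\psi_\cofra)J\q$ by \eqref{map.rels} and $-J\q\in\msp$, relation \eqref{Hop.term1} gives $\covder_x\map_x=\Ad(\psi_\cofra)\Dx(-J\q)=-\Ad(\psi_\cofra)J\q_x$, so
\begin{equation*}
\map_t=-\efac\,J_\map\covder_x\map_x ,
\end{equation*}
which is the Schr\"odinger map equation \eqref{schr.map.eqn} up to the rescaling $t\mapsto t/\efac$ implicit in the normalizations \eqref{norm.A}, \eqref{s}. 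Each identity above is reversible, so the same chain run backwards sends a solution $\map$ of \eqref{schr.map.eqn} to a solution $\q=\psi_\cofra\inv\partial_x\psi_\cofra$ of \eqref{Gparallel.nls}.

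For (ii), the ambiguity in recovering $\psi_\cofra$ from $\q$ is precisely the $H$-gauge freedom \eqref{AdG.gauge}--\eqref{AdG.gauge.cond} used to put a general linear frame into $G$-parallel form, and $\map=\Ad(\psi_\cofra)A$ is invariant under it because $\Ad(H)A=A$; conversely the passage $\map\mapsto\q$ uses the same gauge fixing. For (iii), recall that \eqref{schr.map.eqn} was obtained by feeding the identifications \eqref{map.spinvector.rels}, \eqref{map.J.rel} into the general Heisenberg spin model \eqref{nls.aff.spinmodel}, equivalently \eqref{nls.spinmodel.Hamil}, and that these identifications are tied to $\q$ exactly through the soldering relations \eqref{T.N.B}, \eqref{solder.grad.m} --- the relations from which \eqref{struct.rels}--\eqref{q.w.mapping} were obtained. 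Combined with (i) this closes the triangle $\map\leftrightarrow\q\leftrightarrow\T$.

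The step I expect to be the main obstacle is the two-sidedness in (i)--(ii): because $J$, and therefore $J_\map$, annihilates $\hsp$, the Schr\"odinger map equation does not directly determine the $\hsp$-part of $w=\psi_\cofra\inv\partial_t\psi_\cofra$, so one must verify that this component is nonetheless uniquely forced --- by the $G$-parallel frame structure equations together with the invertibility of $\Hop$ that underlies the inverted form of \eqref{Dtq.Dtmap.rels} --- and, symmetrically, that the $H$-gauge ambiguity in passing from $\q$ to $\psi_\cofra$ really does drop out of $\map$. Once that is pinned down, the remaining bookkeeping, chiefly the normalization $\enorminv=\sqrt{\dim(\msp)}$ and the arclength rescaling \eqref{s} responsible for the factor $\efac$, is routine.
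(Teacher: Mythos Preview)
Your proposal is correct and follows essentially the same route as the paper: the paper states that the equivalence ``arises directly from the relation \eqref{Dtq.Dtmap.rels} \ldots\ together with the main relations \eqref{main.rels}'', and your step (i) is exactly that computation carried out explicitly, while your (iii) simply recalls how \eqref{schr.map.eqn} was obtained from \eqref{nls.aff.spinmodel} via \eqref{map.spinvector.rels}--\eqref{map.J.rel}. Your additional attention in (ii) to the $H$-gauge ambiguity and to the $\hsp$-component of $w$ is more careful than the paper, which leaves these points implicit in the frame structure equations; but this is elaboration rather than a different method.
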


Through this equivalence, 
the tri-Hamiltonian structure \eqref{nls.tri.ham} of the isospectral NLS flow \eqref{Gparallel.nls}
turns out by Theorem~\ref{thm:Hamilops.map} 
to yield the first Hamiltonian structure \eqref{schr.map.triHamil} of the Schr\"odinger map equation
plus two additional Hamiltonian structures:
\begin{equation}\label{schr.map.othHamil}
\map_t = -J_\map \covder_x \map_x
= \Rop_\map^{-1} J_\map(\delta H^{(2)}/\delta \map) 
= \Rop_\map^{-2} J_\map(\delta H^{(3)}/\delta \map) , 
\end{equation}
where
\begin{align}
H^{(2)} & = \int_C \tfrac{1}{2} g_\msp(\covder_x\map_x, J_\map\map_x)\;dx,
\label{map.Ham.n=2}\\
H^{(3)} & = \int_C \tfrac{1}{2} g_\msp(\covder_x\map_x, J_\map\map_x)\;dx
\label{mapl.Ham.n=3}
\end{align}
are Hamiltonian functionals
on the domain $C=\Rnum$ or $S^1$. 
Clearly, there is a hierarchy of similar Hamiltonian structures given by 
\begin{equation}\label{schr.map.multiHam}
\map_t = \Rop_\map^{-k} J_\map(\delta H^{(k+1)}/\delta \map) 
\end{equation}
with the Hamiltonian functionals 
\begin{equation}\label{map.Ham.k}
H^{(k)}  = \frac{1}{k}\int_C \Dxinv(g_\msp(\Rop_\map^k(\map_x), J_\map\map_x))\; dx, 
\quad
k=1,2,\ldots 
\end{equation}
Each of these Hamiltonians corresponds to the Hamiltonians \eqref{aff.Ham.n}
in the hierarchy of isospectral flows \eqref{aff.hierarchy.triHam} 
with $\q$ and its $x$-derivatives 
expressed in terms of $\map$ and its covariant $x$-derivatives
through the relations \eqref{main.rels}. 

The bottom of the hierarchy of Hamiltonian structures can be viewed as being given by 
the third Hamiltonian form \eqref{schr.map.triHamil} of the Schr\"odinger map equation
involving $\Eop_\map$ and $H=0$. 
It is interesting to note that this non-standard Hamiltonian form corresponds to 
a similar non-standard Hamiltonian structure for the isospectral NLS flow \eqref{Gparallel.nls}:
\begin{equation}\label{nls.alt.ham}
\tfrac{1}{\efac}\q_t = -J\q_{xx} +\tfrac{1}{2}[\q,[\q,J\q]] =\Rop(q_x) = \Rop\Eop (0), 
\end{equation}
where $\Eop(0)= \q_x$ as explained in Proposition~\ref{prop:alt.Ham.struc}.

\subsection{Hierarchy of geometrical map equations}

Each isospectral flow equation \eqref{aff.hierarchy.biHam}  
in the hierarchy in Theorem~\ref{thm:iso.biHamil.hiearchy} 
is equivalent to a geometrical map equation for $\map(x,t)$
given by the same steps used to establish the equivalence between 
the NLS isospectral flow \eqref{Gparallel.nls}
and the Schr\"odinger map equation \eqref{schr.map.eqn}
in Proposition~\ref{prop:schr.map.nls.q}. 
Likewise, this equivalence allows identifying 
each spin vector model \eqref{hierarchy.spinmodel.biHam.k>1}, \eqref{hierarchy.spinmodel.biHam.k=1} 
for $\T(x,t)$ in the hierarchy in Theorem~\ref{thm:main.hierarchy.spinmodel}
with a corresponding geometrical map equation. 
All of the resulting geometrical map equations 
will have an explicit multi-Hamiltonian structure
similar to what was just derived for the Schr\"odinger map equation \eqref{schr.map.multiHam}. 

These results are summarized in the following two Theorems. 

\begin{thm}\label{thm:main.hierarchy.map}
For any Hermitian symmetric space $M=G/H$,
there is a hierarchy of geometrical map equations 
\begin{equation}\label{hierarchy.maps}
\map_t = \Rop_\map^{n}(\map_x) , 
\quad
n=0,1,2,\ldots
\end{equation}
for $\map(x,t)\in M$,
where $\Rop_\map$ is the hereditary recursion operator \eqref{Rop.map}.
Each of these geometrical map equations \eqref{hierarchy.maps}
for $n\geq 1$ has a multi-Hamiltonian structure given by 
\begin{equation}\label{hierarchy.map.triHam}
\map_t = \Rop_\map^{-k} J_\map(\delta H^{(k+n)}/\delta \map),
\quad
k=1,2,\ldots, 
\end{equation}
where the Hamiltonians are given by the functionals \eqref{map.Ham.k}
on the domain $C=\Rnum$ or $S^1$.
\end{thm}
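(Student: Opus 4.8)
The plan is to transport the full multi-Hamiltonian structure of the isospectral hierarchy \eqref{aff.hierarchy} to the map side through Proposition~\ref{prop:Ad.mapping} and Lemma~\ref{lem:var.map.q.rels}, so that the only genuinely new computation is a single variational identity for $\map$.

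First I would establish the hierarchy \eqref{hierarchy.maps} itself. Starting from $\tfrac{1}{\efac}\q_t = \Rop^n(J\q)$, apply the relation $\map_t = -J_\map\Ad(\psi_\cofra)\Hop\inv(\q_t)$ obtained in the proof of Lemma~\ref{lem:var.map.q.rels}(ii), together with the operator identity $\Hop\inv\Rop^n = J\inv\Rop^{n-1}$ (immediate from $\Rop = \Hop J\inv$), the conjugation rule $\Ad(\psi_\cofra)\Rop\Ad(\psi_\cofra^{-1}) = \Rop_\map$ of Proposition~\ref{prop:Ad.mapping}, and $\Ad(\psi_\cofra)(J\q) = -\map_x$ from \eqref{map.rels}. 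Using $J_\map^2 = -\id$, these combine to $\map_t = \efac\,\Rop_\map^{n-1}(\map_x)$; after rescaling $t$ and relabelling the index, the $+n$ isospectral flow with $n\geq 1$ becomes $\map_t = \Rop_\map^n(\map_x)$, which is \eqref{hierarchy.maps}. The hereditary property of $\Rop_\map$ needed here is already part of Theorem~\ref{thm:Hamilops.map}.

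The heart of the argument is the identity
\begin{equation}
J_\map(\delta H^{(m)}/\delta\map) = \Rop_\map^{m}(\map_x),
\qquad m=1,2,\ldots,
\end{equation}
for the Hamiltonian functionals \eqref{map.Ham.k}. I would derive it as follows. On the $\q$ side, the bi-Hamiltonian structure of Theorem~\ref{thm:iso.biHamil.hiearchy}, combined with \eqref{aff.hierarchy}, gives $\Hop(\delta H^{(m)}/\delta\q) = \Rop^m(J\q)$. Because, by the remark following \eqref{map.Ham.k}, the functional \eqref{map.Ham.k} is the image of the isospectral Hamiltonian \eqref{aff.Ham.n} under the substitution \eqref{main.rels}, the variational identity \eqref{varder.map.q}, $J_\map(\delta H/\delta\map) = -\Ad(\psi_\cofra)\Hop(\delta H/\delta\q)$, applies to $H^{(m)}$; conjugating $\Rop^m(J\q)$ by $\Ad(\psi_\cofra)$ and using $\Ad(\psi_\cofra)(J\q) = -\map_x$ yields the displayed identity. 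Granting it, the claimed multi-Hamiltonian form is immediate: for $n\geq 1$ and any $k\geq 1$,
\begin{equation}
\map_t = \Rop_\map^{n}(\map_x) = \Rop_\map^{-k}\Rop_\map^{k+n}(\map_x) = \Rop_\map^{-k}\,J_\map(\delta H^{(k+n)}/\delta\map),
\end{equation}
which is \eqref{hierarchy.map.triHam}. That each $\Rop_\map^{-k}J_\map$ is a bona fide (nonlocal) Hamiltonian operator, and that these operators are pairwise compatible, follows from Theorem~\ref{thm:Hamilops.map} --- where $J_\map$, $\Rop_\map J_\map$, $\Rop_\map^2 J_\map$ are compatible Hamiltonian operators and $\Rop_\map$ is hereditary --- together with the standard theory of hereditary recursion operators \cite{Olv-book} applied to $\Rop_\map\inv$.

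I expect the main obstacle to lie not in the index bookkeeping but in two technical points. The first is making sense of the negative powers $\Rop_\map^{-k}$: since $\Rop_\map = -J_\map(\covder_x - \Mop_\map)$ is invertible only in the nonlocal sense, one must argue, exactly as for the isospectral negative flows, that $\Rop_\map^{-k}(\map_x)$ is well defined and that $\Rop_\map^{-k}J_\map$ retains the Hamiltonian property; the generalized cokernel prescriptions \eqref{mDxinv} and \eqref{hDxinv} are the natural device. The second is the precise matching needed to invoke \eqref{varder.map.q} --- verifying that \eqref{map.Ham.k} really is the $\Ad(\psi_\cofra)$-substitution of \eqref{aff.Ham.n}, which uses $\map_x = -\Ad(\psi_\cofra)J\q$, $J_\map\map_x = \Ad(\psi_\cofra)\q$, $\Ad$-invariance of the inner product, and the matching of the $\Dxinv$ factors; this is routine but must be carried out carefully to fix the normalizations.
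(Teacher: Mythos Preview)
Your proposal is correct and follows essentially the same route as the paper. The paper does not give a separate formal proof of this theorem; it presents the statement as a summary of the preceding development, relying on Proposition~\ref{prop:Ad.mapping}, Lemma~\ref{lem:var.map.q.rels}, Theorem~\ref{thm:Hamilops.map}, and the explicit computation carried out for the Schr\"odinger map case \eqref{schr.map.multiHam}, which is exactly the machinery you invoke and assemble into the identity $J_\map(\delta H^{(m)}/\delta\map)=\Rop_\map^{m}(\map_x)$.
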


The hierarchy of geometrical map equations \eqref{hierarchy.maps} 
starts from $\map_t =\map_x$ ($n=0$), 
which is a travelling wave equation. 
Next in this hierarchy ($n=1$) is the Schr\"odinger map equation \eqref{schr.map.eqn}. 
After that comes ($n=2$) 
\begin{equation}\label{mkdv.map.eqn}
\map_t = -(\nabla_x^\msp)^2 \map_x - \tfrac{1}{2}\ad_G(J_\map\map_x)^2\map_x, 
\end{equation}
which is an mKdV analog of the Schr\"odinger map equation. 
More precisely, it is analogous to the mKdV equation in potential form
and corresponds to the higher-order spin vector model \eqref{mkdv.aff.spinmodel}. 

Like the Schr\"odinger map equation \eqref{schr.map.eqn},
each of these geometrical map equations describes a locally stretching curve flow on $M$
with the feature that the total arclength of the curve $\map$ 
on a domain $C$ in $M$ is a constant of motion. 

\begin{thm}\label{thm:main.hierarchy.relations}
In the hierarchy of geometrical map equations \eqref{hierarchy.maps}
in any Hermitian symmetric space $M=G/H$, 
the $+n$ evolution equation \eqref{hierarchy.maps} for the map $\map$ 
is equivalent both to the $+n+1$ flow equation \eqref{aff.hierarchy} 
in the hierarchy of isospectral flows for the Hasimoto variable $\q$
and to the $+k$ vector equation \eqref{hierarchy.spinmodels} 
in the hierarchy of spin models for $\T$ 
in the associated affine Hermitian space $\aff(G,H)=(G\rtimes\gsp)/G$. 
\end{thm}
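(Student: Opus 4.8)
The plan is to prove Theorem~\ref{thm:main.hierarchy.relations} by assembling the chain of equivalences already established in the preceding sections, rather than by any new computation. The key point is that three hierarchies---the isospectral flows \eqref{aff.hierarchy} on $\q$, the spin vector models \eqref{hierarchy.spinmodels} on $\T$, and the geometrical map equations \eqref{hierarchy.maps} on $\map$---are all generated by recursion operators that are intertwined by explicit, invertible soldering maps, so the correspondence of their $+n$ members reduces to tracking the shift in the recursion index induced by each intertwining.

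First I would recall the soldering relation \eqref{solder.R.q.T}, $\Rop(h^\pe) = \cofra\hook \Rop_\T(\Pvec^\pe)$, together with the identification $\cofra\hook\T_t = -\enorm J w_\pe = -\enorminv\Rop(h^\pe)$ from the derivation preceding Theorem~\ref{thm:main.hierarchy}. This shows that the $+k$ spin vector model $\T_t = \Rop_\T^{k}(\T_x)$ in \eqref{hierarchy.spinmodels} corresponds, under $\cofra\hook$, to the flow $\tfrac{1}{\efac}\q_t = \Rop^{k+1}(J\q) = h^\pe_{(k+1)}$, which is the $+n$ flow in \eqref{aff.hierarchy} with $n=k+1$. (The extra power of $\Rop$ is exactly the one relating $\T_t$ to $h^\pe$.) Next I would invoke Proposition~\ref{prop:Ad.mapping} and Theorem~\ref{thm:Hamilops.map}: the mapping $\Ad(\psi_\cofra)$ conjugates $\Rop$ into $\Rop_\map$ and identifies $\q$-Hamiltonian structures with $\map$-Hamiltonian structures with a shift of $-2$ in the recursion index. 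Combined with $\map_x = -\Ad(\psi_\cofra)J\q$ and $\map_t = -\Ad(\psi_\cofra)Jw$ from \eqref{map.rels}, and the relation \eqref{Dtq.Dtmap.rels} between $\q_t$ and $\map_t$, the $+n$ map equation $\map_t = \Rop_\map^{n}(\map_x)$ corresponds to $\tfrac{1}{\efac}\q_t = \Rop^{n+1}(J\q)$, i.e. the $+n+1$ isospectral flow. Chaining these two correspondences gives the stated equivalence: the $+n$ map equation $\equiv$ the $+(n+1)$ isospectral flow $\equiv$ the $+k$ spin model with $k = (n+1)-1 = n$.

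I would then spell out the index bookkeeping carefully and state it as the content of the proof: the map hierarchy \eqref{hierarchy.maps} is indexed so that $\map_t = \Rop_\map^n(\map_x)$; under $\Ad(\psi_\cofra)$ this pulls back to $\q_t/\efac = \Rop^{n+1}(J\q)$, which is the $+(n+1)$ member of \eqref{aff.hierarchy}; and under the soldering $\cofra\hook$ the latter is the image of $\T_t = \Rop_\T^{n}(\T_x)$, the $+n$ member of \eqref{hierarchy.spinmodels}. Equivalence of the \emph{evolution equations} (not merely of the recursion operators) follows because in each case the time-flow vectors correspond under the same invertible map that intertwines the recursion operators, as recorded in Proposition~\ref{prop:schr.map.nls.q} for the base case $n=1$ and extended verbatim to all $n$ by replacing $\Rop_\map(\map_x)$ with $\Rop_\map^n(\map_x)$.

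The main obstacle I anticipate is not any single hard computation but the need to verify that the identifications are genuinely compatible across all three hierarchies simultaneously---in particular, that the $-2$ shift coming from $\Ad(\psi_\cofra)$ in Theorem~\ref{thm:Hamilops.map} and the $+1$ shift coming from the soldering relation \eqref{solder.R.q.T} are consistent with the chosen normalizations of the hierarchy indices in \eqref{aff.hierarchy}, \eqref{hierarchy.spinmodels}, and \eqref{hierarchy.maps}. One must also confirm that $\psi_\cofra(x,t)$ is well-defined along the flow (i.e.\ that the gauge condition $(\psi_\cofra\inv\partial_x\psi_\cofra)_\hsp=0$ in \eqref{AdG.gauge.cond} is preserved by the $t$-evolution, or can be re-imposed without changing $\map$, $\T$, $\q$), which is precisely what makes the identification $\map\simeq\enorminv\T$ consistent with the dynamics. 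Once these normalization and gauge-consistency points are checked, the theorem follows immediately from Theorems~\ref{thm:main.hierarchy}, \ref{thm:main.hierarchy.spinmodel}, \ref{thm:Hamilops.map}, and Propositions~\ref{prop:Ad.mapping} and~\ref{prop:schr.map.nls.q}.
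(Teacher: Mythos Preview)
Your proposal is correct and matches the paper's approach: the paper presents Theorem~\ref{thm:main.hierarchy.relations} explicitly as a summary of the preceding results (Theorems~\ref{thm:main.hierarchy}, \ref{thm:main.hierarchy.spinmodel}, \ref{thm:Hamilops.map}, Proposition~\ref{prop:Ad.mapping}, Proposition~\ref{prop:schr.map.nls.q}) and gives no separate proof beyond listing the low-order cases $n=0,1,2$ as a check of the index shift. Your careful tracking of the recursion indices---the $+1$ shift from the soldering \eqref{solder.R.q.T} and the intertwining $\Ad(\psi_\cofra)\Rop\Ad(\psi_\cofra^{-1})=\Rop_\map$ from \eqref{map.Rop}---is exactly the content implicit in the paper's summary, and your remarks on gauge consistency are a reasonable elaboration of what the paper leaves unsaid.
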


In particular, 
the $+1$ flow equation $\q_t=\Rop(J\q)= \q_x$ 
corresponds to the $n=0$ map equation $\map_t =\map_x$; 
the $+2$ flow equation $\q_t=\Rop^2(J\q) = -J\q_{xx} +\tfrac{1}{2}\ad(\q)^2J\q$,
which is the $H$-invariant NLS equation, 
corresponds to the $n=1$ map equation $\map_t =\Rop_\map(\map_x) = -J_\map \covder_x\map_x$,
which is the Schr\"odinger map equation \eqref{schr.map.eqn};
and $+3$ flow equation 
$\q_t=\Rop^3(J\q) = -\q_{xxx} +\ad(\q)^2\q_x +\tfrac{1}{2}(\ad(J\q)^2\q)_x$,
which is the $H$-invariant mKdV equation, 
corresponds to the $n=2$ map equation 
$\map_t =\Rop_\map^2(\map_x) = -(\nabla_x^\msp)^2 \map_x - \tfrac{1}{2}\ad_\msp(J_\map\map_x)^2\map_x$, 
which is the mKdV analog of the Schr\"odinger map equation.

\subsection{Matrix formulation of geometrical map equations and spin vector models}

When examples of Hermitian symmetric spaces $M=G/H$ are considered, 
it is useful to have a matrix representation for 
the geometrical map variable $\map\in M$, 
the spin vector variable $\T\in\aff(G,H)$, 
and the Hasimoto variable $\q\in\msp$. 

We will work with the fundamental matrix representation in $\mk{gl}(\dim\gsp)$ 
for the Hermitian symmetric Lie algebra $\gsp=\msp\oplus\hsp$. 
Since $G$ is a compact Lie group, 
recall that it is generated by the exponential mapping $\exp(\gsp)\simeq G$. 

A fundamental matrix expression for 
the spin vector variable $\T$ and the geometrical map variable $\map$ 
is given by 
\begin{equation}\label{map.spin.matrix}
\enorminv\T=\map=\Ad(\exp(\msp(x,t)))A
\end{equation}
because their representations \eqref{T.iso.M} and \eqref{map.rels}
are invariant under gauge transformations \eqref{AdG.gauge}. 
Derivatives of $\T$ and $\map$ can be obtained 
from relations \eqref{aff.grad.vector} and \eqref{Hop.term1} 
by applying, respectively, 
$\nabla_x = \Dx$ and $\covder_x = -\ad(\Ad(\exp(\msp(x,t)))A)^2\Dx$
to the matrix expression \eqref{map.spin.matrix}. 
In particular, 
\begin{gather}
\enorminv\T_x =\Dx\Ad(\exp(\msp(x,t)))A,
\label{spin.matrix.D}\\
\map_x = -\ad(\Ad(\exp(\msp(x,t)))A)^2\Dx\Ad(\exp(\msp(x,t)))A,
\label{map.spin.matrix.Dx}
\end{gather}
and so on. 
Similarly, their time derivative is given by 
\begin{gather}
\enorminv\T_t =D_t\Ad(\exp(\msp(x,t)))A,
\\
\map_t = -\ad(\Ad(\exp(\msp(x,t)))A)^2 D_t\Ad(\exp(\msp(x,t)))A . 
\end{gather}
Moreover, the matrix expression \eqref{map.spin.matrix} yields
\begin{equation}\label{map.spin.adT.Jmap.matrix}
\enorminv\ad_\gsp(\T)=J_\map=\ad(\Ad(\exp(\msp(x,t)))A) . 
\end{equation}
As a result, 
the Schr\"odinger map equation \eqref{schr.map.eqn}
and the Heisenberg spin model \eqref{nls.aff.spinmodel} 
are equivalent to the matrix system 
\begin{equation}\label{schrmap.spinmodel.matrix}
D_t \Ad(\exp(\msp(x,t)))A + \ad(\Ad(\exp(\msp(x,t)))A) \Dx^2 \Ad(\exp(\msp(x,t)))A = 0 . 
\end{equation}

Finally, a fundamental matrix expression for Hasimoto variable $\q$ 
can be obtained by starting with $\psi_\cofra(x)=\exp(\msp(x,t))$ 
and applying a gauge transformation \eqref{AdG.gauge} 
to achieve the gauge condition \eqref{AdG.gauge.cond}:
\begin{equation}
(\g(x,t)\inv(\exp(-\msp(x,t))\partial_x\exp(\msp(x,t)))\g(x,t) + \g(x,t)\partial_x\g(x,t))_\hsp =0 , 
\end{equation}
which determines $\g(x,t)\in H\subset G$. 
Then, from the representation \eqref{q.w.mapping}, 
\begin{equation}\label{q.matrix}
\q= \g(x,t)\inv\exp(-\msp(x,t))(\partial_x\exp(\msp(x,t)))\g(x,t) . 
\end{equation}

\section{Example of $\Cnum P^N$}\label{sec:example}

We will now apply the general theory developed in
Theorems~\ref{thm:SP.Gparallel.hierarchies}, ~\ref{thm:main.hierarchy.spinmodel}, ~\ref{thm:main.hierarchy.curve}, and~\ref{thm:main.hierarchy.map} 
to the Hermitian symmetric Lie algebra
$\mk{su}(N+1)\simeq \Cnum^{N}\oplus \mk{u}(N)$,
which underlies the complex projective space $\Cnum P^N= SU(N+1)/S(U(N)\times U(1))=SU(N+1)/U(N)$. 

In this example,
first the NLS and mKdV isospectral flows on the Hasimoto variable
will be written down,
where the cubic nonlinear terms will be expressed in terms of the inner product 
and the Hermitian structure of the Lie algebra. 
Next the corresponding geometric non-stretching curve flows 
in the affine Hermitian symmetric spaces $\aff(SU(N+1),U(N))$ and $\aff(SU(N+2),S(U(N)\times U(2)))$ 
will be formulated.
After that,
the fundamental matrix representation of
the Schr\"odinger map equation in $\Cnum P^N$
and Heisenberg spin vector model in $\mk{su}(N+1)$,
along with their Hamiltonian structures,
are written out. 

Details of the needed algebraic structure of the Lie algebra
$\mk{su}(N+1)\simeq \Cnum^{N}\oplus \mk{u}(N)$ 
are summarized in an appendix.

\subsection{Integrable systems arising from a complex vector-valued Hasimoto variable}

The Hermitian symmetric Lie algebra 
$\mk{su}(N+1)\simeq \Cnum^{N}\oplus \mk{u}(N)$
has the fundamental matrix representation 
\begin{align}
& \bpm -\tr\AA&\a\\-\abar^\t&\AA \epm 
\in \gsp = \mk{su}(N+1),
\label{CP.g}
\\
& \bpm -\tr\AA& \mb{0}\\\mb{0}&\AA \epm 
\in\hsp=\mk{s}(\mk{u}(1)\oplus\mk{u}(N))\simeq \mk{u}(N), 
\label{CP.h}
\\
& \bpm 0&\a\\-\abar^\t & \mb{0} \epm 
\in\msp=\mk{su}(N+1)/\mk{u}(N) \simeq \Cnum^N, 
\label{CP.m}
\end{align}
where $\AA\in\mk{u}(N)$, $\a\in\Cnum^N$, $\tr(\AA)\in\i\Rnum$. 
Hence, 
the Hasimoto variable in $\msp\simeq \msp_\aff$ is given by the matrix 
\begin{equation}
\q= \bpm 0&\qq(x,t)\\-\qqbar^\t(x,t) & \mb{0} \epm,
\quad
\qq\in\Cnum^N, 
\end{equation}
which represents a complex vector in $\Cnum^n$. 
The geometrical setting for this variable consists of 
arclength-parameterized curves $\curvflow(x)$ 
formulated using a linear coframe $\cofra$ and a linear connection $\conx$
in the affine Hermitian space 
\begin{equation}
\aff(SU(N+1),U(N)) = ( SU(N+1)\rtimes \mk{su}(N+1) )/SU(N+1) ,
\end{equation}
where the tangent vector of a curve is represented by the matrix 
\begin{equation}
\cofra\hook\curvflow_x \simeq \e 
= \enorm\bpm \tfrac{N}{N+1} i & 0 \\ 0& -\tfrac{1}{N+1}i I_N \epm 
\in Z(\hsp) ,
\quad
\efac=\dim(\Cnum^{N}) = 2N , 
\end{equation}
which is a unit-norm element belonging to the center of $\hsp$. 
In this setting, 
the Hasimoto variable represents the principal bi-normal vector of the curve, 
$\cofra\hook \ad_\gsp(\curvflow_x)\nabla_x\curvflow_x \simeq \q$. 
It also is the tangential projection of the linear connection $\conx$ in $\aff(SU(N+1),U(N))$
along the tangential direction $\curvflow_x$ of the curve, 
$\q\simeq \conx\hook \curvflow_x$. 

A non-stretching flow $\curvflow_t$ of an arclength-parameterized curve 
is specified by the variable 
$h = h_\pe +h_\pa \simeq \cofra\hook\curvflow_t$ in $\gsp\simeq \gsp_\aff$,
where this variable has the matrix representation 
\begin{align}
h_\pe & = \bpm 0&\hpe(x,t)\\-\hpebar^\t(x,t) & \mb{0} \epm,
\quad
\hpe\in\Cnum^N,
\\
h_\pa & = \bpm -\tr\hhpa(x,t) & \mb{0}\\\mb{0}&\hhpa(x,t) \epm ,
\quad
\hhpa\in\mk{u}(N) . 
\end{align}
The projection of the linear connection along the flow direction $\curvflow_t$ 
is given by the variable $w = w_\pe +w_\pa \simeq\conx\hook\curvflow_t$
in $\gsp\simeq \gsp_\aff$, 
which is represented by the matrices
\begin{align}
w_\pe & = \bpm 0&\wpe(x,t)\\-\wpebar^\t(x,t) & \mb{0} \epm,
\quad
\wpe\in\Cnum^N,
\\
w_\pa & = \bpm -\tr\wwpa(x,t) & \mb{0}\\\mb{0}&\wwpa(x,t) \epm ,
\quad
\wwpa\in\mk{u}(N) . 
\end{align}

To write out the integrable isospectral flows on the Hasimoto variable $\q$, 
we will need the Lie bracket structure \eqref{symmsp.brackets}
and, in particular, the $\ad$-squared action of $\Cnum^N$ on $\Cnum^N$. 
This structure can be expressed entirely in terms of 
the vector dot product on $\Cnum^N$ 
and the Hermitian structure $J=\ad(A)$ 
of $SU(N+1)/U(N)$ 
which acts by multiplication by $i$ on $\Cnum^N$,
where $A=\enorminv\e$ as shown in equations \eqref{CP.A}--\eqref{CP.J}. 

The NLS and mKdV isospectral flows \eqref{Gparallel.nls} and \eqref{Gparallel.mkdv}
thereby have the respective forms
\begin{equation}\label{CP.nls}
\tfrac{1}{\efac} \qq_t=-i(\qq_{xx} +2|\qq|^2\qq)
\end{equation}
and 
\begin{equation}\label{CP.mkdv}
\tfrac{1}{\efac} \qq_t=-\qq_{xxx}-3|\qq|^2\qq_x-3(\qq_x\cdot\qqbar)\qq . 
\end{equation}
These complex-vector equations \eqref{CP.nls} and \eqref{CP.mkdv},
which are $U(N)$-invariant, 
agree with the NLS and mKdV integrable systems derived 
from geometric curve flows in the Lie group $SU(N+1)$ in \Ref{Anc2007},
after a phase rotation, a Galilean boost, and a scaling of $t$  are applied. 
Their bi-Hamiltonian structure \eqref{nls.tri.ham} and \eqref{mkdv.tri.ham} is also 
presented explicitly there. 
In particular, the bi-Hamiltonian operators are given by 
\begin{align}
\Hop & =\Dx+2i\qq \Dxinv \im\qqbar\cdot+\qq\cdot \Dxinv  \qq\wedgebar, 
\label{Hop.CP}
\\
\Jop & =\Dx +2\qq \Dxinv \re\qqbar\cdot +\qq\cdot \Dxinv \qq\odotbar, 
\label{Jop.CP}
\end{align}
which are $U(N)$-invariant, 
where 
\begin{equation}
\a\wedgebar\b = \abar^\t\b - \bbar^\t\a,
\quad
\a\odotbar\b = \abar^\t\b + \bbar^\t\a,
\quad
\a,\b\in\Cnum^N . 
\end{equation}

An equivalent form for each isospectral flow equation \eqref{CP.nls} and \eqref{CP.mkdv}
is obtained by using the relation \eqref{CP.dotprod.rel} 
between the vector dot product and the inner product $\brack{\cdot,\cdot}$ on $\Cnum^N$. 
This yields 
\begin{equation}\label{CP.nls.alt}
\tfrac{1}{\efac} \qq_t=-i(\qq_{xx} +2\brack{\qq,\qq}\qq)
\end{equation}
and 
\begin{equation}\label{CP.mkdv.alt}
\tfrac{1}{\efac} \qq_t=-\qq_{xxx}-3\brack{\qq,\qq}^2\qq_x-3(\brack{\qq,\qq_x} +i\brack{\i\qq,\qq_x})\qq , 
\end{equation}
which are expressed just in terms of the inner product. 
The bi-Hamiltonian operators \eqref{Hop.CP}--\eqref{Jop.CP}
can be expressed in an analogous form. 

In the context of Hermitian symmetric spaces, 
the isospectral NLS equation \eqref{CP.nls} first appeared in \Refs{For1984,LanPer2000},
and likewise the isospectral mKdV equation \eqref{CP.mkdv} first appeared in \Ref{AthFor1987}, but the explicit bi-Hamiltonian operators were not given.

\subsection{Schr\"odinger map equation in $\Cnum P^N$ and Heisenberg spin vector model in $\mk{su}(N+1)$}

The isospectral NLS equation \eqref{CP.nls} is geometrically equivalent to 
the Schr\"odinger map equation \eqref{schr.map.eqn} 
for a map $\map(x,t)$ into the Hermitian symmetric space 
\begin{equation}\label{CPsp}
\Cnum P^N \simeq SU(N+1)/S(U(1)\times U(N))=SU(N+1)/U(N) . 
\end{equation}
This space has an isometric embedding into the Lie algebra 
$\mk{su}(N+1)\simeq \Cnum^{N}\oplus \mk{u}(N)$,
as given by the matrix
\begin{equation}\label{map.CP}
\map = 
\bpm 
\tfrac{N}{N+1}i -\sin^2(\theta) i & -\sin(\theta)\cos(\theta) i\Theta \\
\sin(\theta)\cos(\theta) i\bar\Theta^\t & \tfrac{-1}{N+1} i I_N + \sin^2(\theta) i \bar\Theta^\t\Theta 
\epm
\in \mk{su}(N+1)
\end{equation}
with 
\begin{equation}
\theta\in[0,\pi],
\quad
\Theta\in\Cnum^N,
\quad
\brack{\Theta,\Theta}=1 . 
\end{equation}
Thus, $(\theta,\Theta)$ provide global coordinates for the space \eqref{CPsp}.

Through the matrix representation of the principal bi-normal vector, 
$\ad_\gsp(\curvflow_x)\nabla_x\curvflow_x = \efac J_\map\covder_x\map$, 
shown in equation \eqref{CP.q}, 
the Hasimoto variable can be expressed explicitly in terms of the coordinates $(\theta,\Theta)$:
\begin{equation}
\qq = 
e^{i(1+\frac{1}{N})\Gamma}\left( 
\theta_x \Theta +\sin(\theta) \Theta_x + \sin(\theta)(1-\cos(\theta)) \brack{i\Theta_x,\Theta} i\Theta 
\right)\mb{\Gamma} , 
\end{equation}
where $\Gamma$ and $\mb{\Gamma}$ are nonlocal expressions given by 
the system of differential equations \eqref{CP.gauge}.

The Schr\"odinger map equation \eqref{schr.map.eqn}
in matrix form \eqref{schrmap.spinmodel.matrix} is given by 
$\map_t + \ad(\map)\Dx^2\map = 0$.
This $\mk{su}(N+1)$-matrix equation can be split into $\msp$ and $\hsp$ components,
given by a complex vector equation and a unitary matrix equation:
\begin{align}
&
\tfrac{1}{2}\sin(2\theta)\i\Theta_t + \cos(2\theta)\theta_t \i\Theta 
-\big(\theta_x\Theta+ \tfrac{1}{2}\sin(2\theta)( i (1-\cos(2\theta)) \brack{i\Theta_x,\Theta}\Theta +\Theta_x) \big)_x =0 , 
\\
&\begin{aligned}
& \tfrac{1}{2}(1-\cos(2\theta)) i(\bar{\Theta}^\t\Theta)_t
+\sin(2\theta)\theta_t i\bar{\Theta}^\t\Theta
\\&\qquad
-\big(\tfrac{1}{2}(1-\cos(2\theta))
( \Theta\wedgebar\Theta_x +\tfrac{1}{2}(1-\cos(2\theta))\brack{i\Theta_x,\Theta} i\bar{\Theta}^\t\Theta ) \big)_x=0 . 
\end{aligned}
\end{align}
This coupled nonlinear system is equivalent to the evolution system 
\begin{align}
&\begin{aligned}
\theta_t & =
-\sec(2\theta)\big( 
\theta_x \brack{i\Theta_x,\Theta}
+(\tfrac{1}{2}\sin(2\theta)\cos(2\theta) \brack{i\Theta_x,\Theta})_x
\big) , 
\end{aligned}
\label{CP.vec.eqn}
\\
&\begin{aligned}
i\Theta_t & =
2\csc(2\theta)\big( 
\theta_x \brack{i\Theta_x,\Theta} i\Theta
-\tfrac{1}{2}\sin(2\theta)\cos(2\theta)\brack{i\Theta_x,\Theta} i\Theta_x
\\&\qquad
+(\tfrac{1}{2}\sin(2\theta)\brack{i\Theta_x,\Theta} i\Theta)_x
+(\theta_x\Theta +\tfrac{1}{2}\sin(2\theta)\Theta_x)_x
\big) , 
\end{aligned}
\label{CP.matr.eqn}
\end{align}
which inherits the integrability properties of the Schr\"odinger map equation \eqref{schr.map.eqn}.

The integrable system \eqref{CP.vec.eqn}--\eqref{CP.matr.eqn}
is equivalent to the Heisenberg spin model \eqref{nls.spinmodel.Hamil}
for $\T=\enorm\map \in\mk{su}(N+1)$. 
Another way of formulating this system is by expressing 
\begin{equation}
\enorminv \T = \bpm -i\tr\S & \pm i\sqrt{1-|\s|^2}\s\\ \mp i\sqrt{1-|\s|^2}\sbar^\t& i\S \epm,
\quad
\S = \tfrac{-1}{N+1} I_N  +\bar\s^\t \s, 
\end{equation}
where
\begin{equation}
\s=\sin(\theta)\Theta\in \Cnum^N,
\quad
|\s|\leq 1 . 
\end{equation}
In this representation, 
$\T$ gives a double cover of the complex ball $0\leq|\s|\leq 1$
describing the stereographic projection of the two half-spaces of $\Cnum P^N$ 
onto the complex space $\Cnum^N$. 
Then the evolution equation for $\T$ can be expressed as an evolution equation for $\s$ given by combining the equations in the system \eqref{CP.vec.eqn}--\eqref{CP.matr.eqn}. 

It is worth remarking on the case $N=1$. 
The isospectral flow equations \eqref{CP.nls} and \eqref{CP.mkdv} reduce to 
the scalar NLS equation $u_t=-i(u_{xx} +2|u|^2u)$
and the scalar Hirota (complex mKdV) equation $u_t=-u_{xxx}-6|u|^2u_x$
for $u=\qq\in\Cnum$. 
In this case, $\Cnum P^1 \simeq S^2$ is the Riemann sphere. 
The coordinates $(\theta,\Theta)$ geometrically describe 
a polar angle $\theta\in[0,\pi]$ and an azimuthal angle $\varphi\in[0,2\pi)$
represented by a unit complex number $\Theta=e^{i\varphi}$. 
The complex number $\s\in\Cnum$ lies in the disk $|\s|\leq 1$,
which represents the stereographic projections of the upper and lower hemispheres of the Riemann sphere into the complex plane attached to the north and south poles. 

The equivalence between the scalar NLS equation $u_t=-i(u_{xx} +2|u|^2u)$, 
the Schr\"odinger map equation into $S^2$,
and the Heisenberg spin model in $\mk{su}(2)\simeq\Rnum^3$,
as well as their integrability structures, 
is discussed in \Ref{AncMyr2010}.

\section{Concluding remarks}\label{sec:conclude}

This paper has presented a general theory 
extending the geometrical relationships among
the NLS equation, the vortex filament equation,
the Heisenberg spin model, and the Schr\"odinger map equation
to the setting of Hermitian symmetric spaces. 

The main underlying relation is provided by 
geometric non-stretching curve flows in
an associated affine Hermitian symmetric Lie algebra.
There is a hierarchy of curve flows having a bi-Hamiltonian structure
which takes a simple explicit form in terms of a Hasimoto variable
defined by the connection-matrix variable in a parallel moving frame,
as given by the general results on parallel frames and Hasimoto variables
obtained in \Ref{Anc2008}. 

To illustrate some of the main results,
we have considered one example:
complex projective space $\Cnum P^N= SU(N+1)/S(U(N)\times U(1))=SU(N+1)/U(N)$,
which corresponds to the Hermitian symmetric Lie algebra 
$\mk{su}(N+1)\simeq \Cnum^{N}\oplus \mk{u}(N)$. 
In a subsequent paper \cite{AsaAnc2019b},
we will look at all other classical Hermitian symmetric spaces:
$SU(N+2)/S(U(N)\times U(2))$,
$SO(N+2)/(SO(N)\times SO(2))$,
$SO(8)/U(4)$,
$Sp(2)/U(2)$. 

An interesting direction for generalizing this work
would be to study Miura transformations.
It is well-known that a Miura transformation exists between
the NLS equation and the vector NLS equation,
and that this relationship extends \cite{For1984}
to the isospectral NLS flows in general Hermitian symmetric spaces. 

Another interesting direction would be to study negative isospectral flows
and explore their relationship to group invariant Camassa-Holm type equations
in general Hermitian symmetric spaces.

\section*{Acknowledgements}
E.A. thanks S. Ali for pointing out useful comments. 

S.C.A. is supported by an NSERC grant.

\appendix
\section{Lie algebraic structure of complex projective space}
\label{sec:CP}

The complex projective space $\Cnum P^N$ is isomorphic to the Hermitian symmetric space 
$SU(N+1)/S(U(1)\times U(N))=SU(N+1)/U(N)$
which is of type $AIII(q=1)$ in the classification of Cartan \cite{Hel}. 
This space has the associated symmetric Lie algebra 
$\mk{su}(N+1) = \Cnum^N\oplus \mk{u}(N)$. 

The Lie brackets \eqref{symmsp.brackets} 
in the fundamental matrix representation \eqref{CP.g}--\eqref{CP.m}
are straightforwardly given by 
\begin{align}
[\hsp,\msp] & =[\lrep\AA\rrep,\lrep\a\rrep]
=\lrep -\tr(\AA)\a-\a\AA\rrep
\in\msp, 
\label{CP.lie.brack.h.m}\\
[\msp,\msp] &=[\lrep\a_1\rrep,\lrep\a_2\rrep]
=\lrep \a_2\wedgebar\a_1 \rrep
\in\hsp, 
\label{CP.lie.brack.m.m}\\
[\hsp,\hsp] &=[\lrep\AA_1\rrep,\lrep\AA_2\rrep]
=\lrep[\AA_1,\AA_2]\rrep
\in\hsp, 
\label{CP.lie.brack.h.h}
\end{align}
where
\begin{equation}
\lrep\AA\rrep := \bpm -\tr\AA& \mb{0}\\\mb{0}&\AA \epm 
\in\hsp \simeq \mk{u}(N), 
\quad
\lrep\a\rrep := \bpm 0&\a\\-\abar^\t & \mb{0} \epm ,
\in\msp \simeq \Cnum^N,
\end{equation}
and 
\begin{equation}
\lrep\AA,\a\rrep := \lrep\AA\rrep + \lrep\a\rrep 
= \bpm -\tr\AA& \a\\-\abar^\t&\AA \epm 
\in\gsp = \mk{su}(N+1), 
\end{equation}
with $\a\in\Cnum^N$, $\AA\in\mk{u}(N)$,
and where 
$\a_2\wedgebar\a_1=\abar_2^\t\a_1-\abar_1^\t\a_2 \in\mk{u}(N)$. 
For more details, see \Ref{AhmAncAsa2018}. 

From composition of the Lie brackets \eqref{CP.lie.brack.m.m} and \eqref{CP.lie.brack.h.m}, 
the following formula for the action of $\ad(\msp)^2$ on $\msp$ is obtained:
\begin{equation}\label{CP.adsq}
\ad(\lrep\a\rrep)^2\lrep\b\rrep
= \lrep (2\bbar\cdot\a -\abar\cdot\b)\a -(\abar\cdot\a)\b \rrep
= \lrep (\brack{\b,\a}+ 3i\brack{i\b,\a})\a -\brack{\a,\a}\b \rrep . 
\end{equation}
This result uses the relations
\begin{gather}
\bbar\cdot\a = \brack{\b,\a} +i \brack{i\b,\a} , 
\label{CP.dotprod.rel}
\\
\brack{\b,\a} =\re(\bbar\cdot\a),
\quad
\brack{i\b,\a} =\im(\bbar\cdot\a) , 
\end{gather}
where $\brack{\cdot,\cdot}$ denotes the inner product on $\Cnum^N$. 
The Killing form on $\msp$ is related to this inner product by 
\begin{equation}
\brack{\b,\a}_\msp = -\Kill{\b,\a} = -2(N+1)\tr(\lrep\a\rrep\lrep\b\rrep) 
= 4(N+1)\brack{\b,\a} . 
\end{equation}

The Hermitian structure of $SU(N+1)/S(U(1)\times U(N))=SU(N+1)/U(N)$ is 
given by the imaginary-unit element 
\begin{equation}\label{CP.A}
A = \lrep -\tfrac{1}{N+1}i I_N \rrep \in\mk{u}(N)=\hsp . 
\end{equation} 
This element acts as 
\begin{equation}\label{CP.J}
\ad(A)\gsp = \ad( \lrep -\tfrac{1}{N+1}i I_N \rrep )\lrep \AA,\a \rrep 
= \lrep \mb{0},i\a\rrep , 
\end{equation} 
whereby $J=\ad(A)$ is represented by multiplication by $i$ on $\msp$. 
The squared norm of $A$ is given by
\begin{equation} 
\efac= -\Kill{A,A} = -2(N+1)\tr(A^2) = 2N =\dim(\Cnum^{N}) . 
\end{equation} 

The embedding \eqref{AdG.adm.rel} of 
$\Cnum P^N \simeq SU(N+1)/S(U(1)\times U(N))=SU(N+1)/U(N)$
into the Lie algebra $\mk{su}(N+1) = \Cnum^N\oplus \mk{u}(N)$,
along with the corresponding $\mk{su}(N+1)$-matrix expression for $\q$, 
will now be derived. 
This is most easily carried out in two steps. 
First, a simple gauge choice $\psi_\cofra=\exp(\msp)$ will be used 
to evaluate $\map=\Ad(\psi_\cofra)A$, which is gauge invariant. 
Next, a gauge transformation \eqref{AdG.gauge} will be applied to impose 
$(\psi_\cofra^{-1}\partial_x\psi_\cofra)_\hsp =0$ on $\psi_\cofra$, 
yielding $\q=\psi_\cofra^{-1}\partial_x\psi_\cofra$. 

For the first step, 
consider the matrix 
$\tilde\psi=\exp(\msp) = \sum_{k=0}^{\infty} \tfrac{1}{k!}\lrep \a\rrep^k$, 
where it will be simplest to evaluate the even and odd power terms separately:
\begin{equation}
\lrep \a\rrep^k = 
\begin{cases}
i^{k-1} |\a|^{k-1}\lrep \a\rrep, & 
k=1,3,5,\ldots
\\
-i^{k} 2^{k-1} |\a|^{k-2}\lrep \abar^\t\a \rrep, & 
k=2,4,6,\ldots
\end{cases}
\end{equation}
with $|\a|^2 = \brack{\a,\a}$. 
Hence, 
\begin{equation}
\sum_{l\geq 1} \tfrac{1}{(2l-1)!}\lrep \a\rrep^{2l-1} 
= \sin(|\a|) \bpm 0 & \hat\a \\ -\bar{\hat\a}^\t & \mb{0} \epm, 
\quad
\sum_{l\geq 1} \tfrac{1}{(2l)!}\lrep \a\rrep^{2l} 
= (\cos(|\a|)-1) \bpm 1 & 0 \\ 0 & \bar{\hat\a}^\t\hat\a \epm, 
\end{equation}
with $\hat\a = \frac{1}{|a|}\a$, 
which yields
\begin{equation}\label{CP.expm}
\tilde\psi= \exp(\msp) = 
\bpm 0 & 0 \\ 0 & I_N- \bar{\hat\a}^\t\hat\a \epm
+ \sin(|\a|) \bpm 0 & \hat\a \\ -\bar{\hat\a}^\t & \mb{0} \epm 
+ \cos(|\a|) \bpm 1 & 0 \\ 0 & \bar{\hat\a}^\t\hat\a \epm
\in SU(N+1) . 
\end{equation}

\begin{prop}
The orbit of $\Ad(SU(N+1)) A$ in $\mk{su}(N+1)$
is given by the matrix
\begin{equation}\label{embed.CP}
\map = \Ad(\tilde\psi)A
= -\tfrac{1}{N+1} \lrep i I_N \rrep + \tfrac{1}{2} (1-\cos(2\theta)) \lrep i\bar\Theta^\t\Theta \rrep 
- \tfrac{1}{2} \sin(2\theta) \lrep i\Theta \rrep
\in SU(N+1)
\end{equation}
with 
\begin{equation}
\theta\in [0,\pi],
\quad
\Theta\in\Cnum^N,
\quad
\brack{\Theta,\Theta}=1,
\end{equation}
where 
$\lrep i I_N \rrep,\lrep i\bar\Theta^\t\Theta \rrep\in\hsp\simeq\mk{u}(N)$, 
$\lrep i\Theta \rrep\in\msp\simeq\Cnum^N$. 
This unitary matrix \eqref{embed.CP} represents 
the isometric embedding of $\Cnum P^N \simeq \frac{SU(N+1)}{S(U(1)\times U(N))}=\frac{SU(N+1)}{U(N)}$
into the Lie algebra $\mk{su}(N+1) = \Cnum^N\oplus \mk{u}(N)$. 
\end{prop}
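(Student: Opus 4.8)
The plan is to obtain the orbit by conjugating the explicit group element \eqref{CP.expm} and then to read off the geometric content from the general identification \eqref{AdG.adm.rel}. The key observation that makes the matrix computation short is that the imaginary-unit element \eqref{CP.A} can be written in the fundamental representation as $A=-\tfrac{1}{N+1}\,\i\,I_{N+1}+\i\,E_{11}$, where $E_{11}=\diag(1,0,\dots,0)$ is the rank-one orthogonal projector onto the first basis vector of $\Cnum^{N+1}$. Since $I_{N+1}$ is central,
\[
\map=\Ad(\tilde\psi)A=-\tfrac{1}{N+1}\,\i\,I_{N+1}+\i\,\tilde\psi E_{11}\tilde\psi\inv ,
\]
with $\tilde\psi=\exp(\msp)$ given by \eqref{CP.expm}, so the whole problem reduces to conjugating $E_{11}$. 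Using \eqref{CP.expm} together with $\tilde\psi\inv=\exp(-\msp)$ (which is \eqref{CP.expm} with $\sin|\a|$ replaced by $-\sin|\a|$ and $|\a|,\hat\a$ unchanged), the product $\tilde\psi E_{11}\tilde\psi\inv$ is the outer product of the first column of $\tilde\psi$ with the first row of $\tilde\psi\inv$; a two-by-two block multiplication using $\hat\a\,\bar{\hat\a}^\t=1$ gives
\[
\tilde\psi E_{11}\tilde\psi\inv=
\bpm \cos^2\theta & -\cos\theta\sin\theta\,\hat\a \\ -\cos\theta\sin\theta\,\bar{\hat\a}^\t & \sin^2\theta\,\bar{\hat\a}^\t\hat\a \epm,
\qquad \theta:=|\a| .
\]
Setting $\Theta:=\hat\a=\a/|\a|$, so $\brack{\Theta,\Theta}=1$, and applying $\cos^2\theta=\tfrac12(1+\cos2\theta)$, $\sin^2\theta=\tfrac12(1-\cos2\theta)$, $2\sin\theta\cos\theta=\sin2\theta$, $-\tr(\i\,\bar\Theta^\t\Theta)=-\i$, the blocks rewrite in the $\lrep\cdot\rrep$ notation as exactly \eqref{embed.CP}. (An equivalent route expands $\Ad(\tilde\psi)A=\exp(\ad(\msp))A$ directly, using the recursion $\ad(\lrep\a\rrep)^{\,n+2}A=-4\brack{\a,\a}\,\ad(\lrep\a\rrep)^{\,n}A$ for $n\ge1$, which follows from \eqref{CP.lie.brack.h.m}--\eqref{CP.lie.brack.h.h} and \eqref{CP.J}, with base cases $\ad(\lrep\a\rrep)A=-\lrep\i\a\rrep$ and $\ad(\lrep\a\rrep)^2A=2\lrep\i\,\abar^\t\a\rrep$; summing the exponential series reproduces the same three terms.)

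For the geometric assertion, the orbit $\Ad(SU(N+1))A$ is diffeomorphic to $SU(N+1)/C_{SU(N+1)}(A)$, and $C_\gsp(A)=\hsp$ by \eqref{centrA}; since $SU(N+1)$ is compact and connected, the stabilizer is the connected subgroup $S(U(1)\times U(N))=U(N)$, so the orbit is $SU(N+1)/U(N)\simeq\Cnum P^N$. This is precisely the embedding \eqref{AdG.adm.rel} specialized to $\gsp=\mk{su}(N+1)$, and it is isometric: the metric induced on the orbit by the Killing form of $\mk{su}(N+1)$ is $\Ad(SU(N+1))$-invariant, hence corresponds, under the equivariant identification above, to an invariant metric on $\Cnum P^N$, i.e. the Fubini--Study metric up to overall scale (cf. \cite{Mas}). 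Finally, $(\theta,\Theta)$ with $\theta\in[0,\pi]$ and $\brack{\Theta,\Theta}=1$ parametrizes the orbit because it is the geodesic-polar form of a general element $\a=\theta\Theta\in\msp\simeq\Cnum^N$, and for $\Cnum P^N$ the exponential map $\msp\to\Cnum P^N$ is onto with this range of $\theta$, the base point $A$ being recovered at $\theta=0$ and the cut locus $\simeq\Cnum P^{N-1}$ appearing at $\theta=\tfrac{\pi}{2}$; hence $(\theta,\Theta)$ furnish global coordinates.

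The matrix manipulation is mechanical, so the only point demanding genuine care is the last one: checking that $(\theta,\Theta)$ exhausts the orbit and identifying the residual redundancies — the formula \eqref{embed.CP} is invariant under $(\theta,\Theta)\mapsto(\pi-\theta,-\Theta)$ but, away from $\theta=\tfrac{\pi}{2}$, not under the phase rotation $\Theta\mapsto e^{\i\phi}\Theta$, so these are honest coordinates rather than coordinates on the quotient sphere. This is the classical geodesic description of $\Cnum P^N$ and introduces no new difficulty; a secondary, purely clerical hazard is keeping the conventions of the matrix model \eqref{CP.g}--\eqref{CP.m} — the placement of the $\i$'s, the meaning of $\wedgebar$, and row-versus-column vectors — consistent throughout the computation.
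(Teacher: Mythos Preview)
Your proof is correct and follows the same route the paper sets up: compute $\tilde\psi=\exp(\msp)$ explicitly via \eqref{CP.expm} and then evaluate $\Ad(\tilde\psi)A$ by direct matrix conjugation. The paper in fact leaves the conjugation step implicit, simply stating the proposition after deriving $\tilde\psi$; your decomposition $A=-\tfrac{1}{N+1}\i I_{N+1}+\i E_{11}$, which reduces the work to conjugating a rank-one projector, is a clean way to organize that omitted computation, and your discussion of the orbit--stabilizer identification and the $(\theta,\Theta)$ parametrization supplies detail the paper only gestures at via \eqref{AdG.adm.rel} and the reference \cite{Mas}.
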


The next step consists of evaluating $\tilde\psi\inv \partial_x\tilde\psi$
using the matrix \eqref{CP.expm}. 
This yields, by a direct computation, 
\begin{align}
(\tilde\psi\inv \partial_x\tilde\psi)_\msp
& = \theta_x\, \lrep \Theta \rrep 
+\sin(\theta) \lrep \Theta_x \rrep 
+ \sin(\theta)(1-\cos(\theta)) \phi\,\lrep i\Theta \rrep, 
\\
(\tilde\psi\inv \partial_x\tilde\psi)_\hsp
& = (1-\cos(\theta)) \lrep i\mb{\Phi} \rrep 
+ (1-\cos(\theta))^2 \phi\,\lrep i\bar\Theta^\t\Theta \rrep , 
\end{align}
with 
\begin{equation}
\phi=\brack{i\Theta_x,\Theta}\in\Rnum,
\quad
i\mb{\Phi}=\Theta\wedgebar\Theta_x \in\mk{u}(N),
\quad
\tr\mb{\Phi} = -2\phi . 
\end{equation}
Now consider a gauge transformation 
$\tilde\psi\rightarrow \tilde\psi\g(x) =\psi_\cofra$,
where 
\begin{equation}
\g = \bpm e^{i\Gamma} & 0 \\ 0 & e^{-\frac{1}{N}i\Gamma}\mb{\Gamma} \epm \in H = S(U(1)\times U(N)) \simeq U(N),
\quad
\Gamma\in\Rnum,
\quad
\mb{\Gamma}\in SU(N) . 
\end{equation}
This transformation yields
\begin{align}
&\begin{aligned}
(\psi_\cofra\inv \partial_x\psi_\cofra)_\msp 
& = 
\theta_x\, \lrep e^{i(1+\frac{1}{N})\Gamma}\Theta\mb{\Gamma} \rrep 
+\sin(\theta) \lrep e^{i(1+\frac{1}{N})\Gamma}\Theta_x\mb{\Gamma} \rrep 
\\&\qquad
+ \sin(\theta)(1-\cos(\theta)) \phi\,\lrep ie^{i(1+\frac{1}{N})\Gamma}\Theta\mb{\Gamma} \rrep , 
\end{aligned}
\\
&\begin{aligned}
(\psi_\cofra\inv \partial_x\psi_\cofra)_\hsp 
& = 
(1-\cos(\theta)) \lrep i\Ad(\mb{\Gamma}\inv)\mb{\Phi} \rrep 
+ (1-\cos(\theta))^2 \phi\,\lrep i\Ad(\mb{\Gamma}\inv)\bar\Theta^\t\Theta \rrep 
\\&\qquad
+ \lrep \tfrac{-1}{N}i\Gamma_x I_N + \mb{\Gamma}\inv\mb{\Gamma}_x \rrep . 
\end{aligned}
\end{align}
Then the gauge condition $(\psi_\cofra^{-1}\partial_x\psi_\cofra)_\hsp =0$ gives 
a matrix differential equation. 
Its trace and trace-free parts yield the system 
\begin{equation}\label{CP.gauge}
\begin{gathered}
0 = \Gamma_x +\sin^2(\theta) \phi , 
\\
\mb{0} = 
\mb{\Gamma}_x 
+ (1-\cos(\theta)) i( \mb{\Phi} + (1-\cos(\theta)) \phi \bar\Theta^\t\Theta )\mb{\Gamma} , 
\end{gathered}
\end{equation}
for $\Gamma(x)\in\Rnum$, $\mb{\Gamma}(x)\in SU(N)$. 
Clearly, this system of differential equations \eqref{CP.gauge} 
will have a unique solution (locally in $x$) 
with any given initial data $\Gamma(x_0)\in\Rnum$, $\mb{\Gamma}(x_0)\in SU(N)$
specified at an arbitrary $x=x_0$. 
As a result, the following expression is obtained for $\q$. 

\begin{prop}
The Hasimoto variable $\q\in \msp\simeq \Cnum^N$
for $\Cnum P^N \simeq \frac{SU(N+1)}{S(U(1)\times U(N))}=\frac{SU(N+1)}{U(N)}$ 
is given by the matrix 
\begin{equation}\label{CP.q}
\begin{aligned}
\q = \psi_\cofra\inv \partial_x\psi_\cofra 
& = \theta_x\, \lrep e^{i(1+\frac{1}{N})\Gamma}\Theta\mb{\Gamma} \rrep 
+\sin(\theta) \lrep e^{i(1+\frac{1}{N})\Gamma}\Theta_x\mb{\Gamma} \rrep 
\\&\qquad
+ \sin(\theta)(1-\cos(\theta)) \phi\,\lrep ie^{i(1+\frac{1}{N})\Gamma}\Theta\mb{\Gamma} \rrep , 
\end{aligned}
\end{equation}
which is determined uniquely in terms of $\theta(x)$ and $\Theta(x)$
through the differential equations \eqref{CP.gauge}, 
up to a rigid ($x$-independent) gauge transformation 
$\q\rightarrow \Ad(U(N))\q$. 
\end{prop}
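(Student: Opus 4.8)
The plan is to compute $\psi_\cofra\inv\partial_x\psi_\cofra$ directly in the fundamental matrix representation, starting from the explicit form \eqref{CP.expm} of $\tilde\psi=\exp(\msp)$ already derived, and then applying the gauge transformation $\g(x)\in H$ that enforces the parallel-frame condition $(\psi_\cofra\inv\partial_x\psi_\cofra)_\hsp=0$. First I would fix $\psi_\cofra=\tilde\psi$ as a preliminary gauge and carry out the matrix product $\tilde\psi\inv\partial_x\tilde\psi$ using \eqref{CP.expm}; the key simplification is to write $\a=|\a|\hat\a=\theta\Theta$ with $\brack{\Theta,\Theta}=1$, so that $|\a|=\theta$ and $\hat\a=\Theta$, and then to organize the computation into its $\msp$-part and $\hsp$-part via the block decomposition \eqref{CP.g}--\eqref{CP.m}. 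Differentiating the sines and cosines of $\theta$ together with the vector $\Theta$, and using $\partial_x(\bar\Theta^\t\Theta)$ together with $\brack{\Theta,\Theta}=1\Rightarrow\brack{\Theta_x,\Theta}+\brack{\Theta,\Theta_x}=0$ (so that $\re\brack{\Theta_x,\Theta}=0$ and only the imaginary part $\phi=\brack{i\Theta_x,\Theta}$ survives), yields the stated formulas for $(\tilde\psi\inv\partial_x\tilde\psi)_\msp$ and $(\tilde\psi\inv\partial_x\tilde\psi)_\hsp$, with $i\mb{\Phi}=\Theta\wedgebar\Theta_x$ and $\tr\mb{\Phi}=-2\phi$.

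Next I would introduce the gauge element $\g=\diag(e^{i\Gamma},e^{-\frac{1}{N}i\Gamma}\mb{\Gamma})\in S(U(1)\times U(N))\simeq U(N)$ and compute how $\psi_\cofra\inv\partial_x\psi_\cofra$ transforms under $\tilde\psi\mapsto\tilde\psi\g$, using the standard identity $(\psi_\cofra\g)\inv\partial_x(\psi_\cofra\g)=\g\inv(\psi_\cofra\inv\partial_x\psi_\cofra)\g+\g\inv\partial_x\g$. The $\msp$-component transforms by $\Ad(\g\inv)$, which just multiplies each $\Cnum^N$-vector term by the scalar phase $e^{i(1+\frac1N)\Gamma}$ and acts by $\mb{\Gamma}$ on the right, producing \eqref{CP.q}; the $\hsp$-component picks up the extra inhomogeneous piece $\g\inv\partial_x\g=\diag(i\Gamma_x,\,-\frac1N i\Gamma_x I_N+\mb{\Gamma}\inv\mb{\Gamma}_x)$. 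Setting the $\hsp$-component to zero and splitting into trace and trace-free parts then gives exactly the differential system \eqref{CP.gauge} for $(\Gamma,\mb{\Gamma})$. Finally I would invoke the standard existence-uniqueness theorem for ODEs: the system \eqref{CP.gauge} is a linear (in $\mb{\Gamma}$) first-order system with smooth coefficients built from $\theta(x),\Theta(x)$ and their $x$-derivatives, so for any initial data $\Gamma(x_0)\in\Rnum$, $\mb{\Gamma}(x_0)\in SU(N)$ it has a unique local solution, and the residual freedom corresponds precisely to the rigid gauge action $\q\mapsto\Ad(U(N))\q$ from \eqref{equivgroup.action}.

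The main obstacle I anticipate is bookkeeping rather than conceptual: correctly tracking the blocks in the product $\tilde\psi\inv\partial_x\tilde\psi$ given the three-term structure of \eqref{CP.expm}, and in particular verifying that all terms proportional to $\re\brack{\Theta_x,\Theta}$ cancel so that the answer depends on $\Theta_x$ only through $\phi=\brack{i\Theta_x,\Theta}$ and the skew-Hermitian combination $\Theta\wedgebar\Theta_x$. A secondary subtlety is confirming the trace identity $\tr\mb{\Phi}=-2\phi$ and ensuring the trace/trace-free split of the $\hsp$-equation is consistent — i.e. that the trace part \eqref{CP.gauge} determines $\Gamma$ and the trace-free part determines $\mb{\Gamma}$ without over- or under-determination, which follows because $\lrep i\mb{\Phi}\rrep$ and $\lrep i\bar\Theta^\t\Theta\rrep$ have the block form dictated by $\hsp=\mk{s}(\mk{u}(1)\oplus\mk{u}(N))$. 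Once these cancellations are checked, the remaining assertions are immediate from the soldering identification \eqref{map.rels}--\eqref{q.w.mapping} and Theorem~\ref{thm:Gparallel.frame}.
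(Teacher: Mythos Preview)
Your proposal is correct and follows essentially the same route as the paper: compute $\tilde\psi^{-1}\partial_x\tilde\psi$ from \eqref{CP.expm} with $\a=\theta\Theta$, split into $\msp$- and $\hsp$-parts, apply the $H$-gauge transformation $\g=\diag(e^{i\Gamma},e^{-\frac{1}{N}i\Gamma}\mb{\Gamma})$ via the identity $(\tilde\psi\g)^{-1}\partial_x(\tilde\psi\g)=\g^{-1}(\tilde\psi^{-1}\partial_x\tilde\psi)\g+\g^{-1}\partial_x\g$, and then impose the parallel-frame condition to obtain \eqref{CP.gauge} together with the ODE existence/uniqueness argument for the residual rigid $\Ad(U(N))$ freedom. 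The anticipated bookkeeping issues you flag --- the cancellation of $\re\brack{\Theta_x,\Theta}$, the trace identity $\tr\mb{\Phi}=-2\phi$, and the consistency of the trace/trace-free split --- are exactly the points the paper's computation turns on.
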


\end{document}